\pgfplotsset{compat=newest} 
\pgfplotsset{plot coordinates/math parser=false} 
\newcommand{\defeq}{\coloneqq}
\newcommand{\eqdef}{\eqqcolon}
\definecolor{QuasiBlue}{rgb}{0.03,0.3,0.72} 
\definecolor{PortlandGreen}{RGB}{99,166,63} 
\definecolor{ReadableGreen}{RGB}{0,0,0}
\newcommand{\basi}{\textcolor{black}}
\definecolor{OrangeRed}{RGB}{255,69,0} 
\definecolor{QuasiBlue}{rgb}{0.03,0.3,0.72}
\newcommand{\GN}{\mc{G}_\N}
\newcommand{\GNS}{(\GN)_{\N=1}^\infty}
\definecolor{Granata}{rgb}{0.64,0,0} 
\newcommand{\di}{d}
\newcommand{\N}{M}
  \newenvironment{sma}
  {\left[\begin{smallmatrix}}
  {\end{smallmatrix}\right]}
\newtheorem{definition}{Definition}
\newtheorem{theorem}{Theorem}
\newtheorem{corollary}{Corollary}
\newtheorem{lemma}{Lemma}
\newtheorem{remark}{Remark}
\newtheorem{proposition}{Proposition}
\newtheorem{assumption}{Assumption}
\newcommand{\smallblacksquare}{\mbox{\rule[0pt]{1.3ex}{1.3ex}}}
\newcommand{\eval}{_{|{z=\sigma(x)}}}
\newcommand{\changednames}[1]{#1}
\DeclareSymbolFont{bbold}{U}{bbold}{m}{n}
\DeclareSymbolFontAlphabet{\mathbbold}{bbold}
\newcommand{\vect}[1]{\mathbbold{#1}}
\newcommand{\zeros}[1][]{\vect{0}_{#1}}
\newcommand{\ones}[1][]{\vect{1}_{#1}}
\newcommand{\onestight}[1][]{\vect{1}_{\! #1}}
\newcommand{\Dx}{R}
\newcommand{\emm}{_{_M}}
\newcommand\zero{^0}
\newcommand\mar[1]{\textcolor{black}{#1}}
\newcommand\fraa[1]{\textcolor{black}{#1}}
\newcommand\mc[1]{\mathcal{#1}}
\newcommand\mb[1]{\mathbb{#1}}
\newcommand\R{\mathbb{R}}
\renewcommand\i{^i}
\renewcommand\j{^j}
\newcommand\mi{^{-i}}
\newcommand\WE{_\textup{W}}
\newcommand{\VWE}[1]{\bar{#1}_\textup{W}}
\newcommand\NE{_\textup{N}}
\newcommand{\VNE}[1]{\bar{#1}_\textup{N}}
\newcommand{\argmin}[1]{\underset{#1}{\operatorname{argmin}}\,}
\newcommand{\minn}[1]{\underset{#1}{\operatorname{min}}\,}
\newcommand{\maxx}[1]{\underset{#1}{\operatorname{max}}\,}
\newcommand\myqed{{\hspace*{\fill}~\QEDopen\endtrivlist\unskip}}
\newcommand\Wardrop{Wardrop }
\title{\LARGE \bf
Nash and \Wardrop equilibria \\ in aggregative games with coupling constraints
}
\author{\changednames{Dario Paccagnan$^*$, Basilio Gentile$^*$, Francesca Parise$^*$}, Maryam Kamgarpour and John Lygeros
\thanks{$^*$These authors contributed equally to this work.
This work was supported by the European Commission project DYMASOS (FP7-ICT 611281), by the Swiss Competence Centers for Energy Research FEEB\&D, by the ERC Starting Grant CONENE and by the SNSF grant number P2EZP2\_168812.
\changednames{D. Paccagnan, B. Gentile,} M. Kamgarpour and J. Lygeros are with the Automatic Control Laboratory, ETH Z\"{u}rich, Switzerland. {\tt\footnotesize\{dariop,gentileb,mkamgar,lygeros\}@control.ee.ethz.ch.}
F. Parise is with the Laboratory for Information and Decision Systems, MIT, Cambridge, MA, USA. {\tt\footnotesize parisef@mit.edu}
The authors thank Marius Schmitt for fruitful discussion on the traffic model of Section~\ref{sec:traffic} and Ricardo Campos for helping with Lemma~\ref{lem:min_eigenval}.
}}
\begin{document}
\setstcolor{PortlandGreen}
\setul{}{1.5pt}

\maketitle
\thispagestyle{plain}
\pagestyle{plain}

\begin{abstract}
We consider the framework of aggregative games, in which the cost function of each agent depends on his own strategy and on the average population strategy. As first contribution, we investigate the relations between the concepts of Nash and Wardrop equilibria. By exploiting a characterization of the two equilibria as solutions of variational inequalities, we bound their distance with a decreasing function of the population size. As second contribution, we propose two decentralized algorithms that converge to such equilibria and are capable of coping  with constraints coupling the strategies of different agents. Finally, we study the applications of charging of electric vehicles and of route choice on a road network.
\end{abstract}

\vspace*{-0.4cm}

%
\section{Introduction}
\label{sec:intro}
\lettrine{C}{omplex} systems resulting from the interconnection of selfish agents have attracted an increasing interest in the scientific community over the last decade
for their ubiquitous appearance in real-life applications and the inherent mathematical challenges that they present.
Among the vast literature of non-cooperative game theory, \textit{aggregative games}~\cite{jensen2010aggregative} describe systems where each agent is not subject to a one-to-one interaction, but is rather influenced by an aggregate quantity depending on the strategies of the entire population.
The vast spectrum of their applications ranges from traffic~\cite{smith1979existence} or transmission networks \cite{alpcan2002cdma} to electricity \cite{chen2014autonomous} or commodity markets \cite{cournot1838recherches}.
Extending our preliminary work~\cite{paccagnan2016distributed}, we focus on aggregative games where the aggregate quantity is the average population strategy.  Specifically, we address three aspects which are discussed in detail in the next subsections.
\vspace*{-0.2cm}
\subsection*{Nash and Wardrop equilibria}
\label{subsec:intro_Nash_Wardrop}
A fundamental concept in game theory is the notion of \textit{Nash equilibrium}, which is a set of strategies where no agent can lower his cost by unilaterally altering his strategy. Note that in aggregative games an agent can indirectly influence his cost through his contribution to the average strategy. However, when the population becomes large, such contribution becomes negligible. This consideration motivates the introduction of the \textit{Wardrop equilibrium}, which describes a configuration where no agent can lower his cost by altering his strategy, under the assumption that he has no influence on the average.
While the notion of Nash equilibrium has been applied to a large class of problems (see e.g.~\cite{cournot1838recherches,von2007theory} in economics and \cite{alpcan2002cdma} in communication networks), the concept of Wardrop equilibrium is typically formulated in the settings of congestion games or routing problems (see e.g.~\cite{haurie1985relationship} in network congestion games, \cite{wardrop1952road} in road networks, \cite{ma2013decentralized} in electricity markets, \cite{Dafermos87} in economics).
The overarching goal of the first part of this manuscript is to extend the concept of Wardrop equilibrium to generic aggregative games, and to highlight the fundamental connections between Nash and Wardrop equilibria within this setting.
More in details, we leverage on the theory of variational inequality~\cite{facchinei2007finite,Harker91} to
\begin{itemize}
\item[-] present a unifying framework to characterize both Nash and Wardrop equilibria for generic aggregative games, 
\item[-] sharpen the intuition that in large aggregative games Nash and Wardrop strategies are close by bounding their Euclidean distance with a decreasing  function of the population size.
\end{itemize}
We note that the  relation between  Nash and Wardrop equilibria has been extensively studied in the literature, see e.g.,\cite{grammatico:parise:colombino:lygeros:14,altman2004equilibrium,altman2006survey,altman2011routing,haurie1985relationship,Dafermos87} and references therein. We provide a detailed comparison in Section \ref{sec:comparison}, where we show that our contribution significantly differs from the works above.
Our results require the strategy sets of the agents to be uniformly bounded, thus excluding unlimited growth in one or more components of the agents' state space. This assumption is justified by real world applications such as charging of electric vehicles or traffic coordination, as detailed in Sections \ref{sec:traffic}, \ref{sec:PEVs}. Therein the charging requirement of each vehicle or its travel demand are bounded and independent from the rest of the population.

We further note our work proceeds in a similar spirit as in the theory of mean-field games~\cite{lasry2007mean,huang2007large}. Indeed,  both in aggregative and in mean-field games the agents are influenced only by the aggregate population behavior. Consequently, the contribution of a single agent  to  the cost of the other agents  becomes negligible as the population size increases. There are however some important differences between these two classes of games, so that neither is a subset of the other. Specifically, mean-field games are dynamic stochastic games, while our setup is deterministic and static.\footnote{We note that dynamic games over finite horizon can be reformulated in terms of multi-dimensional static games.
This allows us to consider heterogeneous agents with personalized individual and coupling constraints, which cannot be handled in the mean-field game setup. As a consequence, the results typically derived in mean-field games cannot be applied in our setup. 
We further note that these works do not investigate the Euclidean distance between the equilibrium strategies.}

\vspace*{-0.2cm}
\subsection*{Decentralized algorithms and coupling constraints}
\label{subsec:intro_dec_coupl}
The second part of the paper focuses on coordinating  the agents to a Nash or a Wardrop equilibrium for populations of any size \mar{(not necessarily large)}, in the presence of  constraints coupling the agents' strategies. As discussed in the seminal work \cite{rosen1965existence}, when the agents are subject to a coupling constraint, one in general should expect a manifold of equilibria. Here we focus on the specific subclass of  \textit{variational equilibria} \cite{facchinei2007generalized}, which intuitively corresponds to an equal split of the coupling constraint burden among the agents (see Section \ref{sec:norm}).  

Contrary to the classic game theoretical  literature  on coupling constraints (see e.g. \cite{rosen1965existence,Harker91} and references therein), we focus here on deriving  equilibrium coordination algorithms that can be implemented in a \textit{decentralized} fashion. This new requirement is motivated by  reasons of privacy as well as computational intractability of centralized solutions in large scale systems. Specifically, we assume that each agent only knows its own cost function, its individual constraints and its contribution to the coupling constraint. Coordination is achieved by iterative communications with a central coordinator, that can gather and broadcast signals to the population.
Following the recent literature on decentralized coordination for games without coupling constraints, we consider two different scenarios based on whether the agents respond to the common signal by solving a minimization problem (\textit{optimal response}) as in \cite{grammatico:parise:colombino:lygeros:14,parise2015network} or by taking a \textit{gradient step} as in \cite{dario2015aggregative,koshal2012gossip}. Differently from all the aforementioned works, we however  consider constraints coupling the agents' decisions. Specifically, building upon~\cite{facchinei2007generalized}, we contribute as follows:
\begin{itemize}
\item[-] we propose a decentralized two-level algorithm based on optimal response,
which integrates the scheme proposed in~\cite{grammatico:parise:colombino:lygeros:14} with an outer loop that updates a dual variable to achieve a Wardrop equilibrium; 
\item[-] we propose a decentralized one-level asymmetric projection algorithm based on gradient step to achieve either a Nash or a Wardrop equilibrium.
\end{itemize}

While coupling constraints are of fundamental importance in  technical applications, such as electricity markets~\cite{kar2012distributed}, or communication networks~\cite{pan2009games}, we are not aware of previous decentralized coordination schemes that take them into account within the literature of aggregative games. Distributed algorithms for generic games with coupling constraints have been recently suggested in \cite{frazzoli,yin2011nash,yi2017distributed}. In the context of aggregative games these algorithms  however require bilateral communications among all the agents, thus limiting their applicability  in large population games. The  algorithms  for the  case without coupling constraints build on the core assumption that the strategy sets are decoupled and thus cannot be easily adapted to handle coupling constraints. We overcome these difficulties by introducing a dual variable associated with the coupling constraint, which is broadcasted by the central operator, so that each agent reacts to an extended cost function (with an additional price to pay when the coupling constraint is violated) but has decoupled strategy sets. We guarantee that at convergence we reach not only an equilibrium satisfying the coupling constraints of such extended game, as in \cite{Grammatico2016Arxiv}, but indeed a (generalized) equilibrium of the original game.

Outside the game theoretical framework, our algorithms connect with those in~\cite{koshal2011multiuser} for multi-user optimization, where however the agents do not \mbox{influence the cost of the others.}
\subsection*{Applications}
\subsubsection*{Charging of Electric Vehicles}
Electric-vehicles (EV) are foreseen to significantly penetrate the market in the coming years~\cite{nemry2010plug}, therefore coordinating their charging schedules can provide services beneficial to the grid operations~\cite{gan2013optimal}. By assuming that the electricity price depends on the aggregate consumption,~\cite{ma2013decentralized,grammatico:parise:colombino:lygeros:14,dario2015aggregative} formulate the EV charging problem as an aggregative game and propose decentralized schemes based on optimal response or gradient step, in the absence of coupling constraints. The proposed schemes steer the population to Nash \cite{dario2015aggregative} or Wardrop \cite{ma2013decentralized,grammatico:parise:colombino:lygeros:14} equilibria.
We extend the existing literature by introducing constraints coupling the agents' charging profiles. Such constraints model limits on the aggregate peak consumption or on the local consumption of EVs connected to the same transformer. We exploit our theoretical findings to derive results specific to the EV game. Finally, we establish uniqueness of the dual variables associated to the violation of the coupling constraints.
\subsubsection*{Route choice on a road network}

Traffic congestion is a well-recognized issue in densely populated cities, and the corresponding economic costs are significant~\cite{arnott1994economics}. Since every driver seeks his own interest (e.g., minimizing the travel time) and is affected by the others' choices via congestion, a classic approach is to model the traffic problem as a game~\cite{dafermos1980traffic}. Specializing~\cite[Section 1.4.5]{facchinei2007finite}, we focus on a stationary model that aims at capturing the basic interactions among the vehicles flow during rush hours. Building upon our theoretical findings, we derive results specific for the route choice game. Moreover, we perform a realistic numerical analysis based on the data set of the city of Oldenburg in Germany~\cite{OldenburgDataset}. Specifically, we investigate via simulation the effect of road access limitations, expressed as coupling constraints~\cite{sandmo1975pigovian}.
\\[2mm]
\textit{Organization:} 
Sections~\ref{sec:game} and \ref{sec:connection_VI} introduce game and preliminary results. Sections~\ref{sec:Wardrop_Nash} and \ref{sec:algorithms} present our main contributions, namely the bound on the distance between Nash and Wardrop equilibria and the design of decentralized algorithms to achieve them. Sections~\ref{sec:PEVs} and~\ref{sec:traffic} focus on the applications.

\textit{Notation:} 
\fontdimen2\font=3pt
$\ones[n] \in \R^{n}$ and $\zeros[n] \in \R^{n}$ represent the vectors of unit entries and zero entries, respectively; $e_i$ is the $i^\text{th}$ canonical vector.
Given $A\in\mathbb{R}^{n\times n}$, $A\succ0$ ($\succeq0$) $\Leftrightarrow$ $x^\top A x>0~(\ge0),$ $\forall x\neq 0$; $\| A \|$ is the induced 2-norm of $A$.
Given $\N$ vectors each in $\R^{n}$, $[x^1;\ldots;x^\N] \defeq [x^i]_{i=1}^\N\defeq[{x^1}^\top,\ldots ,{x^\N}^\top]^\top \in \R^{\N n}$ and $x^{-i}\coloneqq[x_1;\dots;x_{i-1};x_{i+1};\dots;x_\N]\in \R^{(\N-1)n}$.
Given a matrix $A\in\R^{m\times \N n}$, $A_{(:,i)} \in \R^{m \times n}$ is such that $A=[A_{(:,1)},\ldots,A_{(:,\N)}]$.
Given $g(x):\mathbb{R}^n \rightarrow \mathbb{R}^m$ we define $\nabla_x g(x) \in \mathbb{R}^{n\times m}$ with $[\nabla_x g(x)]_{i,j}\coloneqq \frac{\partial g_j(x)}{\partial x\i}$.
Given $g(x):\mathbb{R} \rightarrow \mathbb{R}$, we denote $g'(x) = \frac{\partial g(x)}{\partial x}$.
Given the sets $\mathcal{X}^1,\dots, \mathcal{X}^{\N} \subseteq \R^n$, we denote $\frac{1}{\N} \sum_{i=1}^\N \mc{X}^i \defeq \{z \in \R^n \vert z = \frac{1}{\N} \sum_{i=1}^{\N} x\i, \text{for some} \: x\i \! \in \! \mc{X}\i \}$.
\section{Problem formulation}
\label{sec:game}

We consider a population of $\N$ agents.
Each agent can choose his strategy $x\i$ in his individual constraint set $\mc{X}\i\subset\mb{R}^n$. We assume that the cost function 
\begin{equation}
J\i(x\i,\sigma(x))
\label{eq:costs_generic}
\end{equation}
of agent $i$ depends on his own strategy $x\i \in \mc{X}\i$ and on the strategies of the other agents via the average population strategy $\sigma(x)\coloneqq\frac{1}{\N}\sum_{j=1}^{\N}x\j \in\frac{1}{\N} \sum_{j=1}^\N \mc{X}\j$, as typical of aggregative games~\cite{jensen2010aggregative}. Besides the individual constraints, each agent has to satisfy a coupling constraint, which involves the decision variables of other agents. Upon defining $x=[x^1;\ldots;x^\N] \in \R^{Mn}$, the coupling constraint can be expressed as
\begin{equation}
x\in \mc{C} \defeq \{x\in\mb{R}^{\N n}\,\vert\,g(x)\le \zeros[m]\} \subset\mb{R}^{\N n},
\label{eq:coupling_constraints_general}
\end{equation}
with $g: \R^{\N n} \to \R^m$. The coupling constraint in~\eqref{eq:coupling_constraints_general} can model for instance the fact that the overall usage level for a certain commodity cannot exceed a fixed capacity. The cost and constraints just introduced give rise to the game
\begin{equation}
 \mc{G} \defeq \left\{ 
 \begin{aligned}
&\! \textup{ agents}:  \; && \{1,\dots,M\}\\
&\! \textup{ cost of agent } i:\quad &&J^i(x\i,\sigma(x)) \\
&\! \textup{ individual constraint} : &&\mc{X}^i\\
&\! \textup{ coupling constraint} :  &&\mc{C},
\end{aligned}\right. 
\label{eq:GNEP}
\end{equation}
which is the focus of the rest of the paper. We denote for convenience $\mc{X} \defeq \mc{X}^1\times\ldots\times\mc{X}^\N$ and define
\begin{equation}\label{eq:Q}
\mc{Q}\i(x\mi)\coloneqq\{x\i\in\mc{X}\i\, \vert \, g(x) \le \zeros[m]\}, \quad \quad \mc{Q} \defeq \mc{X}\cap\mc{C}.
\end{equation}
\subsection{Equilibrium definitions}
We consider two notions of equilibrium for the game $\mc{G}$ in~\eqref{eq:GNEP}. The first  is a known generalization of the concept of Nash equilibrium to games with coupling constraints~\cite{facchinei2007generalized}.
\begin{definition}[Nash Equilibrium]\label{def:NE}
A set of strategies $x\NE = [x^1\NE; \dots; x^\N\NE] \in \R^{\N n}$ is an $\varepsilon$-Nash equilibrium of the game $\mathcal{G}$  if $x\NE\in\mc{Q}$ and for all $ i\in\{1,\dots,\N\}$ and all $ x\i \! \in \! \mc{Q}\i(x\mi\NE)$ 
\begin{align}
 J\i(x\i\NE,\sigma(x\NE)) \! \le \textstyle \! J\i\left( \! x\i,\frac 1\N x\i \! + \! \frac 1\N \sum_{j \neq i} x\j\NE \right) + \varepsilon\,. 
\label{eq:def_GNE}
\end{align}
If~\eqref{eq:def_GNE} holds with $\varepsilon = 0$ then  $x\NE$ is a Nash equilibrium. 
\hfill $\square$
\end{definition}
\noindent Intuitively, a feasible set of strategies $\left\{ x\i\NE \right\}_{i=1}^\N$ is a Nash equilibrium if no agent can improve his cost by unilaterally deviating from his strategy, assuming that the strategies of the other agents are fixed. A Nash equilibrium for a game with coupling constraints is usually referred to as generalized Nash equilibrium~\cite{facchinei2007generalized}; in this paper we omit the word generalized, even though we consider a game with coupling constraints. 

Note that on the right-hand side of~\eqref{eq:def_GNE} the decision variable $x^i$ appears in both arguments of $J^i(\cdot,\cdot)$. However, as the population size grows the contribution of agent $i$ to $\sigma(x)$ decreases. This motivates the definition of Wardrop equilibrium.
\begin{definition}[Wardrop Equilibrium]\label{def:WE}
A set of strategies $x\WE = [x^1\WE; \dots; x^\N\WE] \in \R^{\N n}$ is a Wardrop equilibrium of the game $\mathcal{G}$  if $x\WE\in\mc{Q}$ and for all $i\in\{1,\dots,\N\}$ and all $x\i \! \in \! \mc{Q}\i(x\mi\WE)$
\begin{equation}
 \hspace{0.7cm}J\i(x\i\WE,\sigma(x\WE)) \le J\i ( x\i,\sigma(x\WE) ). \tag*{\qed} 
\end{equation}
\end{definition}
\noindent 
Intuitively, a feasible set of strategies $\left\{ x\i\WE \right\}_{i=1}^{\N}$ is a Wardrop equilibrium if no agent can improve his cost by unilaterally deviating from his strategy, assuming that the average strategy is fixed.
Even though the Wardrop equilibrium is a classical concept, the existing literature on aggregative games \cite{altman2002nash, altman2004equilibrium, marcotte1995convergence, Dafermos87} defines the latter equilibrium in terms of $\sigma(x)$, whereas Definition \ref{def:WE} is expressed in terms of the agents' strategies $x$. The first glimmer of Wardrop equilibrium in terms of strategies appears in \cite{ma2013decentralized,grammatico:parise:colombino:lygeros:14}, where however it is not recognized as an equilibrium concept on its own, but rather only identified as an $\varepsilon$-Nash.
\section{Connection with variational inequalities}
\label{sec:connection_VI} 
This section shows that some equilibria of the game $\mathcal{G}$ in~\eqref{eq:GNEP} can be obtained by solving a variational inequality. This fact is then used to derive the results of Sections~\ref{sec:Wardrop_Nash} and~\ref{sec:algorithms}.
\begin{definition}[Variational inequality~\cite{facchinei2007finite}]
\label{def:vi}
Consider a set $\mathcal{K}\subseteq \R^\di$ and an operator $F:\mc{K} \rightarrow \R^\di$. A point $\bar x\in\mathcal{K}$ is a solution of the variational inequality $\textup{VI}(\mathcal{K},F)$ if 
\begin{equation}
F(\bar x)^\top (x-\bar x)\ge 0, \quad \forall x\in\mathcal{K}. \tag*{\qed}
\end{equation}
\end{definition} 
\noindent Let us define 
\begin{subequations}
\label{eq:F}
\begin{align}
\label{eq:F_N}
F\NE(x)&\defeq[ \nabla_{x\i} J^i(x\i,\sigma(x)) ]_{i=1}^\N\,,\\
F\WE(x)&\defeq [ \nabla_{x\i} J^i(x\i,z)_{|{z=\sigma(x)}} ]_{i=1}^\N \,, \label{eq:F_W}
\end{align}
\end{subequations}
where $F\NE,~F\WE : \mc{X} \rightarrow \mathbb{R}^{\N n}.$
The operator $F\NE$ is obtained by stacking together the gradients of each agent's cost with respect to his decision variable. $F\WE$ is obtained similarly, but considering $\sigma(x)$ as fixed when differentiating. The following proposition provides a sufficient characterization of the equilibria described in Definitions~\ref{def:NE} and \ref{def:WE} as solutions of two variational inequalities, which feature the same set $\mathcal{Q}$, defined in~\eqref{eq:Q}, but different operators, namely $F\NE$ and $F\WE$ in~\eqref{eq:F}. We note that this characterization of equilibria is equivalent to the one in terms of fixed point of the best response mappings, since any variational inequality can be equivalently characterized as a fixed point problem, as explained in \cite[paragraph 12.1.1]{facchinei2007finite}.
\begin{assumption}
\label{A1}
For all $i\in\{1,\dots,\N\}$, the constraint set $\mathcal{X}\i$ is closed and convex.
The set $\mc{Q}$ in \eqref{eq:Q} is non-empty.
The cost functions $J\i(x\i,\sigma(x))$ are convex in $x\i$ for any fixed $\{x^j\in\mc{X}^{j}\}_{j\neq i}$.
The cost functions $J^i(x\i,z)$ are convex in $x\i$ for any $z\in \frac{1}{\N} \sum_{j=1}^\N \mc{X}^j$.
The cost functions $J^i(z_1,z_2)$ are continuously differentiable in $[z_1;z_2]$ for any $z_1 \in \mathcal{X}^i$ and $z_2\in \frac{1}{\N} \sum_{j=1}^\N \mc{X}^j$.
The function $g$ in~\eqref{eq:coupling_constraints_general} is convex.
\myqed
\end{assumption}
\begin{proposition}\label{prop:vi_ref}
Under Assumption~\ref{A1}, the following hold.
\begin{enumerate}
\item Any solution $\VNE{x}$ of VI$(\mathcal{Q},F\NE)$ is a Nash equilibrium of the game $\mc{G}$ in~\eqref{eq:GNEP};
\item Any solution  $\VWE{x}$ of VI$(\mathcal{Q},F\WE)$  is a Wardrop equilibrium of the game $\mc{G}$ in~\eqref{eq:GNEP}. \myqed
\end{enumerate}
\end{proposition}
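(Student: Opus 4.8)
The plan is to read off the equilibrium inequalities of Definitions~\ref{def:NE} and~\ref{def:WE} from the single defining inequality of the variational inequality, by restricting the test point to deviations of one agent at a time, and then to convert the resulting per-agent first-order conditions into the desired cost comparisons through convexity. I would treat the two statements in parallel, the only genuine difference being a chain-rule bookkeeping that distinguishes $F\NE$ from $F\WE$.

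First I would fix a solution, say $\VNE{x}$ of $\textup{VI}(\mc{Q},F\NE)$, pick an arbitrary agent $i$, and pick an arbitrary feasible unilateral deviation $x\i\in\mc{Q}\i(\VNE{x}\mi)$. By the definition of $\mc{Q}\i$ in~\eqref{eq:Q}, the stacked vector that agrees with $\VNE{x}$ in every block except the $i$-th, where it equals $x\i$, again lies in $\mc{Q}$, hence is an admissible test point. Inserting it into the variational inequality cancels all blocks but the $i$-th, so the global condition collapses to the scalar inequality $[\nabla_{x\i}J^i(\VNE{x}\i,\sigma(\VNE{x}))]^\top(x\i-\VNE{x}\i)\ge 0$; only the $i$-th component of $F\NE(\VNE{x})$ survives.

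The second step is to recognise this surviving component as the gradient of the correct one-player objective. For the Nash case I would introduce $\phi_i(x\i)\defeq J^i(x\i,\tfrac1\N x\i+\tfrac1\N\sum_{j\neq i}\VNE{x}\j)$, which is precisely the right-hand side of~\eqref{eq:def_GNE} and satisfies $\phi_i(\VNE{x}\i)=J^i(\VNE{x}\i,\sigma(\VNE{x}))$. A chain-rule computation, using that $\sigma$ depends on $x\i$ with Jacobian $\tfrac1\N I$, shows that $\nabla\phi_i(\VNE{x}\i)$ equals the $i$-th block of $F\NE(\VNE{x})$; this is exactly where the total-derivative convention of~\eqref{eq:F_N} enters. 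For the Wardrop case I would instead freeze the average and set $\psi_i(x\i)\defeq J^i(x\i,\sigma(\VWE{x}))$, whose gradient at $\VWE{x}\i$ matches the $i$-th block of $F\WE(\VWE{x})$ directly from the partial-derivative convention of~\eqref{eq:F_W}.

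Finally, convexity would close the argument. The map $\phi_i$ is convex in $x\i$ because Assumption~\ref{A1} postulates convexity of $J^i(x\i,\sigma(x))$ in $x\i$ for fixed $x\mi$, while $\psi_i$ is convex because the same assumption postulates convexity of $J^i(x\i,z)$ in $x\i$ for fixed $z$; both gradients exist by the continuous differentiability required therein, and both deviation points lie in the convex set $\mc{X}\i$. The first-order characterisation of convex functions then yields $\phi_i(x\i)\ge\phi_i(\VNE{x}\i)+\nabla\phi_i(\VNE{x}\i)^\top(x\i-\VNE{x}\i)\ge\phi_i(\VNE{x}\i)$, where the last inequality is the localised VI of the first step; this is exactly~\eqref{eq:def_GNE} with $\varepsilon=0$. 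The identical manipulation with $\psi_i$ produces the Wardrop inequality, and since $i$ and the deviation $x\i$ were arbitrary, both claims follow. I expect the only delicate point to be the chain-rule identification in the Nash case: one must verify that the self-influence of agent $i$ through $\sigma$ reproduces exactly the total derivative encoded in $F\NE$, so that it is the convexity of $\phi_i$, rather than of the frozen-average cost, that must be invoked.
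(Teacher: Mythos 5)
Your proof is correct, and for statement 2 it is essentially the paper's own argument: you localise the variational inequality to unilateral deviations (the test point obtained by replacing the $i$-th block of $\VNE{x}$ with $x^i \in \mathcal{Q}^i(\VNE{x}^{-i})$ is admissible in $\mathcal{Q}$ by \eqref{eq:Q}), and then pass from the resulting per-agent first-order condition to the cost comparison via convexity; the paper does the same, invoking the minimum principle of \cite[Proposition 3.1]{tsitsiklis1989parallel} where you use the first-order inequality for convex functions on $\mathcal{X}^i$ — the same substance. The one genuine difference is statement 1: the paper does not prove it at all, citing \cite[Theorem 2.1]{facchinei2007generalized_2} instead, whereas you give a self-contained argument. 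Its only extra ingredient is the chain-rule identification of the $i$-th block of $F_{\textup{N}}(\VNE{x})$ with $\nabla \phi_i(\VNE{x}^i)$, where $\phi_i(x^i) = J^i\bigl(x^i, \tfrac{1}{\N}x^i + \tfrac{1}{\N}\sum_{j\neq i}\VNE{x}^j\bigr)$ is the true best-response objective including agent $i$'s self-influence through $\sigma$, and you correctly pair it with the convexity of $J^i(x^i,\sigma(x))$ in $x^i$ for fixed $x^{-i}$ (the first convexity hypothesis of Assumption~\ref{A1}) rather than with the frozen-average convexity used in the Wardrop case. This buys a proof that treats both statements uniformly and makes the proposition independent of the external reference, at the modest cost of the explicit Jacobian bookkeeping for $\sigma$.
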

\begin{proof}
The proof of the first statement can be found in~\cite[Theorem 2.1]{facchinei2007generalized_2}, we prove the second one. We rewrite the operator $F\WE(x)$ as $\tilde F\WE(x,\sigma(x))$, where $\tilde F\WE(x,z)\defeq [ \nabla_{x\i} J^i(x\i,z) ]_{i=1}^\N.$ Fix $ \bar{z}=\sigma(\VWE{x})$. By definition, if $\VWE{x}$ solves VI$(\mathcal{Q},F\WE)$ then $F\WE(\VWE{x} )^\top(x-\VWE{x} )\ge 0$ for all $x\in\mc{Q}$, i.e.
\begin{equation}
\tilde F\WE(\VWE{x}, \bar{z} )^\top(x-\VWE{x} )\ge 0, \ \forall x\in\mc{Q}.
\label{eq:proof_intermediate}
\end{equation}
Consider $i \in \{1,\dots,\N\}$, set $x\mi=\VWE{x}\mi$ in~\eqref{eq:proof_intermediate} and consider an arbitrary $x^{i} \in \mc{Q}^i(\VWE{x}^{-i})$; then all the summands in~\eqref{eq:proof_intermediate} vanish except the $i^{\textup{th}}$ one and~\eqref{eq:proof_intermediate} reads
\begin{equation}
\nabla_{x\i} J\i(\VWE{x}^i,\bar{z})^\top (x\i-\VWE{x}\i)\ge 0, \ \forall \ x\i\in\mc{Q}\i(\VWE{x}\mi).
\label{eq:minimum_principle}
\end{equation}  
Consider the convex function $J\i(\cdot,\bar{z}):\mc{Q}\i(\VWE{x}\mi) \rightarrow \R$. Since $\mc{Q}\i(\VWE{x}\mi)$ is a convex set, by~\eqref{eq:minimum_principle} and~\cite[Proposition 3.1]{tsitsiklis1989parallel} we have that $\VWE{x}^i\in \arg \! \min_{x^i\in\mc{Q}\i(\VWE{x}\mi) } J\i\left(x\i,\bar{z} \right)$.
Substituting $\bar{z}=\sigma(\VWE{x})$, one has $J\i\left(\VWE{x}^i,\sigma(\VWE{x} )\right) \le J\i\left(x\i,\sigma(\VWE{x} )\right)$ for all $x\i\in\mc{Q}\i(\VWE{x}\mi)$. Since this holds for all $i\in \{1,\dots,\N\}$ and since $\VWE{x}\in\mathcal{Q}$, it follows that  $\VWE{x}$ is a Wardrop equilibrium of $\mc{G}$.
\end{proof}
Proposition~\ref{prop:vi_ref} states that a solution of the variational inequality is an equilibrium. The converse in general does not hold due to the presence of the coupling constraints. If on the other hand $\mathcal{C}=\R^{\N n}$, then $\mathcal Q = \mathcal X$ and one can show that $x\NE$ solves the VI$(\mathcal{Q},F\NE)$ if and only if it is a Nash equilibrium of $\mc{G}$ and $x\WE$ solves the VI$(\mathcal{Q},F\WE)$ if and only if it is a Wardrop equilibrium of $\mc{G}$~\cite[Corollary 1]{facchinei2007generalized}.
The equilibria that can be obtained as solution of the corresponding variational inequality are called \textit{variational equilibria}~\cite[Definition 3]{facchinei2007generalized} and are here denoted with $\VNE{x},\VWE{x}$ instead of $x\NE,x\WE$ (indicating any equilibria satisfying Definitions~\ref{def:NE} and~\ref{def:WE}). We next provide sufficient conditions for existence and uniqueness of variational~equilibria.
\begin{definition}[Strong monotonicity~\cite{facchinei2007finite}]
\label{def:SMON}
An operator $F: \mc{K} \subseteq \R^\di \rightarrow \R^\di$ is strongly monotone on the set $\hat{\mc{K}} \subseteq \mc{K}$ with monotonicity constant $\alpha>0$ if \footnote{When we do not specify the set $
\hat{\mc{K}}$ this is understood to be $\mc{K}$, i.e. the domain of the operator. Note that in our setup strong monotonicity is equivalent to strict diagonal convexity in~\cite{rosen1965existence}.}
\begin{equation}
(F(x)-F(y))^\top(x-y) \ge \alpha \|x-y\|^2,\quad \forall x,y \in \hat{\mc{K}}.
\label{eq:def_strongly monotone}
\end{equation}
The operator is monotone on $\hat{\mc{K}}$ if~\eqref{eq:def_strongly monotone} holds for $\alpha=0$.
\hfill{$\square$}
\end{definition}
\begin{lemma}\textup{\cite[Corollary 2.2.5, Theorem 2.3.3]{facchinei2007finite}}
\label{lem:exun}
Let Assumption~\ref{A1} hold. Then
\begin{enumerate}
\item If $\mc{X}^i$ is bounded for all $i \in \{1,\dots,\N\}$, then both $\textup{VI}(\mathcal{Q},F\NE)$ and $\textup{VI}(\mathcal{Q},F\WE)$ admit a solution\footnote{The convexity of the cost functions required by Assumption~\ref{A1} is not needed for the first statement of Lemma~\ref{lem:exun}, continuity is enough.}.
\item If $F\NE$ is strongly monotone on $\mc{Q}$, then $\textup{VI}(\mathcal{Q},F\NE)$ has a unique solution. If $F\WE$ is strongly monotone on $\mc{Q}$ then $\textup{VI}(\mathcal{Q},F\WE)$ has a unique solution. \hfill{$\square$}
\end{enumerate}
\end{lemma}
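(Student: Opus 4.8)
The plan is to treat the two claims separately, relying on the classical existence and uniqueness theory for variational inequalities. Both reduce to verifying the hypotheses of standard theorems once the geometric and analytic properties of $\mc{Q}$, $F\NE$ and $F\WE$ have been established.

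For existence (first statement), I would first argue that $\mc{Q}$ is compact and convex. Under Assumption~\ref{A1} each $\mc{X}\i$ is closed and convex, so the product $\mc{X}=\mc{X}^1\times\dots\times\mc{X}^\N$ is closed and convex; the set $\mc{C}$ in~\eqref{eq:coupling_constraints_general} is a sublevel set of the convex (hence finite-valued and continuous) map $g$ and is therefore closed and convex as well, so that $\mc{Q}=\mc{X}\cap\mc{C}$ is closed and convex. Adding the boundedness of each $\mc{X}\i$, the product $\mc{X}$ is bounded, and since $\mc{Q}\subseteq\mc{X}$ the set $\mc{Q}$ is bounded, hence compact. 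Next I would check that $F\NE$ and $F\WE$ are continuous on $\mc{X}$: by Assumption~\ref{A1} each $J\i(z_1,z_2)$ is continuously differentiable, so $\nabla_{x\i}J\i$ is continuous, and since $x\mapsto\sigma(x)$ is linear, both operators in~\eqref{eq:F} are continuous as compositions and stackings of continuous maps. A continuous operator on a nonempty compact convex set admits a solution of the associated VI (Brouwer's theorem applied to the map $x\mapsto\proj{\mc{Q}}(x-F(x))$), which is exactly~\cite[Corollary 2.2.5]{facchinei2007finite}; this yields a solution for each of the two VIs. As the lemma's footnote observes, convexity of the costs plays no role here: only continuity of $F$ and compactness and convexity of $\mc{Q}$ are used.

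For uniqueness (second statement) I would argue directly, using only the defining inequality of the VI together with strong monotonicity. Suppose $F\defeq F\NE$ (the argument for $F\WE$ is identical) is strongly monotone on $\mc{Q}$ with constant $\alpha>0$, and suppose $\bar x_1,\bar x_2\in\mc{Q}$ are two solutions of VI$(\mc{Q},F)$. Testing the inequality for $\bar x_1$ at the point $x=\bar x_2\in\mc{Q}$ gives $F(\bar x_1)^\top(\bar x_2-\bar x_1)\ge0$, and symmetrically $F(\bar x_2)^\top(\bar x_1-\bar x_2)\ge0$. Summing the two yields
\begin{equation}
(F(\bar x_1)-F(\bar x_2))^\top(\bar x_1-\bar x_2)\le0.
\end{equation}
On the other hand, Definition~\ref{def:SMON} gives $(F(\bar x_1)-F(\bar x_2))^\top(\bar x_1-\bar x_2)\ge\alpha\|\bar x_1-\bar x_2\|^2$. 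Combining the two bounds forces $\alpha\|\bar x_1-\bar x_2\|^2\le0$, and since $\alpha>0$ this requires $\bar x_1=\bar x_2$. This is the content of~\cite[Theorem 2.3.3]{facchinei2007finite}.

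The algebra behind uniqueness is immediate; the step requiring the most care is the existence claim, specifically verifying that $F\WE$ — obtained by freezing $z=\sigma(x)$ before differentiating and only then substituting back $z=\sigma(x)$ — is genuinely continuous in $x$, and confirming that the appeal to the projection/fixed-point argument underlying~\cite[Corollary 2.2.5]{facchinei2007finite} applies on the compact convex set $\mc{Q}$.
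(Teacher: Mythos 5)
The paper never proves this lemma --- it is quoted directly from \cite[Corollary 2.2.5, Theorem 2.3.3]{facchinei2007finite} --- so your proposal has to be measured against those textbook results. Your treatment of statement 1 is correct and is exactly the standard argument behind Corollary 2.2.5: under Assumption~\ref{A1} the set $\mc{Q}$ is nonempty, closed and convex, boundedness of the $\mc{X}\i$ makes it compact, the operators $F\NE$ and $F\WE$ are continuous because the costs are continuously differentiable and $\sigma$ is linear, and Brouwer's theorem applied to $x \mapsto \Pi_{\mc{Q}}(x - F(x))$ produces a fixed point, which is precisely a VI solution. Your remark that convexity of the costs plays no role here also matches the paper's footnote.

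There is, however, a genuine (though repairable) gap in your statement 2. ``Has a unique solution'' --- both in the lemma and in the cited Theorem 2.3.3 --- means that a solution \emph{exists} and is unique, and the paper relies on this reading: in the proof of Theorem~\ref{thm:conv_strategies} the existence of the variational equilibrium $\VNE{x}$ is deduced from Lemma~\ref{lem:exun} using strong monotonicity alone. Your argument only shows that there is \emph{at most} one solution. Since statement 2 does not assume boundedness of the $\mc{X}\i$, you cannot fall back on statement 1 for existence: $\mc{Q}$ may be unbounded, and continuity alone does not guarantee solvability there. The missing ingredient is that strong monotonicity itself forces existence on a closed convex set. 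Concretely, fix any $x_0 \in \mc{Q}$ (nonempty by Assumption~\ref{A1}); strong monotonicity gives
\begin{equation}
F(x)^\top (x - x_0) \ge F(x_0)^\top (x - x_0) + \alpha \| x - x_0 \|^2 \ge \bigl( \alpha \| x - x_0 \| - \| F(x_0) \| \bigr) \| x - x_0 \|,
\end{equation}
so $F(x)^\top (x - x_0)/\|x - x_0\| \to \infty$ as $\|x\| \to \infty$ in $\mc{Q}$, and this coercivity condition yields existence (Facchinei--Pang, Proposition 2.2.3). Equivalently, one can argue by truncation: solve the VI on the compact convex set $\mc{Q} \cap \bar{B}(x_0,R)$ using your statement-1 argument, observe that any such solution $x_R$ satisfies $\alpha \| x_R - x_0 \| \le \| F(x_0) \|$ (test the truncated VI at $x_0$ and use strong monotonicity), and conclude that for $R$ large enough $x_R$ lies in the interior of the ball, hence solves the VI on all of $\mc{Q}$ since for any $y \in \mc{Q}$ the point $x_R + t(y - x_R)$ belongs to the truncated set for small $t > 0$. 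With that addition, your uniqueness algebra correctly completes the proof of statement 2.
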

\subsection{Variational  and normalized equilibria}
\label{sec:norm}
The concept of games with coupling constraints  has first been introduced in the seminal work \cite{rosen1965existence}. Therein the key concept of \textit{normalized equilibria} has been introduced to describe the fact that when the agents are subject to a coupling constraint, even under
strong monotonicity conditions, one should expect a manifold of equilibria. 
Formally, the vector $x\NE$  is a normalized Nash equilibrium if there exists a vector of weights $r\in\R^M_{\ge0}$, with $\sum_{i=1}^M r_i =1$, such that $x\NE$ solves the VI$(\mathcal{Q},F\NE^r )$ where $F\NE^r(x)\defeq[ r_i\nabla_{x\i} J^i(x\i,\sigma(x)) ]_{i=1}^\N$.
It is proven in \cite{rosen1965existence} that the choice of $r$ corresponds to a split of the burden of satisfying the constraints among the agents.
In the context of aggregative games, however, each agent contributes equally to the average. Therefore it is typically assumed that the burden of the constraint should also be split equally among the agents by selecting $r=\frac1 M\ones[M]$, see e.g.,~\cite{facchinei2007generalized,pan2009games,facchinei2007generalized_2}. It is immediate to see that the subclass of normalized equilibria for which this property holds is the class of \textit{variational equilibria} (introduced in the previous section) and is the one on which we focus from here on. Nonetheless we note that our results could be easily extended to normalized equilibria by using operator $F\NE^r$ instead of $F\NE$. Similar arguments hold for the Wardrop~equilibrium.
\subsection{Sufficient conditions for monotonicity}
\noindent
To verify whether an operator is strongly monotone or monotone one can exploit the following equivalent characterizations.
\begin{lemma}
\label{lemma:pd}
\textup{\cite[Proposition 2.3.2]{facchinei2007finite}}
A continuously differentiable operator $F: \mc{K} \subseteq \R^\di \to \R^\di$ is strongly monotone with monotonicity constant $\alpha$ (resp. monotone) if and only if $\nabla_x F(x)\succeq \alpha I$ (resp. $\nabla_x F(x)\succeq 0$) for all $x \in \mc{K}$. Moreover, if $\mc{K}$ is compact then there exists $\alpha>0$ such that $\nabla_x F(x)\succeq \alpha I$ for all $x \in \mc{K}$ if and only if $\nabla_x F(x)\succ 0$ for all $x \in \mc{K}$.
\myqed
\end{lemma}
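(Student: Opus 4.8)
The plan is to establish each equivalence through the two standard first-order tools: integration of the Jacobian along a segment for the ``if'' direction, and a difference-quotient limit for the ``only if'' direction. Throughout I would work with the scalar quadratic form $v^\top \nabla_x F(x)\, v$, which depends only on the symmetric part of $\nabla_x F(x)$ and is therefore insensitive to the transpose convention fixed in the Notation; I also use implicitly that $\mc{K}$ is convex, so that every segment $[x,y]$ lies in $\mc{K}$ where $\nabla_x F$ is defined and bounded below.

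First I would prove that $\nabla_x F(x)\succeq\alpha I$ on $\mc{K}$ implies strong monotonicity. Fix $x,y\in\mc{K}$, set $\gamma(t)=y+t(x-y)$ for $t\in[0,1]$, and apply the fundamental theorem of calculus to $t\mapsto F(\gamma(t))$, which is legitimate since $F$ is continuously differentiable. This yields $F(x)-F(y)=\int_0^1 \nabla_x F(\gamma(t))^\top (x-y)\,dt$, so that $(F(x)-F(y))^\top(x-y)=\int_0^1 (x-y)^\top \nabla_x F(\gamma(t))\,(x-y)\,dt\ge \alpha\|x-y\|^2$, applying the hypothesis pointwise inside the integral. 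The case $\alpha=0$ gives monotonicity verbatim.

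For the converse I would fix $x$ in the interior of $\mc{K}$ and an arbitrary direction $v$, apply Definition~\ref{def:SMON} to the pair $x+tv,\,x$ for small $t>0$, and divide by $t^2$ to obtain $\tfrac1t(F(x+tv)-F(x))^\top v\ge\alpha\|v\|^2$. Letting $t\to0^+$ and using differentiability of $F$, the left-hand side converges to $v^\top\nabla_x F(x)\,v$, which gives $\nabla_x F(x)\succeq\alpha I$; continuity of $\nabla_x F$ then extends the inequality from the interior to all of $\mc{K}$. Again $\alpha=0$ covers the monotone case.

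Finally, for the compact case the only nontrivial implication is that $\nabla_x F(x)\succ0$ for all $x$ forces a uniform bound. I would define $\alpha\defeq\min_{x\in\mc{K}}\lambda_{\min}\!\big(\tfrac12(\nabla_x F(x)+\nabla_x F(x)^\top)\big)$; the least eigenvalue of the symmetric part is a continuous function of $x$, because $F\in C^1$ and the map sending a symmetric matrix to its smallest eigenvalue is (Lipschitz) continuous, so the minimum over the compact set $\mc{K}$ is attained and is strictly positive under the standing hypothesis. Then $\nabla_x F(x)\succeq\alpha I$ for all $x$, with $\alpha>0$, while the reverse implication is immediate since $\nabla_x F(x)\succeq\alpha I$ with $\alpha>0$ already gives $v^\top\nabla_x F(x)v\ge\alpha\|v\|^2>0$ for $v\neq0$. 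I expect the only delicate points to be the convexity and boundary technicalities underlying the first two equivalences, and the continuity of the least-eigenvalue map on which the compactness argument rests.
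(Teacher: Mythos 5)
Your proof is correct, and it coincides in essence with the proof of the result the paper relies on: the paper itself offers no argument for this lemma, quoting it directly from \cite[Proposition 2.3.2]{facchinei2007finite}, and the proof given there is exactly your combination of (i) integrating the quadratic form of the Jacobian along the segment $[y,x]$ for sufficiency, (ii) a difference-quotient limit for necessity, and (iii) continuity of the smallest eigenvalue of the symmetric part together with compactness for the uniform lower bound. One point you rightly flag deserves emphasis, since it is invisible in the lemma as stated: convexity of $\mc{K}$, and for the necessity direction full-dimensionality (nonempty interior), are genuinely needed. The cited proposition assumes an open convex set; for a flat set the ``only if'' direction is simply false --- take $\mc{K}=\{(t,0):t\in[0,1]\}\subset\R^2$ and $F(x_1,x_2)=(x_1,-x_2)$, which is strongly monotone on $\mc{K}$ with constant $1$ although $\nabla_x F\equiv \mathrm{diag}(1,-1)$ is indefinite. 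Your interior-plus-continuity extension is the correct repair whenever $\mc{K}$ is convex with nonempty interior; note also that the sufficiency direction (the one the paper invokes most often, e.g.\ on the lower-dimensional routing sets of Section~\ref{sec:traffic}) needs only convexity, so your proof's scope matches how the lemma is actually used.
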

The previous lemma can be used to derive sufficient conditions for strong monotonicity of the operators $F\NE$ and $F\WE$.
To this end, we specialize in the subsequent Lemma~\ref{lem:FNstrongly monotone} the cost function~\eqref{eq:costs_generic} of agent $i$ to
\begin{equation} J\i(x\i,\sigma(x)) \defeq v\i(x\i) +p(\sigma(x))^\top x\i.
\label{eq:costs_specific}
\end{equation} 
The cost in~\eqref{eq:costs_specific} can for example describe applications where $x^i$ denotes the usage level of a certain commodity, whose negative utility is modeled by $v^i: \mc{X}\i \to \R$ and whose per-unit cost $p : \frac{1}{\N} \sum_{i=1}^\N \mc{X}^i \to \R^n$ depends on the average usage level of the entire population~\cite{chen2014autonomous,ma2013decentralized}.
The operators in~\eqref{eq:F}  become
\begin{subequations}
\label{eq:F_decomp_specific}
\begin{align}
F\WE(x) &= [\nabla_{x\i} v\i(x\i)]_{i=1}^\N + [p(\sigma(x))]_{i=1}^\N, \label{eq:F_W_decomp} \\
F\NE(x) &= \textstyle F\WE(x) + \frac1 \N [  \nabla_z p(z)_{|{z=\sigma(x)}} {x\i}  ]_{i=1}^\N. \label{eq:F_N_decomp}
\end{align}
\end{subequations}
\begin{lemma}\label{lem:FNstrongly monotone}
\leavevmode
\begin{enumerate}
\item Suppose that for each agent $i\in\{1,\dots,\N\}$ the function $v^i$ in~\eqref{eq:costs_specific} is convex and that $p$ is monotone; then $F\WE$ is monotone. Under the further assumption that $p$ is affine and strongly monotone, $F\NE$ is strongly monotone.
\item Suppose that for each agent $i\in\{1,\dots,\N\}$ the function $v^i$ in~\eqref{eq:costs_specific} is strongly convex and that $p$ is monotone. Then $F\WE$ is strongly monotone. \qed
\end{enumerate}
\end{lemma}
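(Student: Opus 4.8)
The plan is to invoke the Jacobian characterization of (strong) monotonicity from Lemma~\ref{lemma:pd}, reducing every claim to a positive (semi)definiteness statement about $\nabla_x F\WE(x)$ and $\nabla_x F\NE(x)$, which I read off directly from the decomposition~\eqref{eq:F_decomp_specific}. Throughout I exploit that the separable term $[\nabla_{x\i} v\i(x\i)]_{i=1}^\N$ has a block-diagonal Jacobian whose $i$-th block is the Hessian $\nabla^2 v\i(x\i)$: convexity of $v\i$ makes this block positive semidefinite, and strong convexity makes it $\succeq \mu_i I$ for some $\mu_i>0$. Monotonicity of $p$ I translate, again via Lemma~\ref{lemma:pd}, into $y^\top \nabla_z p(z)\, y \ge 0$ for all $y$ (and $\ge \beta\|y\|^2$ in the strongly monotone case), which is all that the possibly non-symmetric matrix $\nabla_z p$ needs to satisfy; note this scalar is insensitive to the transpose convention in the definition of $\nabla_z p$.

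For the first claim (monotonicity of $F\WE$) I would differentiate the aggregate term $[p(\sigma(x))]_{i=1}^\N$. Since $\partial\sigma/\partial x\j = \tfrac1\N I_n$, its Jacobian is the fully coupled block matrix $\tfrac1\N(\ones\ones^\top)\otimes\nabla_z p(\sigma(x))$, so that for any $y=[y^1;\dots;y^\N]$ the associated quadratic form collapses to $\tfrac1\N\,s^\top \nabla_z p(\sigma(x))\,s$ with $s\defeq\sum_{i}y^i$. Monotonicity of $p$ makes this nonnegative, and adding the positive semidefinite block-diagonal Hessian term yields $\nabla_x F\WE(x)\succeq0$. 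The second claim (strong monotonicity of $F\WE$) follows from the very same computation: the coupling term is still $\ge0$, while the block-diagonal term is now $\succeq(\min_i\mu_i)I$, giving $\nabla_x F\WE(x)\succeq(\min_i\mu_i)I$ with $\min_i\mu_i>0$.

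The remaining claim --- strong monotonicity of $F\NE$ when $p$ is affine and strongly monotone --- is where the affineness is essential, and is the step I expect to be the main obstacle. Here $B\defeq\nabla_z p$ is constant, so the extra term in~\eqref{eq:F_N_decomp}, namely $\tfrac1\N[\,B\,x\i\,]_{i=1}^\N$, is linear in $x$ with constant block-diagonal Jacobian $\tfrac1\N(I_\N\otimes B)$, whose quadratic form is $\tfrac1\N\sum_i (y^i)^\top B\,y^i\ge\tfrac\beta\N\|y\|^2$. Summing the three contributions gives $y^\top\nabla_x F\NE(x)\,y \ge 0 + \tfrac1\N s^\top B\, s + \tfrac\beta\N\|y\|^2 \ge \tfrac\beta\N\|y\|^2$, so $\nabla_x F\NE(x)\succeq\tfrac\beta\N I$ and $F\NE$ is strongly monotone with constant $\beta/\N$. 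The crucial point is that the strong-monotonicity margin comes entirely from the diagonal term $\tfrac1\N(I_\N\otimes B)$, whereas the coupling term $\tfrac1\N s^\top B\, s$ is used only through its nonnegativity; had $p$ not been affine, differentiating $\nabla_z p(\sigma(x))$ through $\sigma(x)$ in the extra term would generate additional Jacobian contributions of indefinite sign, which is exactly what the affine hypothesis rules out. I would also remark that all three statements can alternatively be obtained straight from Definition~\ref{def:SMON} via the same bookkeeping $s=\sum_i(x^i-y^i)=\N(\sigma(x)-\sigma(y))$, an approach that requires only differentiability of the $v\i$ and $p$ rather than the twice-differentiability implicit in the Jacobian route.
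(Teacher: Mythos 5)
Your proof is correct, but it follows a genuinely different route from the paper's. You push all three claims through the Jacobian test of Lemma~\ref{lemma:pd}: Hessian blocks $\nabla^2 v^i$ for the separable term, the collapse identity $y^\top\nabla_x[p(\sigma(x))]_{i=1}^\N\,y=\tfrac1\N s^\top\nabla_z p(\sigma(x))\,s$ with $s=\sum_i y^i$ for the aggregate term, and the constant block-diagonal Jacobian for the affine Nash term. The paper instead works at the level of the operators: it quotes that (strong) convexity of $v^i$ is equivalent to (strong) monotonicity of $\nabla_{x^i}v^i$, handles the aggregate term by the finite-difference analogue of your identity, namely $([p(\sigma(x_1))]_{i=1}^\N-[p(\sigma(x_2))]_{i=1}^\N)^\top(x_1-x_2)=\N\,(p(\sigma(x_1))-p(\sigma(x_2)))^\top(\sigma(x_1)-\sigma(x_2))\ge0$, and invokes Lemma~\ref{lemma:pd} only once, to turn strong monotonicity of the affine $p(z)=Cz+c$ into $C\succ\alpha I_n$, whence the extra term $\tfrac1\N(I_\N\otimes C^\top)x$ is strongly monotone with constant $\alpha/\N$ (your $\beta/\N$). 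What each buys: your route is uniform and gives explicit constants, but it requires $v^i$ twice continuously differentiable and, for the $F\WE$ claims, $p$ continuously differentiable --- smoothness the lemma does not grant (Assumption~\ref{A1} gives only first-order differentiability of the costs, and $\nabla_z p$ need not even exist for $F\WE$ to be defined); the paper's operator-level argument needs none of this. You flag exactly this caveat yourself, and the definition-based argument sketched in your closing remark, with the bookkeeping $s=\N(\sigma(x)-\sigma(y))$, is precisely the paper's proof --- strictly speaking, that remark should be promoted to the main argument for the two $F\WE$ statements, with the Jacobian computation kept only for the affine Nash term.
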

The proof is reported in the Appendix.
When the price function $p(\sigma)$ has diagonal structure i.e. can be decomposed as $p(\sigma(x))=[p_t(\sigma_t(x_t))]_{t=1}^n$, with $\sigma_t(x_t) = \frac{1}{N}\sum_{i=1}^\N x_t^i$ and $x_t\defeq[x^1_t,\dots,x^n_t]$, it is possible to give additional sufficient conditions that guarantee strong monotonicity of the operators. With this respect, the result in \cite[Theorem 1]{GentileThesis} generalizes the reasoning presented in Corollary \ref{cor:pev} and \ref{cor:traffic1} in Sections \ref{sec:traffic}, \ref{sec:PEVs}.
It is clear from Lemma~\ref{lem:FNstrongly monotone} that often only one of $F\NE$ and $F\WE$ possesses monotonicity properties, which are required to guarantee that an equilibrium can be achieved using the algorithms proposed in Section~\ref{sec:algorithms}. Hence it is important to derive results on the distance between the two equilibria, which is the goal of the next Section~\ref{sec:Wardrop_Nash}.
\section{Distance between Nash and Wardrop equilibria in large populations}
\label{sec:Wardrop_Nash}
In this section we study the relations between Nash and Wardrop equilibria in aggregative games with large populations.
Specifically, we consider a sequence of games $\GNS$.
For fixed $\N$, the game $\GN$ is played among $\N$ agents and is defined as in~\eqref{eq:GNEP} with an arbitrary coupling constraint $\mc{C}$ and, for every agent $i$, \basi{arbitrary $J\i(x\i,\sigma(x))$} and $\mc X\i$. For the sake of readability, we avoid the explicit dependence on $\N$ in denoting these quantities and in denoting $x\NE$, $x\WE$, $F\NE$, $F\WE$.
\begin{assumption}
\label{A2}
There exists a convex, compact set $\mathcal{X}\zero\subset\R^n$ such that $\cup_{i=1}^\N \mathcal{X}^i\subseteq{\mathcal{X}\zero}$ for each $\GN$ in the sequence $\GNS$. For each $\N$ and $i \in \{1,\dots,\N\}$, the function $J\i(z_1,z_2)$ is Lipschitz with respect to $z_2$ in $\mc{X}^0$ with Lipschitz constant $L_2$ independent from $\N$, $i$ and $z_1 \in \mc{X}^i$.
\qed
\end{assumption}
We note that Assumption~\ref{A2} implies that $\sigma(x) \in \mc{X}^0$ for any $\N$ and any $x \in \mc{X}^1 \times \dots \times \mc{X}^\N$. Moreover, under Assumption~\ref{A2} we define $\! \Dx \! \defeq \! \max_{y \in{\mathcal{X}\zero} } \{\|y \|\}$. Furthermore, if the cost function~\eqref{eq:costs_generic} takes the specific form~\eqref{eq:costs_specific}, then $p$ being Lipschitz in $\mc{X}^0$ with constant $L_p$ implies $J\i(z_1,z_2)$ being Lipschitz with respect to $z_2$ in $\mc{X}^0$ with constant $L_2 = \Dx L_p$, as
\begin{equation}
\begin{aligned}
&\| J\i(z_1,z_2) - J\i(z_1,z_2') \| = \| (p(z_2) - p(z_2'))^\top z_1 \| \\
&\le \| p(z_2) - p(z_2') \| \| z_1 \| \le \Dx L_p \| z_2 - z_2' \|.
\end{aligned}
\label{eq:Lipschitz_implies_Lipschitz}
\end{equation}

The next proposition shows that every Wardrop equilibrium is an $\varepsilon$-Nash equilibrium, with $\varepsilon$ tending to zero as $\N$ grows.
\begin{proposition}~\label{prop:conv_cost}
Let the sequence of games $\GNS$ satisfy Assumption~\ref{A2}. For each $\GN$, every Wardrop equilibrium is an $\varepsilon$-Nash equilibrium, with $\varepsilon=\frac{2\Dx L_2}{\N}$.
\qed
\end{proposition}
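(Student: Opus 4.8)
The plan is to verify directly that a Wardrop equilibrium $x\WE$ of $\GN$ satisfies the defining inequality~\eqref{eq:def_GNE} of an $\varepsilon$-Nash equilibrium with the claimed $\varepsilon = \frac{2\Dx L_2}{\N}$. Feasibility comes for free: since $x\WE$ is a Wardrop equilibrium we have $x\WE\in\mc{Q}$, and the feasible deviation set $\mc{Q}\i(x\mi\WE)$ quantified over in Definition~\ref{def:WE} coincides with the one in Definition~\ref{def:NE} after setting $x\NE=x\WE$. Hence it remains only to compare the costs appearing on the two sides.

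The key observation is that the Wardrop and Nash inequalities differ solely in the second argument of $J\i$. In the Wardrop inequality the average is held fixed at $\sigma(x\WE)=\frac1\N x\i\WE+\frac1\N\sum_{j\neq i}x\j\WE$, whereas in the Nash inequality the deviating agent's own contribution enters, producing the argument $\sigma_i(x\i)\defeq\frac1\N x\i+\frac1\N\sum_{j\neq i}x\j\WE$. Subtracting, I obtain $\sigma_i(x\i)-\sigma(x\WE)=\frac1\N(x\i-x\i\WE)$. Since $x\i,x\i\WE\in\mc{X}\i\subseteq\mc{X}^0$ have norm at most $\Dx$, the triangle inequality gives the uniform bound $\|\sigma_i(x\i)-\sigma(x\WE)\|\le\frac{2\Dx}{\N}$, and both averages lie in the convex set $\mc{X}^0$ so that the Lipschitz property may be applied between them.

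With this bound in hand, I would invoke Assumption~\ref{A2}: because $J\i(z_1,z_2)$ is Lipschitz in $z_2$ on $\mc{X}^0$ with constant $L_2$ uniformly in $i$ and $z_1\in\mc{X}\i$, passing from the fixed average $\sigma(x\WE)$ to $\sigma_i(x\i)$ alters $J\i(x\i,\cdot)$ by at most $L_2\cdot\frac{2\Dx}{\N}=\varepsilon$. Chaining this estimate with the Wardrop inequality $J\i(x\i\WE,\sigma(x\WE))\le J\i(x\i,\sigma(x\WE))$ yields $J\i(x\i\WE,\sigma(x\WE))\le J\i(x\i,\sigma_i(x\i))+\varepsilon$ for every feasible $x\i\in\mc{Q}\i(x\mi\WE)$, which is exactly~\eqref{eq:def_GNE}. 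Repeating this for all $i\in\{1,\dots,\N\}$ completes the argument.

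Since the proof is a single triangle-inequality bound followed by one application of Lipschitz continuity, I do not anticipate any genuine obstacle. The only point demanding care is the bookkeeping: correctly identifying that the two averages differ by precisely $\frac1\N(x\i-x\i\WE)$, and observing that the uniform radius $\Dx$ of $\mc{X}^0$ controls both $\|x\i\|$ and $\|x\i\WE\|$, so that the per-agent perturbation of the aggregate is of order $1/\N$ uniformly in $i$ and in the game index $\N$.
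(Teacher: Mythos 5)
Your proposal is correct and follows essentially the same argument as the paper's proof: both exploit that the Nash and Wardrop inequalities differ only in the second argument of $J\i$, note that the two aggregates differ by exactly $\frac{1}{\N}(x\i - x\i\WE)$ (hence by at most $\frac{2\Dx}{\N}$ in norm), and then chain the Wardrop inequality with the Lipschitz property of Assumption~\ref{A2}. The paper writes this as a sum of two terms (one nonpositive by the Wardrop property, one bounded via Lipschitz continuity) rather than as a chained inequality, but the content is identical.
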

\begin{proof}
Consider any Wardrop equilibrium $x\WE$ of $\GN$ (not necessarily a variational one). By Definition~\ref{def:WE}, $x\WE\in\mc{Q}$ and for each agent $i$ \[J\i(x\WE^i, \sigma(x\WE)) \le J\i(x^i, \sigma(x\WE)), \quad \textup{for all} \quad x^i\in \mc{Q}\i(x\WE\mi).\] It follows that for each agent $i$ and for all  $x^i\in \mc{Q}\i(x\WE\mi)$
\begin{align}
&\textstyle J\i(x\WE^i, \sigma(x\WE)) - J\i(x^i, \frac{1}{\N}(x\i + \sum_{j\neq i }x\WE^j)) \\
&= \textstyle\underbrace{J\i(x\WE^i, \sigma(x\WE)) - J\i(x^i, \sigma(x\WE))}_{\le 0}+ \\[-0.05cm]
&\textstyle J\i(x^i, \sigma(x\WE)) - J\i(x^i, \frac{1}{\N}(x\i + \sum_{j\neq i }x\WE^j)) \\
& {\color{ReadableGreen}{}\le \textstyle L_2 \| \sigma(x\WE) - (\frac{1}{\N}(x\i + \sum_{j\neq i }x\WE^j)) \| } \\
&{\color{ReadableGreen}{}=\textstyle\frac{L_2}{\N} \|(x\i\WE + \sum_{j\neq i }x\WE^j) - (x\i + \sum_{j\neq i }x\WE^j) \| \color{red} } \\
&{\color{ReadableGreen}{}= \textstyle \frac{L_2}{\N} \| x\i\WE - x\i  \| \le \frac{2\Dx L_2}{\N}.}
\end{align}
Hence $x\WE$ is an $\varepsilon$-Nash equilibrium of $\GN$.
\end{proof}
Proposition~\ref{prop:conv_cost} is a strong result but it provides no information on the distance between the set of strategies constituting a Nash and the set of strategies constituting a Wardrop equilibrium. In the following we study this distance for variational equilibria.

\begin{theorem}
\label{thm:conv_strategies}
Let the sequence of games $\GNS$ satisfy Assumption~\ref{A2}, and each $\GN$ satisfy Assumption~\ref{A1}. Then:
\begin{enumerate}
\item
If the operator $F\NE$ relative to $\GN$ is strongly monotone on $\mc{Q}$ with monotonicity constant $\alpha\emm>0$, then there exists a unique variational Nash equilibrium $\VNE{x}$ of $\GN$. Moreover, for any variational Wardrop equilibrium $\VWE{x}$
\begin{align}
\| \VNE{x} - \VWE{x} \| \le \frac{L_2}{\alpha\emm{\sqrt \N}}.
\label{eq:convergence_strategies_2}
\end{align}
As a consequence, if $\alpha\emm{\sqrt \N} \to \infty$ as $\N \to \infty$, then $\| \VNE{x} - \VWE{x} \| \to 0$ as $\N \to \infty$.
\item If the operator $F\WE$ relative to $\GN$ is strongly monotone on $\mc{Q}$ with monotonicity constant $\alpha\emm>0$, then there exists a unique variational Wardrop equilibrium $\VWE{x}$ of $\GN$. Moreover, for any variational Nash equilibrium $\VNE{x}$
\begin{align}
\| \VNE{x} - \VWE{x} \| \le \frac{L_2}{\alpha\emm{\sqrt \N}}.
\label{eq:convergence_strategies}
\end{align}
As a consequence, if $\alpha\emm{\sqrt \N} \to \infty$ as $\N \to \infty$,
then $\| \VNE{x} - \VWE{x} \| \to 0$ as $\N \to \infty$.
\item If in each game $\GN$ the cost function $J^i(x^i,\sigma(x))$ takes the form~\eqref{eq:costs_specific}, with $v^i = 0$ and $p$ being strongly monotone on $\mc{X}^0$ with monotonicity constant $\alpha$, then there exists a unique $\bar \sigma$ such that $\sigma(\VWE{x})=\bar \sigma$ for any variational Wardrop equilibrium $\VWE{x}$ of $\GN$. Moreover, for any variational Nash equilibrium $\VNE{x}$ of $\GN$ and for any variational Wardrop equilibrium\footnote{If $p$ is Lipschitz with constant $L_p$, then in~\eqref{eq:convergence_sigma} $L_2$ can be replaced by $R L_p$, as by~\eqref{eq:Lipschitz_implies_Lipschitz}. This is used in the application Sections~\ref{sec:PEVs},~\ref{sec:traffic}.} $\VWE{x}$ of $\GN$
\begin{equation}
\| \sigma(\VNE{x}) - \sigma(\VWE{x}) \| \le  \sqrt{\frac{2\Dx L_2}{\alpha \N}}.
\label{eq:convergence_sigma}
\end{equation}
Hence, $\| \sigma(\bar x\NE) - \sigma(\bar x\WE) \| \to 0$ as $\N \to \infty$. \qed
\end{enumerate}
\end{theorem}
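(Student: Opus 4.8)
The plan is to handle the three statements with two ingredients: a VI-against-VI manipulation that reduces each distance to the norm of $F\NE-F\WE$, and a uniform estimate of that norm. For existence and uniqueness in the first two statements I would invoke Lemma~\ref{lem:exun}: Assumption~\ref{A2} places every $\mc{X}\i$ inside the compact set $\mc{X}\zero$, so each $\mc{X}\i$ is bounded and both VIs admit solutions, while strong monotonicity of $F\NE$ (resp. $F\WE$) makes the relevant solution unique. For the bound in the first statement I would insert the test point $x=\VWE{x}$ into $\textup{VI}(\mc{Q},F\NE)$ and $x=\VNE{x}$ into $\textup{VI}(\mc{Q},F\WE)$, giving $F\NE(\VNE{x})^\top(\VNE{x}-\VWE{x})\le 0$ and $F\WE(\VWE{x})^\top(\VNE{x}-\VWE{x})\ge 0$. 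Writing the strong-monotonicity inequality for $F\NE$ at the pair $(\VNE{x},\VWE{x})$ and using these two inequalities to discard the two ``diagonal'' terms, the chain collapses to
\begin{equation*}
\alpha\emm\|\VNE{x}-\VWE{x}\|^2 \le -\bigl(F\NE(\VWE{x})-F\WE(\VWE{x})\bigr)^\top(\VNE{x}-\VWE{x}),
\end{equation*}
whence Cauchy--Schwarz yields $\alpha\emm\|\VNE{x}-\VWE{x}\|\le\|F\NE(\VWE{x})-F\WE(\VWE{x})\|$. The second statement is the mirror image: I would instead use strong monotonicity of $F\WE$, discard the same two diagonal terms, and reduce to $\|F\NE(\VNE{x})-F\WE(\VNE{x})\|$.

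Everything then hinges on the uniform estimate $\|F\NE(x)-F\WE(x)\|\le L_2/\sqrt\N$, which I expect to be where the real work lies. Componentwise, the chain rule for $J\i(x\i,\sigma(x))$ together with $\nabla_{x\i}\sigma(x)=\tfrac1\N I_n$ shows that the $i$-th block of $F\NE-F\WE$ equals $\tfrac1\N\nabla_{z_2}J\i(x\i,z_2)_{|z_2=\sigma(x)}$. Since Assumption~\ref{A2} makes $J\i$ Lipschitz in $z_2$ with constant $L_2$, each block has norm at most $L_2/\N$, so stacking $\N$ of them gives $\|F\NE-F\WE\|\le\sqrt{\N\,(L_2/\N)^2}=L_2/\sqrt\N$. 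Substituting into the two reduced inequalities produces exactly~\eqref{eq:convergence_strategies_2} and~\eqref{eq:convergence_strategies}, and the vanishing-distance claims follow from $\alpha\emm\sqrt\N\to\infty$.

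For the third statement I would first exploit that with $v\i=0$ the operator $F\WE(x)$ has every block equal to $p(\sigma(x))$. Summing the Wardrop VIs for two equilibria $\VWE{x},\VWE{x}'$ and collecting blocks rewrites $(F\WE(\VWE{x}')-F\WE(\VWE{x}))^\top(\VWE{x}'-\VWE{x})$ as $\N\bigl(p(\sigma(\VWE{x}'))-p(\sigma(\VWE{x}))\bigr)^\top(\sigma(\VWE{x}')-\sigma(\VWE{x}))$, because the block sums reproduce $\N\sigma(\cdot)$; strong monotonicity of $p$ on $\mc{X}\zero$ then forces $\sigma(\VWE{x}')=\sigma(\VWE{x})$, which is the unique $\bar\sigma$. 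For the bound I would again add $\textup{VI}(\mc{Q},F\NE)$ at $x=\VWE{x}$ and $\textup{VI}(\mc{Q},F\WE)$ at $x=\VNE{x}$, substitute $F\NE=F\WE+\tfrac1\N[\nabla_z p(z)_{|z=\sigma(x)}x\i]_{i=1}^\N$ from~\eqref{eq:F_N_decomp}, and apply the same aggregation to obtain $\N\alpha\|\sigma(\VNE{x})-\sigma(\VWE{x})\|^2\le\|F\NE(\VNE{x})-F\WE(\VNE{x})\|\,\|\VNE{x}-\VWE{x}\|$. The first factor is bounded by $L_2/\sqrt\N$ as above, while the second is at most $2\Dx\sqrt\N$ since each block lies in $\mc{X}\zero$ and hence has norm at most $\Dx$; their product is $2\Dx L_2$, and dividing by $\N\alpha$ and taking square roots gives~\eqref{eq:convergence_sigma}.

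The main obstacle throughout is the aggregation bookkeeping in the third statement: one must recognize that the block structure $F\WE(x)=[p(\sigma(x))]_{i=1}^\N$ lets the $\R^{\N n}$ inner products be rewritten as $\N$ times an $\R^n$ inner product in the variable $\sigma$, so that strong monotonicity of $p$ (which is assumed only on $\mc{X}\zero\subset\R^n$, not on the full product space) can be brought to bear. The remaining estimates, namely the chain-rule identity for $F\NE-F\WE$ and the two norm bounds, are routine once that reduction is in place.
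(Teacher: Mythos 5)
Your proposal is correct, and its skeleton coincides with the paper's: both rest on the uniform operator estimate $\|F\NE(x)-F\WE(x)\|\le L_2/\sqrt{\N}$ (obtained in exactly the same way, block by block, from the chain-rule identity and the Lipschitz hypothesis of Assumption~\ref{A2}), existence/uniqueness via Lemma~\ref{lem:exun}, and the same aggregation of the two VIs into inner products in $\sigma$ for statement 3. The genuine difference is in how statements 1 and 2 convert the operator bound into a distance bound: you derive $\alpha\emm\|\VNE{x}-\VWE{x}\|\le\|F\NE(\VWE{x})-F\WE(\VWE{x})\|$ from first principles, by inserting each equilibrium as a test point into the other's variational inequality, discarding the two nonpositive diagonal terms via strong monotonicity, and finishing with Cauchy--Schwarz; the paper instead obtains this inequality by citing a sensitivity result for perturbed strongly monotone variational inequalities, \cite[Theorem 1.14]{nagurney2013network}. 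Your two-VI argument is essentially the standard proof of that cited theorem, so nothing is lost and the proof becomes self-contained. You also supply something the paper's written proof leaves implicit: an explicit verification of the uniqueness of $\bar\sigma$ in statement 3, by summing the Wardrop VIs at two equilibria and invoking strong monotonicity of $p$ on $\mc{X}\zero$ (which applies because Assumption~\ref{A2} forces all averages into $\mc{X}\zero$). Finally, your bookkeeping for \eqref{eq:convergence_sigma} (a single global Cauchy--Schwarz giving $(L_2/\sqrt{\N})\cdot 2\Dx\sqrt{\N}=2\Dx L_2$) differs only cosmetically from the paper's term-by-term estimate $\frac{2L_2}{\N^2}\sum_{i=1}^\N\|\VNE{x}\i\|\le\frac{2\Dx L_2}{\N}$, and yields the identical constant.
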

\begin{proof}
1)
We first bound the distance between the operators $F\NE$ and $F\WE$ in terms of $\N$. By~\eqref{eq:F} it holds
\[\begin{split}
&\textstyle \|F\NE(x)-F\WE(x)\|^2 \\
&=\textstyle \| [ \nabla_{x\i} J^i(x\i,\sigma(x)) ]_{i=1}^\N - [\nabla_{x\i} J^i(x\i,z)_{|{z=\sigma(x)}} ]_{i=1}^\N \|^2 \\
&=\textstyle \sum_{i=1}^\N \| \frac{1}{\N} \nabla_{z} J\i(x\i,z)_{|{z=\sigma(x)}} \|^2\fraa{ \le \textstyle \frac{1}{\N^2} \sum_{i=1}^\N L_2^2 = \frac{L_2^2}{\N},}
\end{split}\]
where the inequality follows from the fact that $J^i(z_1,z_2)$ is Lipschitz in $z_2$ on ${\mc{X}\zero}$ with constant $L_2$ by Assumption~\ref{A2} and hence the term $\| \nabla_{z} J\i(x\i,z)_{|{z=\sigma(x)}} \|$ is bounded by $L_2$ by definition of derivative.
It follows that
\begin{equation}
\|F\NE(x)-F\WE(x)\|\le\frac{L_2}{\sqrt{M}}.
\label{eq:distop}
\end{equation}
for all $x \in \mc{X}^0$.
We exploit~\eqref{eq:distop} to bound the distance between Nash and Wardrop strategies. Since $F\NE$ is strongly monotone on $\mc{Q}$ by assumption, $\textup{VI}(\mathcal{Q},F\NE)$ has a unique solution $\VNE{x}$ by Lemma~\ref{lem:exun}. Moreover, by~\cite[Theorem 1.14]{nagurney2013network} for all solutions $\VWE{x} $ of $\textup{VI}(\mc{Q},F\WE)$ it holds
\begin{equation}
\textstyle \|\VNE{x}-\VWE{x}\| \le\frac{1}{\alpha\emm}\|F\NE(\VWE{x})-F\WE(\VWE{x})\|.
\end{equation}
Combining this with equation \eqref{eq:distop} yields the result.
\\
2) As in the above, with Nash in place of Wardrop and viceversa.
\\
3) Any solution $\VWE{x}$ to the $\textup{VI}(Q,F\WE)$ satisfies
\begin{equation}
\begin{aligned}
&  F\WE(\VWE{x})^\top(x-\VWE{x}) \ge 0, \; \forall x \in Q \Leftrightarrow \\
&\textstyle\sum_{i=1}^\N p(\sigma(\VWE{x}))^\top(x\i-\VWE{x}^i) \ge 0, \; \forall x \in Q \Leftrightarrow \\
&p(\sigma(\VWE{x}))^\top(\sigma(x) - \sigma(\VWE{x})) \ge 0, \; \forall x \in Q.
\label{eq:VI_WE_sigma}
\end{aligned}
\end{equation}
Any solution $\VNE{x}$ to the $\textup{VI}(Q,F\NE)$ satisfies
\begin{equation}
\begin{aligned}
& F\NE(\VNE{x})^\top(x-\VNE{x}) \ge 0, \; \forall x \in Q \Leftrightarrow \\
& p(\sigma(\VNE{x}))^\top (\sigma(x) - \sigma(\VNE{x})) + \\
&\frac{1}{\N^2} \sum_{i=1}^\N (\nabla_z p(z)_{|z=\sigma(\VNE{x})} \VNE{x}\i)^\top (x\i - \bar x\i\NE) \ge 0, \; \forall x \in Q .
\label{eq:VI_VNE_sigma}
\end{aligned}
\end{equation}
Exploiting the strong monotonicity of $p$ on $\mc{X}^0$, one has \\
\[\begin{aligned}
& \; \alpha \| \sigma(\VWE{x}) - \sigma(\VNE{x}) \|^2  \\
&\le ( p(\sigma(\VWE{x}))-p(\sigma(\VNE{x})) )^\top (\sigma(\VWE{x})-\sigma(\VNE{x}) ) \\
&={ p(\sigma(\VWE{x}))^\top \! (\sigma(\VWE{x}) \! - \! \sigma(\VNE{x}) )}
\! -  p(\sigma(\VNE{x}))^\top \! (\sigma(\VWE{x}) \! - \! \sigma(\VNE{x})) \\ &\underset{\text{by}~\eqref{eq:VI_WE_sigma}}{\le}
 -  p(\sigma(\VNE{x}))^\top (\sigma(\VWE{x})-\sigma(\VNE{x})) \\
 &\underset{\text{by~\eqref{eq:VI_VNE_sigma}}}{\le} \textstyle \frac{1}{\N^2} \sum_{i=1}^\N (\VNE{x}\i)^\top (\nabla_z p(z)_{|z=\sigma(\VNE{x})})^\top(\VWE{x}^i - \bar x\i\NE) \\
& {\color{ReadableGreen}{} \le \textstyle \frac{1}{\N^2} \sum_{i=1}^\N \|\VNE{x}\i\| \| \nabla_z J\i(\VWE{x}\i,z)_{|z=\sigma(\VNE{x})} \|} \\
& {\color{ReadableGreen}{}+ \textstyle \frac{1}{\N^2} \sum_{i=1}^\N \| \VNE{x}\i\| \| \nabla_z J\i(\VNE{x}\i,z)_{|z=\sigma(\VNE{x})} \|} \\
&{\color{ReadableGreen}{}\le \textstyle  \frac{2 L_2}{\N^2} \sum_{i=1}^\N \|\VNE{x}\i\| \le \textstyle \frac{2L_2}{\N^2} \sum_{i=1}^\N \Dx \le \frac 1\N 2 \Dx L_2.}
\end{aligned}\]
\noindent We  conclude that $\textstyle \| \sigma(\VWE{x}) - \sigma(\VNE{x}) \| \le \sqrt{\frac{2\Dx L_2}{\alpha \N}}.$
\end{proof}
We point out that the bounds~\eqref{eq:convergence_strategies_2} and~\eqref{eq:convergence_strategies} can be used to derive a bound on the average strategies similar to~\eqref{eq:convergence_sigma}.
\subsection{Comparison with the literature}
\label{sec:comparison}
Proposition \ref{prop:conv_cost} states that, under fairly general assumptions, any Wardrop equilibrium is an $\varepsilon$-Nash equilibrium. Such result follows directly from the fact that each agent contributes  only via the average and that the cost functions are  Lipschitz. Consequently, the contribution of each agent scales linearly with the inverse of the population size. This same idea is used to prove  similar results in many previous contributions. For example, the case of  potential games is investigated in  \cite{altman2004equilibrium,altman2006survey},  routing games are considered in \cite{altman2011routing},  flow control and routing in communication networks are discussed in \cite{altman2002nash}, while a similar argument is used in  \cite{grammatico:parise:colombino:lygeros:14} for the case without coupling constraints.   Proposition \ref{prop:conv_cost} is a trivial extension of those works to  generic aggregative games with coupling constraints.

Our main result  is to prove that, by introducing further assumptions, one can actually go beyond Proposition \ref{prop:conv_cost} and derive  bounds on the Euclidean distance between Nash and Wardrop equilibria. In Theorem  \ref{thm:conv_strategies} we consider two types of additional assumptions: the first is strong monotonicity of either the Nash or Wardrop operator (statements 1 and 2), the second is a structural assumption on the cost functions (statement 3). The only previous results bounding the Euclidean distance between the two equilibria that we are aware of are obtained in \cite{haurie1985relationship}. Therein a similar bound to our result of Theorem~\ref{thm:conv_strategies}-3) is derived specific to routing/congestion games. However, that work assumes that the population increases by means of identical replicas of the agents. We here prove that a similar argument as in \cite{haurie1985relationship} can be used to address the case of generic new agents instead of identical copies. Moreover, the results in Theorem~\ref{thm:conv_strategies}-1) and Theorem~\ref{thm:conv_strategies}-2) address a more general class of aggregative games (i.e. not necessarily congestion games) by employing a new  type of argument, based on a sensitivity analysis result for variational inequalities with perturbed  strongly  monotone operators \cite[Theorem 1.14]{nagurney2013network}. We note that the works~\cite{Dafermos87,altman2004equilibrium,altman2006survey} guarantee convergence of Nash to Wardrop in terms of Euclidean distance, but do not provide a bound on the convergence rate.

Finally, our results are derived for variational equilibria. We remark that if there are no coupling constraints, as in the previous  works, then any equilibrium is a  variational equilibrium. Hence our results subsume the results above. We remark that including coupling constraints does not increase the complexity of the mathematical treatment of Section~\ref{sec:Wardrop_Nash}; on the contrary the design of the algorithms in Section~\ref{sec:algorithms} is specifically tailored to account for coupling constraints.

\section{Decentralized algorithms}
\label{sec:algorithms}
In this section we turn our attention to the design of algorithms that achieve a Nash or a Wardrop equilibrium. Hence we do not consider a sequence of games as in the previous section, but rather focus on the game~\eqref{eq:GNEP} with fixed population. We begin with the following assumption on the constraint sets.
\begin{assumption}
The coupling constraint in~\eqref{eq:coupling_constraints_general} is of the form
\begin{equation}
x\in \mc{C} \defeq \{x\in\mb{R}^{\N n}\,\vert\,Ax\le b\} \subset\mb{R}^{\N n},
\label{eq:coupling_constraints_affine}
\end{equation}
with $A \defeq [A_{(:,1)},\ldots,A_{(:,\N)}] \in\R^{m\times{\N n}}$, $A_{(:,i)} \in\R^{m\times{n}}$ for all $i\in \{1,\dots,\N\}$, $b\in\R^m$. Moreover, for all $i \in \{1,\dots,\N\}$, the set $\mc{X}\i$ can be expressed as $\mc{X}\i = \{x\i \in \R^n \vert g\i(x\i) \le 0 \}$, where $g\i: \R^n \to \R^{p_i}$ is continuously differentiable. The set $Q$, which can thus be expressed as $\mc{Q} = \{x \in \R^{Mn} \vert g\i(x\i) \le 0, \: \forall i,\: Ax \le b \}$, satisfies Slater's constraint qualification as by~\cite[(5.27)]{boyd2004convex}. \myqed
\label{A3}
\end{assumption}
We note that linearity of the coupling constraints arises in a range of applications, as explained in~\cite[page 188]{facchinei2007generalized}. We also assume that agent $i$ does not wish to disclose information about his cost function $J\i$ and individual constraint set $\mathcal{X}\i$ and that he knows his influence on the coupling constraint, that is, the sub-matrix $A_{(:,i)}$ in~\eqref{eq:coupling_constraints_affine}. Moreover, we assume the presence of a central operator that is able to measure the population average $\sigma(x)$, to evaluate the quantity $Ax-b$ in~\eqref{eq:coupling_constraints_affine} and to broadcast aggregate information to the agents. Based on this information structure, in the following we focus on the design of decentralized algorithms to obtain a solution of either VI$(\mathcal Q, F\NE)$ or VI$(\mathcal Q, F\WE)$.
As the techniques are the same for Nash and Wardrop equilibrium, we consider the general problem VI$(\mathcal Q, F)$, where $F$ can be replaced with $F\NE$ or $F\WE$.

We observe that, if $F$ is integrable and monotone on $\mc{Q}$, that is, if there exists a convex function $E(x): \R^{\N n}\rightarrow \R$ such that $F(x)=\nabla_x E(x)$  for all $x\in\mc{Q}$, then VI$(\mathcal Q, F)$ is equivalent to the convex optimization problem~\cite[Section 1.3.1]{facchinei2007finite}
\begin{equation}
\argmin{x\in\mathcal{Q}}\ E(x).
\label{eq:opt}
\end{equation}
Therefore a solution of VI$(\mathcal Q, F)$ and thus a variational equilibrium can be found by applying any of the decentralized optimization algorithms available in the literature \cite{tsitsiklis1989parallel} to problem~\eqref{eq:opt}; the decentralized structure arises because each agent can evaluate $\nabla_{x^i} E(x)$ by knowing only his strategy $x\i$ and $\sigma(x)$. Equivalently, the integrability assumption guarantees that $\mc{G}$ is a \textit{potential game} with potential function $E(x)$ \cite{shaply1994potentialgames}, hence decentralized convergence tools available for potential games can also be employed~\cite{dubey2006strategic,marden2009cooperative}. An operator $F$ is integrable in $\mc{Q}$ if and only if \mar{$\nabla_x F(x)=\nabla_x F(x)^\top$} for all $x \in \mc{Q}$~\cite[Theorem 1.3.1]{facchinei2007finite}. We anticipate that in both applications of Sections~\ref{sec:PEVs} and~\ref{sec:traffic} the Wardrop operator $F\WE$ in~\eqref{eq:F_W_decomp} is integrable but the Nash operator $F\NE$ in~\eqref{eq:F_N_decomp} is not.

In the following we intend to find a solution of VI$(\mathcal Q, F)$ when $F$ is not necessarily integrable, so that these standard  methods cannot be applied. To propose decentralized schemes in presence of coupling constraints, we introduce two reformulations of VI$(\mathcal Q, F)$ in an extended space $[x;\lambda]$ where $\lambda$ are the dual variables relative to the coupling constraint $\mc C$. These two reformulations will then be used to propose two alternative algorithms. Specifically, we define for any $\lambda\in\R^m_{\ge0}$ the game
\begin{equation}
\mc{G}(\lambda) \! \defeq \! \left\{ 
\begin{aligned}
&\textup{agents}:& &\{1,\dots,M\}\\
&\textup{cost of agent } i:& &J^i(x^i,\sigma(x)) \!+ \lambda^\top A(:,i)x^i \\
&\textup{individual constr}:& &\mc{X}^i\\
&\textup{coupling constr}:& &\R^{\N n}.
\end{aligned}\right. 
\label{eq:GNEP_ext}
\end{equation}
Moreover, we introduce the extended VI$(\mc{Y},T)$ with
\begin{equation}
\begin{split}
\mathcal{Y} \defeq \mc{X}\times \R^m_{\ge0}\,,\quad 
T(x,\lambda) \defeq
\begin{bmatrix}
F(x)+ A^\top\lambda\\
 -(Ax-b)
\end{bmatrix}\,.
\end{split}
\label{eq:T}
\end{equation}
The following proposition draws a connection between VI$(\mathcal Q, F)$, the game $\mc{G}(\lambda)$ and VI$(\mc{Y},T)$.
\begin{proposition}\textup{\cite[Section 4.3.2]{scutari2012monotone}}
\label{prop:ext_vi}
Let Assumptions~\ref{A1} and~\ref{A3} hold.
The following statements are equivalent.
\begin{enumerate}
\item The vector $\bar x$ is a solution of VI$(\mc{Q},F)$.
\item There exists $\bar \lambda \in \R^m_{\ge 0}$ such that $\bar x$ is a variational equilibrium of $\mc{G}(\bar \lambda)$ and $0\le \bar \lambda \perp b-A\bar x\ge 0$.
\item There exists $\bar \lambda \in \R^m_{\ge 0}$ such that the vector $[\bar x; \bar \lambda]$ is a solution of VI$(\mathcal{Y},T)$. \hfill{$\square$}
\end{enumerate}
\end{proposition}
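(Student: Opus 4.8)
The plan is to prove the two equivalences $(1)\Leftrightarrow(2)$ and $(2)\Leftrightarrow(3)$, treating statement $(2)$ as the hinge. The preliminary observation is that, since the game $\mc{G}(\bar\lambda)$ in~\eqref{eq:GNEP_ext} carries no coupling constraint (its coupling set is all of $\R^{\N n}$), its feasible set is the product $\mc{X}$ and its pseudo-gradient is exactly $F(x)+A^\top\lambda$: the extra price term $\lambda^\top A_{(:,i)}x^i$ contributes the block $A_{(:,i)}^\top\lambda$ to $\nabla_{x^i}$, and stacking over $i$ gives $A^\top\lambda$. By the definition of variational equilibrium, statement $(2)$ therefore reads: there exists $\bar\lambda\in\R^m_{\ge0}$ such that $\bar x$ solves $\textup{VI}(\mc{X},F(\cdot)+A^\top\bar\lambda)$ and $0\le\bar\lambda\perp b-A\bar x\ge0$.

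For $(2)\Leftrightarrow(3)$ I would exploit the Cartesian structure $\mc{Y}=\mc{X}\times\R^m_{\ge0}$. Because the feasible set is a product and $T$ is block structured as in~\eqref{eq:T}, the single inequality $T(\bar x,\bar\lambda)^\top([x;\lambda]-[\bar x;\bar\lambda])\ge0$ for all $[x;\lambda]\in\mc{Y}$ separates, by freezing one block at a time, into the $x$-block condition $(F(\bar x)+A^\top\bar\lambda)^\top(x-\bar x)\ge0$ for all $x\in\mc{X}$ and the $\lambda$-block condition $(b-A\bar x)^\top(\lambda-\bar\lambda)\ge0$ for all $\lambda\in\R^m_{\ge0}$; conversely, summing the two recovers the full inequality. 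The first block is precisely the variational-equilibrium condition for $\mc{G}(\bar\lambda)$, and the second is a variational inequality over the nonnegative orthant with the constant operator $b-A\bar x$, which is well known to be equivalent to the complementarity $0\le\bar\lambda\perp b-A\bar x\ge0$. This matches statement $(2)$ exactly, so $(2)\Leftrightarrow(3)$.

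For $(2)\Leftrightarrow(1)$, statement $(2)$ is nothing but the Karush--Kuhn--Tucker system of $\textup{VI}(\mc{Q},F)$ obtained by dualizing only the coupling constraint $Ax\le b$. The direction $(2)\Rightarrow(1)$ is a direct computation: for any $x\in\mc{Q}$ I would write $F(\bar x)^\top(x-\bar x)=(F(\bar x)+A^\top\bar\lambda)^\top(x-\bar x)-\bar\lambda^\top A(x-\bar x)$, bound the first term below by $0$ via the $x$-block VI (using $\mc{Q}\subseteq\mc{X}$), and bound the second using $\bar\lambda^\top A(x-\bar x)=\bar\lambda^\top(Ax-b)+\bar\lambda^\top(b-A\bar x)\le0$, where $\bar\lambda^\top(Ax-b)\le0$ follows from $\bar\lambda\ge0$ and $Ax\le b$ while $\bar\lambda^\top(b-A\bar x)=0$ is the complementarity; this yields $F(\bar x)^\top(x-\bar x)\ge0$, hence $(1)$. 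Note this direction uses no constraint qualification.

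The hard part is the converse $(1)\Rightarrow(2)$, which is where Assumption~\ref{A3} enters, since one must produce the dual multiplier $\bar\lambda$. I would invoke the KKT theorem for variational inequalities (equivalently, the minimum-principle characterization together with a Lagrangian/separating-hyperplane argument on the convex set $\mc{Q}$). As $\mc{Q}$ is convex by Assumption~\ref{A1} and the coupling constraint satisfies Slater's constraint qualification by Assumption~\ref{A3}, the CQ guarantees existence of $\bar\lambda\in\R^m_{\ge0}$ for which stationarity $(F(\bar x)+A^\top\bar\lambda)^\top(x-\bar x)\ge0$ for all $x\in\mc{X}$ and complementarity $\bar\lambda^\top(b-A\bar x)=0$ both hold, which is exactly $(2)$. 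Slater's condition is indispensable precisely at this step: without a constraint qualification the multiplier $\bar\lambda$ need not exist, even though $\bar x$ still solves $\textup{VI}(\mc{Q},F)$.
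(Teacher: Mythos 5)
Your proposal is correct, but it follows a genuinely different route from the paper's. The paper proves the three-way equivalence by writing each statement as a Karush--Kuhn--Tucker system --- VI$(\mc{Q},F)$ via \cite[Proposition 1.3.4]{facchinei2007finite}, each agent's optimization problem in the game $\mc{G}(\bar\lambda)$ via convex-programming KKT, and VI$(\mc{Y},T)$ again via \cite[Proposition 1.3.4]{facchinei2007finite} --- and then observes by inspection that the three KKT systems coincide, as in \cite[Section 4.3.2]{scutari2012monotone}; this dualizes the individual constraints $g\i$ as well, and correspondingly invokes Slater's qualification for all of $\mc{Q}$, $\mc{X}\i$, $\mc{X}$, $\mc{Y}$. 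You instead pivot on statement 2) and never introduce multipliers for the individual constraints: your equivalence between 2) and 3) is the purely algebraic block decomposition of a variational inequality over the Cartesian product $\mc{Y}=\mc{X}\times\R^m_{\ge0}$ (with the $\lambda$-block reducing to the complementarity condition), and your implication from 2) to 1) is a two-line computation; both are free of any constraint qualification, so Slater's condition enters only in the direction from 1) to 2), exactly where the multiplier $\bar\lambda$ must be produced. This buys two things: it localizes the role of Assumption~\ref{A3}, showing the remaining directions hold without it, and it treats $\mc{X}$ as an abstract closed convex set, so the functional description and differentiability of the $g\i$ are never used --- a mild gain in generality. What the paper's approach buys is uniformity (one tool applied three times) and the explicit KKT systems themselves, which are reused later, e.g., in the proof of Proposition~\ref{prop:uniqueness_for_PEVs}. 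Two small points to tighten in your write-up: in the direction from 2) to 1), state explicitly that $\bar x\in\mc{Q}$ --- it follows from $\bar x \in \mc{X}$ together with $b-A\bar x\ge 0$ inside the complementarity condition --- since feasibility is part of being a solution of VI$(\mc{Q},F)$; and in the direction from 1) to 2), since the standard KKT results for variational inequalities dualize all constraints, you should justify the partial dualization along the lines you sketch, namely that $\bar x$ solves VI$(\mc{Q},F)$ if and only if it minimizes the linear function $x\mapsto F(\bar x)^\top x$ over $\mc{Q}$, after which Lagrangian duality for this convex program, dualizing only $Ax\le b$ while keeping $\mc{X}$ as an abstract constraint, yields the desired $\bar\lambda$ under Slater's condition.
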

The proof is an easy adaptation of \cite[Section 4.3.2]{scutari2012monotone} and is postponed to the Appendix. In subsection~\ref{subsec:fully} we exploit the equivalence between 1) and 2) to propose a two-level algorithm based on optimal response that converges to a Wardrop equilibrium. In subsection~\ref{subsec:boundedly} we leverage on the equivalence between 1) and 3) to propose a one-level algorithm based on gradient step that converges to a Nash equilibrium. The same one-level algorithm can be used to obtain a Wardrop equilibrium, by using $F\WE$ instead of $F\NE$.

\subsection{Two-level algorithm based on optimal response for Wardrop equilibrium}
\label{subsec:fully}
Based on the equivalence between 1) and 2) in Proposition~\ref{prop:ext_vi}, we here introduce Algorithm \ref{alg:two} to achieve a Wardrop equilibrium. The algorithm features an outer loop, in which the central operator broadcasts to the population the dual variables $\lambda_{(k)}$ based on the current constraint violation, and an inner loop, in which the agents update their strategies to the Wardrop equilibrium of the game $\mc{G}(\lambda_{(k)})$.
Since $\mc{G}(\lambda_{(k)})$ is a game \textit{without} coupling constraints, the Wardrop equilibrium can be found via the iterative algorithm proposed in~\cite[Algorithm 1]{grammatico:parise:colombino:lygeros:14}. For each agent $i\in\{1,\dots,\N\}$ we define the optimal response to a signal $z\in \frac{1}{\N} \sum_{i=1}^\N \mc{X}\i$ and dual variables $\lambda\in\R^m_{\ge0}$ 
\begin{equation}
x^i_{\textup{or}}(z,\lambda) \defeq \argmin{x^i \in\mc{X}^i} \; J^i(x^i, z )+\lambda^\top A(:,i)x^i.
\label{eq:or}
\end{equation}
\begin{algorithm}
\caption{for Wardrop equilibrium}
\label{alg:two}

{\small
\textbf{Initialization}: Set $k =0$, $\tau > 0$, $x^i_{(0)} \in \R^n$, $\lambda_{(0)}\in\R^m_{\ge0}$.
\vspace{0.1cm}
\textbf{Iterate until convergence}:
\begin{enumerate}
\item \textit{Strategies are updated to  a Wardrop equilibrium of $\mc{G}_{\lambda_{(k)}}$}\\
\hfill{ \begin{minipage}{0.42\textwidth}
\rule{\textwidth}{.4pt}
 \textbf{Initialization}: Set $h =0$, $\tilde x^i_{(0)}=x^i_{(k)}$, $z_{(0)}\in\R^n$.\\[0.1cm]
\textbf{Iterate until convergence}:
\begin{subequations}
\label{eq:inf_inner}
\begin{align}
\label{eq:inf_inner_a}
\tilde  x^i_{(h+1)}&\leftarrow  x^i_{\textup{or}}(z_{(h)},\lambda_{(k)}), \forall i\\
\tilde \sigma_{(h+1)} & \textstyle \leftarrow \frac{1}{\N}\sum_{j=1}^\N \tilde x^j_{(h+1)} \label{eq:inf_inner_b} \\
z_{(h+1)} & \textstyle \leftarrow (1-\frac 1 h) z_{(h)}+\frac 1 h \tilde \sigma_{(h+1)} \label{eq:inf_inner_c} \\
h & \textstyle \leftarrow h+1
\end{align}
\end{subequations}
\textbf{Upon convergence}:
\[
x_{(k+1)} \textstyle \leftarrow \tilde x_{(h)}
\]
\rule{\textwidth}{.4pt}
\end{minipage}}
\vspace{0.3cm}
\item \textit{Dual variables are updated}
\begin{align}\label{eq:outer}
 \lambda_{(k+1)} & \leftarrow \Pi_{\mathbb{R}^{m}_{\ge0}}[\lambda_{(k)}-\tau (b-Ax_{(k+1)})] \\
 k & \leftarrow k+1.
\end{align}
\end{enumerate}}
\end{algorithm}

The inner loop in Algorithm \ref{alg:two} converges to a Wardrop equilibrium of the game $\mc{G}(\lambda_{(k)})$ under the following assumption.
\begin{assumption} \label{A4}
There exists $L>0$ such that, for all $i\in\{1,\dots,\N\}$ and $\lambda\in\R^m_{\ge0}$, the mapping $z\mapsto x^i_{\textup{or}}(z,\lambda)$ is single valued and Lipschitz with constant smaller than $L$. Moreover, at least one of the following statements holds.
\begin{enumerate}
\item For each $i\in\{1,\dots,\N\}$ and $\lambda \in \R_{\ge 0}^m$, the mapping $z\mapsto x^i_{\textup{or}}(z,\lambda)$ is non-expansive\footnote{The mapping is non-expansive if $\| x^i_{\textup{or}}(z_1,\lambda) - x^i_{\textup{or}}(z_2,\lambda) \| \le \| z_1 - z_2\|$ for all $z_1$, $z_2$.}.
\item For each $i\in\{1,\dots,\N\}$ and $\lambda \in \R_{\ge 0}^m$, the mapping $z\mapsto z-x^i_{\textup{or}}(z,\lambda)$ is strongly monotone. \qed
\end{enumerate}
\end{assumption}
\noindent Sufficient conditions for Assumption~\ref{A4} to hold are in~\cite[Corollary 1]{grammatico:parise:colombino:lygeros:14} for $v\i$, $p$ in~\eqref{eq:costs_specific} respectively quadratic and affine.
\begin{theorem}
\label{thm:conv_two}
Suppose that the operator $F\WE$ in~\eqref{eq:F_W} is strongly monotone on $\mc{X}$ with constant $\alpha$, that Assumptions~\ref{A1},~\ref{A3},~\ref{A4} hold, and that $\mc{X}\i$ is bounded for all $i \in \{1,\dots,\N\}$; set $\tau<\frac{2\alpha}{\|A\|^2}$ in~\eqref{eq:outer}. Then $x_{(k)}$ in Algorithm \ref{alg:two} converges to a variational Wardrop equilibrium of $\mc{G}$.
\myqed
\end{theorem}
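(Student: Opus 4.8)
The plan is to exploit the equivalence between statements 1) and 2) of Proposition~\ref{prop:ext_vi} and to read the outer loop as a projection method on the reduced dual variable. First I would assume, as the algorithm prescribes (``iterate until convergence''), that the inner loop is solved exactly, so that $x_{(k+1)}$ equals the variational Wardrop equilibrium of $\mc{G}(\lambda_{(k)})$. The Wardrop operator of $\mc{G}(\lambda)$ is $F\WE(\cdot)+A^\top\lambda$, which is strongly monotone with the same constant $\alpha$ as $F\WE$ since the added term is constant in $x$; hence Lemma~\ref{lem:exun} guarantees that this equilibrium is the unique solution of $\textup{VI}(\mc{X},F\WE(\cdot)+A^\top\lambda_{(k)})$, which I denote $\hat x(\lambda_{(k)})$. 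Assumption~\ref{A4} together with the cited convergence of~\cite[Algorithm~1]{grammatico:parise:colombino:lygeros:14} justifies that the inner loop indeed produces $\hat x(\lambda_{(k)})$. The outer update~\eqref{eq:outer} then reads $\lambda_{(k+1)}=\Pi_{\R^m_{\ge0}}[\lambda_{(k)}-\tau H(\lambda_{(k)})]$ with $H(\lambda)\defeq b-A\hat x(\lambda)$, i.e.\ a forward (projection) step for $\textup{VI}(\R^m_{\ge0},H)$.

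The core of the argument is a sensitivity estimate on $\hat x(\cdot)$. Writing the defining variational inequalities of $\hat x(\lambda_1)$ and $\hat x(\lambda_2)$, testing each against the other point, adding them, and invoking strong monotonicity of $F\WE$, I would obtain
\[
(\lambda_1-\lambda_2)^\top A\,(\hat x(\lambda_2)-\hat x(\lambda_1)) \ge \alpha\,\|\hat x(\lambda_1)-\hat x(\lambda_2)\|^2 .
\]
Since $H(\lambda_1)-H(\lambda_2)=A(\hat x(\lambda_2)-\hat x(\lambda_1))$, this one inequality delivers everything I need: it shows $H$ is monotone, and combined with $\|A(\hat x(\lambda_1)-\hat x(\lambda_2))\|\le\|A\|\,\|\hat x(\lambda_1)-\hat x(\lambda_2)\|$ it yields co-coercivity of $H$ with constant $\alpha/\|A\|^2$, namely $(H(\lambda_1)-H(\lambda_2))^\top(\lambda_1-\lambda_2)\ge(\alpha/\|A\|^2)\,\|H(\lambda_1)-H(\lambda_2)\|^2$. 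This is precisely the constant that pairs with the step-size rule $\tau<2\alpha/\|A\|^2$.

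Next I would exhibit a fixed point of the dual map. Since each $\mc{X}^i$ is bounded, $\mc{Q}$ is compact, and $F\WE$ is strongly monotone; thus by Lemma~\ref{lem:exun} the problem $\textup{VI}(\mc{Q},F\WE)$ has a unique solution $\VWE{x}$. Proposition~\ref{prop:ext_vi} (equivalence of 1) and 2)) then furnishes $\bar\lambda\in\R^m_{\ge0}$ with $\hat x(\bar\lambda)=\VWE{x}$ and $0\le\bar\lambda\perp b-A\VWE{x}\ge0$, i.e.\ $\bar\lambda$ solves $\textup{VI}(\R^m_{\ge0},H)$, so $\bar\lambda=\Pi_{\R^m_{\ge0}}[\bar\lambda-\tau H(\bar\lambda)]$. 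Using non-expansiveness of the projection, expanding the square, and substituting the sensitivity estimate together with $\|H(\lambda_{(k)})-H(\bar\lambda)\|^2\le\|A\|^2\,\|\hat x(\lambda_{(k)})-\VWE{x}\|^2$, I would reach
\[
\|\lambda_{(k+1)}-\bar\lambda\|^2 \le \|\lambda_{(k)}-\bar\lambda\|^2 - \tau\,(2\alpha-\tau\|A\|^2)\,\|x_{(k+1)}-\VWE{x}\|^2 .
\]

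The choice $\tau<2\alpha/\|A\|^2$ makes the coefficient $\tau(2\alpha-\tau\|A\|^2)$ strictly positive, so $\{\|\lambda_{(k)}-\bar\lambda\|\}$ is non-increasing and, telescoping, the residual terms $\|x_{(k+1)}-\VWE{x}\|^2$ are summable; hence $\|x_{(k+1)}-\VWE{x}\|\to0$, that is, $x_{(k)}$ converges to the unique variational Wardrop equilibrium $\VWE{x}$ of $\mc{G}$. Note that uniqueness of $\VWE{x}$ is what lets me conclude convergence of the primal sequence directly, without needing the dual iterate $\lambda_{(k)}$ to have a unique limit. The main obstacle is the sensitivity/co-coercivity estimate of the second paragraph: pinning down the exact constant $\alpha/\|A\|^2$ is what forces the step-size threshold, whereas the concluding Fej\'er-type inequality is routine once that constant is in hand.
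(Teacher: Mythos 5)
Your proposal is correct and mirrors the paper's proof in all structural respects: the inner loop is taken to return the unique Wardrop equilibrium $\bar x\WE(\lambda_{(k)})$ of $\mc{G}(\lambda_{(k)})$, the outer update~\eqref{eq:outer} is read as a projection step for $\textup{VI}(\R^m_{\ge0},\Phi)$ with $\Phi(\lambda)=b-A\bar x\WE(\lambda)$, existence of a dual solution $\bar\lambda$ comes from Lemma~\ref{lem:exun} together with the equivalence of 1) and 2) in Proposition~\ref{prop:ext_vi}, and your sensitivity estimate is exactly the paper's paired-VI argument establishing co-coercivity of $\Phi$ with constant $\alpha/\|A\|^2$. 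The only place you depart is the concluding step: the paper cites the convergence theorem for projection methods with co-coercive operators, \cite[Theorem 12.1.8]{facchinei2007finite}, to obtain $\lambda_{(k)}\to\bar\lambda$ and then infers primal convergence from the inner loop, whereas you inline the Fej\'er-monotonicity computation and retain the strong-monotonicity residual $\alpha\|x_{(k+1)}-\VWE{x}\|^2$ rather than the dual residual $\|\Phi(\lambda_{(k)})-\Phi(\bar\lambda)\|^2$; this yields $x_{(k)}\to\VWE{x}$ directly, without requiring the dual iterates to converge or the Lipschitz continuity of $\lambda\mapsto\bar x\WE(\lambda)$ that the paper's final sentence implicitly uses. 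That is a modest but genuine streamlining, and your Fej\'er inequality and step-size threshold check out. One detail you gloss over and the paper spells out: the cited convergence result \cite[Theorem 3]{grammatico:parise:colombino:lygeros:14} only guarantees that the inner loop converges to a fixed point of the aggregate mapping (identified there as an $\varepsilon$-Nash equilibrium); to identify that limit with the unique solution of $\textup{VI}(\mc{X},F\WE+A^\top\lambda_{(k)})$ one still needs the short argument, given in the paper from the definition of the optimal response~\eqref{eq:or}, that any such fixed point is a Wardrop equilibrium of $\mc{G}(\lambda_{(k)})$.
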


\begin{remark}[Convergence rate]
The convergence rate of Algorithm \ref{alg:two} is  an open question. Nonetheless, it is possible to characterize the convergence rate in both of the two levels for some special cases. Specifically, under Assumption  \ref{A4}-2)  it is possible to modify  line \eqref{eq:inf_inner_c} with $z_{(h+1)} \leftarrow (1-\frac1\mu)z_{(h)}+\frac{1}{\mu}\tilde \sigma_{(h+1)}$ and guarantee geometric convergence for $\mu\in[0,1]$ small enough, see e.g.~\cite[Theorem 3.6 (iii)]{berinde}. The outer loop on the other hand has geometric convergence under the additional assumption that the mapping $\Phi$ as defined in the proof of Theorem \ref{thm:conv_two} is not only co-coercive but also strongly monotone.
\end{remark}
The proof is given in the Appendix.
To the best of our knowledge this is the first algorithm that guarantees convergence to a Wardrop equilibrium in games with coupling constraints by using \textit{optimal responses.} We note that, for the case of specific cost~\eqref{eq:costs_specific} and $p$ affine, \cite{Grammatico2016Arxiv} proposes a one-level optimal response algorithm that converges to a pair $(\bar x,\bar \lambda)$ such that $\bar x$ is a Wardrop equilibrium of the game $\mc{G}(\bar \lambda)$ satisfying the coupling constraint. However such point is not a Wardrop equilibrium because the complementarity condition $0\le \bar \lambda \perp b-A\bar x\ge 0$ is not guaranteed. A two-level \textit{gradient-step} algorithm for Nash equilibrium with coupling constraints has been proposed in~\cite[Algorithm 2]{pang2010design} and in~\cite[Section 4]{pavel2007extension}.
\subsection{Asymmetric projection algorithm based on gradient step for Nash and Wardrop equilibrium}
\label{subsec:boundedly}
We propose here an algorithm to achieve a Nash or a Wardrop equilibrium by making use of the equivalent reformulation of VI$(\mathcal Q, F)$ as the extended  VI$(\mathcal{Y},T)$ given in Proposition~\ref{prop:ext_vi}. Solving VI$(\mathcal{Y},T)$ instead of VI$(\mathcal Q, F)$ allows the design of a decentralized algorithm, because the set $\mathcal{Y}$ is the Cartesian product $\mc{X}^1 \times \dots \mc{X}^\N \times \R^m_{\ge 0}$, and thus the individual constraint sets $\mc{X}\i$ are decoupled.

Algorithm~\ref{alg:asp} finds a solution of VI$(\mathcal{Y},T)$, where $T$ is as in~\eqref{eq:T}, with $F = F\NE$, and hence achieves a Nash equilibrium. If the same algorithm is used with $F = F\WE$ it achieves a Wardrop equilibrium.
At every iteration each agent computes his new strategy $x^{i}_{(k+1)}$ by taking a gradient step, based on his previous strategy $x^{i}_{(k)}$, the previous average $\sigma(x_{(k)})$ and the previous dual variables $\lambda_{(k)}$. Given the new coupling constraint violation, the  central operator updates the price to $\lambda_{(k+1)}$ and broadcasts it to the agents.
\begin{algorithm}[H]
\caption{for Nash and Wardrop equilibria}
\label{alg:asp}
{\small
\textbf{Initialization}: Set $k =0$, $\tau>0$, $x^i_{(0)} \in\R^n$, $\lambda_{(0)}\in\R^m_{\ge0}$.\\
\textbf{Iterate until convergence}:
\begin{subequations}
\label{eq:apa_inner}
\begin{align}
\hspace{-0.7cm} \sigma_{(k)} &\textstyle \leftarrow \frac{1}{\N}\sum_{i=1}^\N x^{i }_{(k)} \label{eq:apa_inner_a}\\
\hspace{-0.7cm} x^{i}_{(k+1)}&\leftarrow \! \Pi_{\mathcal{X}^i}[x^{i }_{(k)} \! - \! \tau \Bigl(\nabla_{\!\! x\i} J\i(x\i_{(k)},\sigma(x_{(k)}\!)) \! + \! {A}_{(:,i)}^\top\lambda_{(k)} \Bigr)], \! \forall i   \label{eq:apa_inner_b}\\
\hspace{-0.7cm} \lambda_{(k+1)} & \leftarrow \! \Pi_{\mathbb{R}^{m}_{\ge0}}[\lambda_{(k)}-\tau (b-2Ax_{(k+1)}+Ax_{(k)})] \label{eq:apa_inner_c} \\
k & \leftarrow k+1.
\end{align}
\end{subequations}}
\vspace*{-0.3cm}
\end{algorithm}
\begin{theorem}
\label{thm:convergence_asp}
Let Assumptions~\ref{A1} and~\ref{A3} hold. Then
\begin{itemize}
\item Let $F\NE$ in~\eqref{eq:F_N} be strongly monotone on $\mc{X}$ with constant $\alpha$ and Lipschitz on $\mc{X}$ with constant $L_F$. Set $\tau>0$ s.t.
\begin{equation}
\textstyle \tau <\frac{-L_F^2+\sqrt{L_F^4+4\alpha^2\|A\|^2}}{2\alpha \|A\|^2 }\ . \label{eq:condition_tau_APA}
\end{equation}
Then $x_{(k)}$ in Algorithm~\ref{alg:asp} converges to a variational Nash equilibrium of $\mc{G}$ in~\eqref{eq:GNEP}.
\item Let $F\WE$ in~\eqref{eq:F_W} be strongly monotone on $\mc{X}$ with constant $\alpha$ and Lipschitz on $\mc{X}$ with constant $L_F$, then Algorithm~\ref{A2} with $\nabla_{x\i} J\i(x\i_{(k)},z)\eval$ in place of $\nabla_{x\i} J\i(x\i_{(k)},\sigma(x_{(k)}))$ converges to a variational Wardrop equilibrium, if $\tau$ satisfies~\eqref{eq:condition_tau_APA}. \hfill{$\square$}
\end{itemize}
\end{theorem}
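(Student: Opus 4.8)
The plan is to reduce the claim to convergence of the extended iterates $z_{(k)}\defeq[x_{(k)};\lambda_{(k)}]$ to a solution $\bar z \defeq [\bar x;\bar\lambda]$ of $\textup{VI}(\mc Y,T)$ in~\eqref{eq:T} (with $F=F\NE$), and then invoke Proposition~\ref{prop:ext_vi}. Indeed, if $z_{(k)}\to\bar z$ with $\bar z$ solving $\textup{VI}(\mc Y,T)$, then by the equivalence 1)$\Leftrightarrow$3) of Proposition~\ref{prop:ext_vi} the limit $\bar x$ solves $\textup{VI}(\mc Q,F\NE)$, which by Proposition~\ref{prop:vi_ref} is a variational Nash equilibrium. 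Existence and uniqueness of $\bar x$ follow since $F\NE$ strongly monotone on $\mc X\supseteq\mc Q$ makes $\textup{VI}(\mc Q,F\NE)$ uniquely solvable (Lemma~\ref{lem:exun}), and Proposition~\ref{prop:ext_vi} then supplies a multiplier $\bar\lambda$. Since the Wardrop variant only replaces $F\NE$ by $F\WE$ in every step, the identical argument covers the second bullet, so I would prove only the first.

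The key step is to recognize Algorithm~\ref{alg:asp} as a preconditioned forward--backward (primal--dual) iteration for $\textup{VI}(\mc Y,T)$. Writing the two projections through their normal-cone inclusions and using that $b-2Ax_{(k+1)}+Ax_{(k)}$ is exactly the dual residual $b-Ax$ evaluated at the extrapolated point $2x_{(k+1)}-x_{(k)}$, one obtains
\[
0 \in T(z_{(k+1)}) + \begin{bmatrix} F\NE(x_{(k)}) - F\NE(x_{(k+1)}) \\ \zeros[m] \end{bmatrix} + M\,(z_{(k+1)} - z_{(k)}),
\qquad M \defeq \begin{bmatrix} \tfrac1\tau I & -A^\top \\ -A & \tfrac1\tau I \end{bmatrix}.
\]
Here $M=M^\top$, and $\lambda_{\min}(M)=\tfrac1\tau-\|A\|>0$ whenever $\tau\|A\|<1$ (Schur complement). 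Splitting $T=B+C$ with $B(z)\defeq[F\NE(x);\zeros[m]]$ single valued and $C$ the sum of the skew-symmetric coupling $[A^\top\lambda;-Ax]+[\zeros[m];b]$ and the normal cone $N_{\mc Y}$ (hence maximal monotone), the display reads $z_{(k+1)}=(M+C)^{-1}(M-B)z_{(k)}$, a forward--backward step in the metric $\|v\|_M^2\defeq v^\top M v$.

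I would then run a Fej\'er/Lyapunov argument in the $M$-metric. Let $V_k\defeq\|z_{(k)}-\bar z\|_M^2$. Using $-B(\bar z)\in C(\bar z)$, monotonicity of $C$, and the three-point identity $(a-b)^\top M(b-c)=\tfrac12(\|a-c\|_M^2-\|b-c\|_M^2-\|a-b\|_M^2)$, I expect
\[
V_{k+1} \le V_k - \|z_{(k)}-z_{(k+1)}\|_M^2 - 2\big(F\NE(x_{(k)})-F\NE(\bar x)\big)^\top\big(x_{(k+1)}-\bar x\big).
\]
Splitting $x_{(k+1)}-\bar x=(x_{(k)}-\bar x)+(x_{(k+1)}-x_{(k)})$ and applying strong monotonicity ($\alpha$) to the first piece and Lipschitz continuity ($L_F$) to the forward-step error on the second, the right-hand side becomes $V_k$ minus a quadratic form in $\|x_{(k)}-\bar x\|$ and $\|x_{(k+1)}-x_{(k)}\|$, after bounding $\|z_{(k)}-z_{(k+1)}\|_M^2\ge\lambda_{\min}(M)\|x_{(k+1)}-x_{(k)}\|^2$. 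That form is positive definite precisely when $L_F^2<2\alpha\,\lambda_{\min}(M)=2\alpha(\tfrac1\tau-\|A\|)$, yielding $V_{k+1}\le V_k-c\|x_{(k)}-\bar x\|^2$ for some $c>0$; since $\sqrt{L_F^4+4\alpha^2\|A\|^2}\ge2\alpha\|A\|$, condition~\eqref{eq:condition_tau_APA} comfortably implies this inequality (and $\tau\|A\|<1$, so $M\succ0$). From the descent I conclude $V_k$ converges and $x_{(k)}\to\bar x$; a standard Fej\'er-monotonicity argument then gives convergence of the full pair $z_{(k)}$ to a solution of $\textup{VI}(\mc Y,T)$, whose (unique) $x$-component $\bar x$ is the variational Nash equilibrium.

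The main obstacle is the skew-symmetric coupling between the primal and dual blocks: a naive estimate loses the cancellation of the cross terms $\pm(A(x_{(k)}-\bar x))^\top(\lambda_{(k)}-\bar\lambda)$. The extrapolation $2x_{(k+1)}-x_{(k)}$ built into~\eqref{eq:apa_inner_c} is exactly what produces this cancellation (equivalently, what makes the preconditioner $M$ symmetric and positive definite), and balancing the residual forward-step error $F\NE(x_{(k)})-F\NE(x_{(k+1)})$ against the strong-monotonicity margin is what pins down the admissible range of $\tau$; verifying that~\eqref{eq:condition_tau_APA} lies within this range is the quantitative crux.
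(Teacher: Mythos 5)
Your proposal is correct, and it takes a genuinely different analytical route from the paper, even though both proofs share the same skeleton (reduce to VI$(\mc{Y},T)$ via Proposition~\ref{prop:ext_vi}, obtain $\bar x$ from Lemma~\ref{lem:exun}, treat Nash and Wardrop with one argument) and, remarkably, the same matrix: the paper views Algorithm~\ref{alg:asp} as the Facchinei--Pang \emph{asymmetric} projection algorithm with the block-triangular matrix $D$ of~\eqref{eq:choice_D_apa}, whose symmetric part $D_s$ is exactly your preconditioner $M$. From there the routes diverge. The paper invokes \cite[Proposition 12.5.2]{facchinei2007finite} as a black box and verifies its hypothesis --- co-coercivity with constant $1$ of the transformed map $G(y)=D_s^{-1/2}T(D_s^{-1/2}y)-D_s^{-1/2}(D-D_s)D_s^{-1/2}y$ --- by computing $[D_s^{-1}]_{11}=\tau(I-\tau^2A^\top A)^{-1}$ through block inversion and bounding its norm with a Neumann series, which leads to the requirement $\alpha-\tau L_F^2/(1-\tau^2\|A\|^2)>0$, i.e.\ precisely~\eqref{eq:condition_tau_APA}. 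You instead run a self-contained Fej\'er/Lyapunov descent in the $M$-metric for the forward--backward inclusion; your key inequality is exactly what monotonicity of $C$ plus the three-point identity gives, and it yields the requirement $L_F^2<2\alpha(1/\tau-\|A\|)$, i.e.\ $\tau\bigl(L_F^2+2\alpha\|A\|\bigr)<2\alpha$. This is implied by (indeed generically weaker than) the paper's condition: rationalizing, \eqref{eq:condition_tau_APA} reads $\tau\bigl(L_F^2+\sqrt{L_F^4+4\alpha^2\|A\|^2}\bigr)<2\alpha$, so your parenthetical hint $\sqrt{L_F^4+4\alpha^2\|A\|^2}\ge 2\alpha\|A\|$ is exactly the needed comparison --- worth spelling out, since the hint looks like it points the wrong way until one rewrites the bound in this quotient form. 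Two touch-ups: your displayed inclusion should carry the normal cone, $0\in T(z_{(k+1)})+N_{\mc{Y}}(z_{(k+1)})+\cdots$ (your operator $C$ absorbs it later, so the intent is clear); and full-sequence convergence of $\lambda_{(k)}$ needs the standard cluster-point argument through maximal monotonicity of $C$, but this is not required by the theorem, which only claims convergence of $x_{(k)}$ --- your summability bound $\sum_k\|x_{(k)}-\bar x\|^2<\infty$ already delivers that directly. In terms of what each approach buys: the paper's argument is shorter given the reference, while yours is elementary, explains why the extrapolated dual update $b-2Ax_{(k+1)}+Ax_{(k)}$ is the right one (it symmetrizes the preconditioner and cancels the skew cross terms), and proves convergence over a slightly larger step-size range.
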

\begin{remark}[Convergence rate]
By specializing the result in \cite{marcotte1995convergence} we proved in \cite[Proposition 1]{paccagnan2016distributed} that if the operator $F$ is not only monotone but also affine and the set $\mathcal{X}$ is a polyhedron  then for $\tau$ small enough Algorithm \ref{alg:asp} converges $R$-linearly, i.e., $\lim\sup_{k \rightarrow \infty} (\|y_{(k)}-\bar y\|)^{\frac1k}<1.$
\end{remark}

The proof is given in the Appendix and is based on the fact that Algorithm \ref{alg:asp} is a specific type of asymmetric projection algorithm \cite[Algorithm 12.5.1]{facchinei2007finite} applied to VI$(\mathcal{Y},T)$. A proof for the case in which $F$ is affine and symmetric is given in~\cite[Propositions 2 and 4]{tseng1990further}. 
We briefly note that there are other gradient based algorithms that can be implemented in a decentralized fashion to solve VI$(\mathcal{Y},T)$. One example is the extragradient algorithm \cite[Algorithm 12.1.9]{facchinei2007finite}. This would however  require two updates for both $x$ and  $\lambda$ at each iteration.

\subsection{Convergence guarantees for quadratic games}
In the previous subsections we have proposed two different algorithms. We summarize in Table \ref{tb:summary} the main conditions that guarantee their convergence.
\begin{table}[h]
\begin{center}
\begin{tabular}{ccc}\hline
& Nash & Wardrop\\ \hline
optimal response& \multirow{2}{*}{ -} &  $F\WE$ strongly monotone \\
(Algorithm \ref{alg:two})&  &  and Assumption \ref{A4} \\ \hline
gradient step & \multirow{2}{*}{$F\NE$ strongly monotone} & \multirow{2}{*}{$F\WE$ strongly monotone} \\
(Algorithm \ref{alg:asp})& & \\ \hline
\end{tabular}
\end{center}
\caption{Range of applicability of the presented algorithms, under Assumptions~\ref{A1} and~\ref{A2}.}
\label{tb:summary}
\end{table}%
To better understand the differences and the range of applicability of the two algorithms we refine the sufficient conditions of Table~\ref{tb:summary} to the important class of aggregative games with quadratic cost functions
\begin{equation}
J\i(x\i,\sigma(x)) \defeq \frac{1}{2} (x\i)^\top Q x\i +(C\sigma(x)+c\i)^\top x\i\,,
\label{eq:costs_quad}
\end{equation}
where $Q \in \R^{n\times n}$ is symmetric, $C\in\R^{n\times n}, c\i \in\R^n$.
These cost functions have been used in~\cite{huang2007large,grammatico:parise:colombino:lygeros:14,bauso:pesenti:13}. Since the operators $F\NE,F\WE$ defined in \eqref{eq:F} are obtained by differentiating quadratic functions, their expression is given by
\begin{subequations}
\label{eq:F_quad}
\begin{align}
\label{eq:F_quad_W}
F\WE(x)&= \textstyle \left(I_\N\otimes Q + \frac 1 \N \mathbbold{1}_\N\mathbbold{1}_\N^\top \otimes C \right) x+  c, \\
F\NE(x)&= \textstyle F\WE(x)+  \frac{1}{\N} (I_\N \otimes C^\top)  x,
\label{eq:F_quad_N}
\end{align}
\end{subequations}
where $c=[c^1;\ldots; c^\N]$.
The following lemma exploits the characterization~\eqref{eq:F_quad} to derive sufficient conditions for strong monotonicity of $F\WE$, $F\NE$ and for Assumption~\ref{A4}. These in turn guarantee convergence of Algorithm~\ref{alg:two} and~\ref{alg:asp} as by Table~\ref{tb:summary}.
\begin{lemma}
\label{lem:quadratic}
The following hold.
\begin{itemize}
\item If $Q\succ 0$, $C\succeq 0$ or if $Q\succeq 0$, $C\succ 0$ then $F\NE$ in~\eqref{eq:F_quad_N} is strongly monotone.
\item If $Q\succ 0$, $C\succeq 0$ then $F\WE$ in~\eqref{eq:F_quad_W} is strongly monotone.
\item If $Q\succ 0$, $C=C^\top \succ 0$ or if $Q\succ 0$, $Q - C^\top Q^{-1} C \succ 0$ then $F\WE$ in~\eqref{eq:F_quad_W} is strongly monotone and Assumption~\ref{A4} is satisfied.
\qed
\end{itemize}
\end{lemma}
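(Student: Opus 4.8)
The plan is to treat both operators as affine maps and reduce their strong monotonicity to a pair of positive-definiteness conditions on $n\times n$ matrices, and then to handle Assumption~\ref{A4} through the explicit structure of the optimal-response map. Since \eqref{eq:F_quad} exhibits $F\WE(x)=M\WE x+c$ and $F\NE(x)=M\NE x+c$ with constant matrices $M\WE=I_\N\otimes Q+\frac1\N\mathbbold{1}_\N\mathbbold{1}_\N^\top\otimes C$ and $M\NE=M\WE+\frac1\N I_\N\otimes C^\top$, Lemma~\ref{lemma:pd} tells us that each operator is strongly monotone if and only if the symmetric part of the corresponding constant matrix is positive definite; because the Jacobian is constant, strict positive definiteness automatically yields a uniform constant $\alpha=\lambda_{\min}>0$, so no compactness argument is needed.

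First I would diagonalize the Kronecker structure. Writing $P\defeq\frac1\N\mathbbold{1}_\N\mathbbold{1}_\N^\top$ for the orthogonal projection onto $\textup{span}(\mathbbold{1}_\N)$ and $C_{\textup s}\defeq\frac12(C+C^\top)$, and using $I_\N=P+(I_\N-P)$ together with $Q=Q^\top$, the symmetric part of $M\WE$ equals $I_\N\otimes Q+P\otimes C_{\textup s}$. Since $P$ and $I_\N-P$ project onto complementary invariant subspaces, this matrix is block diagonal with one block $Q+C_{\textup s}$ (on the range of $P\otimes I_n$) and $\N-1$ copies of the block $Q$ (on its complement), so $F\WE$ is strongly monotone iff $Q\succ0$ and $Q+C_{\textup s}\succ0$. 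The identical computation for $M\NE$, whose symmetric part is $I_\N\otimes(Q+\frac1\N C_{\textup s})+P\otimes C_{\textup s}$, produces the blocks $Q+\frac{\N+1}\N C_{\textup s}$ and $Q+\frac1\N C_{\textup s}$, so $F\NE$ is strongly monotone iff both are positive definite.

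Next I would check that each listed sign condition meets these criteria. Note that $C\succeq0$ (resp. $C\succ0$) is equivalent to $C_{\textup s}\succeq0$ (resp. $C_{\textup s}\succ0$), since $v^\top Cv=v^\top C_{\textup s}v$. Then $Q\succ0,\,C\succeq0$ gives $Q+\frac{\N+1}\N C_{\textup s}\succeq Q\succ0$ and $Q+\frac1\N C_{\textup s}\succeq Q\succ0$, proving the $F\NE$ claim, and likewise $Q+C_{\textup s}\succ0$ for the $F\WE$ claim; the case $Q\succeq0,\,C\succ0$ for $F\NE$ follows because $\frac1\N C_{\textup s}\succ0$ and $\frac{\N+1}\N C_{\textup s}\succ0$ dominate. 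For the third bullet the only nontrivial point is that $Q\succ0$ and $Q-C^\top Q^{-1}C\succ0$ imply $Q+C_{\textup s}\succ0$: these two conditions are precisely the Schur-complement characterization of $\begin{sm}Q&C^\top\\C&Q\end{sm}\succ0$, and evaluating this quadratic form on $[v;v]$ gives $2v^\top(Q+C_{\textup s})v>0$, hence $Q+C_{\textup s}\succ0$ (the case $C=C^\top\succ0$ being immediate).

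Finally, for Assumption~\ref{A4} I would use that $x^i_{\textup{or}}(z,\lambda)$ is the minimizer of a $Q$-strongly-convex quadratic over $\mc{X}^i$, hence the $Q$-weighted projection of the unconstrained optimizer $-Q^{-1}(Cz+c^i+A_{(:,i)}^\top\lambda)$, which is single valued and Lipschitz as soon as $Q\succ0$. Comparing the variational inequalities that define the responses at $z_1$ and $z_2$ and cancelling the shared terms yields $\|\Delta x\|_Q^2\le\Delta z^\top C^\top Q^{-1}C\,\Delta z$, with $\Delta x$, $\Delta z$ the response and input differences; from this the condition $Q-C^\top Q^{-1}C\succ0$ gives a contraction, while $C=C^\top\succ0$ is aimed at the strong monotonicity of $z\mapsto z-x^i_{\textup{or}}(z,\lambda)$, matching the two alternatives in Assumption~\ref{A4}. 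I expect the main obstacle to be reconciling these estimates, which are most natural in the $Q$-weighted norm, with the Euclidean-norm statement of Assumption~\ref{A4}, and pairing each sign condition with the correct alternative; this is exactly the bookkeeping already carried out for quadratic $v^i$ and affine $p$ in \cite[Corollary 1]{grammatico:parise:colombino:lygeros:14}, which I would invoke to close this part.
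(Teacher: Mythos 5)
Your proposal is correct and follows essentially the same route as the paper: both reduce strong monotonicity to positive definiteness of the constant Jacobians via Lemma~\ref{lemma:pd}, both dispose of the case $Q \succ 0$, $Q - C^\top Q^{-1} C \succ 0$ by the Schur-complement argument showing $Q + C \succ 0$, and both close the Assumption~\ref{A4} part by invoking the result of~\cite{grammatico:parise:colombino:lygeros:14}, with the same pairing of $C = C^\top \succ 0$ to the strong-monotonicity alternative and of $Q - C^\top Q^{-1}C \succ 0$ to the non-expansiveness alternative. The only difference is one of explicitness: you carry out the positive-definiteness checks by block-diagonalizing along the projection $\frac{1}{\N}\mathbbold{1}_\N\mathbbold{1}_\N^\top$ and sketch the variational-inequality estimate behind the optimal-response bounds, where the paper simply labels these steps straightforward and cites the external theorem.
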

\begin{proof}
By Lemma~\ref{lemma:pd}, strong monotonicity of $F\WE$ in~\eqref{eq:F_quad_W} is equivalent to $\nabla_x F\WE(x) = \left(I_\N\otimes Q + \frac{1}{\N} \mathbbold{1}_\N\mathbbold{1}_\N^\top \otimes C \right)^\top \succ 0$, which is independent from $x$. Similarly, strong monotonicity of $F\NE$ in~\eqref{eq:F_quad_N} is equivalent to $\left(I_\N\otimes Q + \frac{1}{\N} \mathbbold{1}_\N\mathbbold{1}_\N^\top \otimes C \right)^\top +  \frac{1}{\N} (I_\N \otimes C^\top)^\top \succ 0$. Building on this, the first two statements are straightforward to prove. Regarding the last statement, $Q\succ 0$, $C=C^\top \succ 0$ imply $\nabla_x F\WE(x) \succ 0$. Moreover, by~\cite[Theorem 2]{grammatico:parise:colombino:lygeros:14}, Assumption~\ref{A4}.2 is satisfied. By using Schur's theorem, it can be shown that $Q \succ 0$, $Q - C^\top Q^{-1} C \succ 0$ imply $Q+C \succ 0$, hence $\nabla_x F\WE(x) \succ 0$. Finally, by~\cite[Theorem 2]{grammatico:parise:colombino:lygeros:14}, Assumption~\ref{A4}.1 is satisfied.
\end{proof}
\section{Charging of electric vehicles}
\label{sec:PEVs}

We model the simultaneous charging of a population of electric vehicles (EV) as a game, following the approach of \cite{ma2013decentralized,grammatico:parise:colombino:lygeros:14,dario2015aggregative}. Compared to the existing work, our main contributions consist in introducing the coupling constraints, finding a Nash and a Wardrop equilibrium even for the case of $v\i = 0$ in~\eqref{eq:costs_specific}, and studying the distance between the aggregate strategies at the Nash and at the Wardrop equilibrium.
\subsection*{Constraints}
We consider a population of $\N$ electric vehicles. The state of charge of vehicle $i$ at time $t$ is described by the variable $s\i_t$. The time evolution of $s^i_t$ is specified by the discrete-time system $s\i_{t+1} = s\i_t + b\i x\i_t \,,  t = 1, \dots, n$, where $x\i_t$ is the charging control and the parameter $b\i > 0$ is the charging efficiency. We assume that the charging control cannot take negative values and that at time $t$ it cannot exceed $\tilde x^i_t \ge 0$. The final state of charge is constrained to $s_{n+1}^i\ge\eta\i$, where $\eta\i \ge 0$ is the desired state of charge of agent $i$. Denoting $x\i =[x\i_1, \dots, x^i_n]^\top \in \R^n$, the individual constraint of agent $i$ can be expressed as
\begin{equation}
\label{eq:vehicle_constraint}
x\i \in \mc{X}\i \vcentcolon=\!\! \left\{ x\i \in \mathbb{R}^n  \left|
\begin{array}{l}
0 \le x\i_t \le \tilde x\i_t, \;\; \forall \, t=1,\dots,n \\ 
\sum_{t=1}^{n} x\i_t \ge \theta\i
\end{array}
\!\!\!\!\right.
\right\},
\end{equation}
where $\theta\i \coloneqq {(b\i)}^{-1} (\eta\i - s\i_1)$, with $s\i_1 \ge 0$ the state of charge at the beginning of the time horizon. Besides the individual constraints $x\i \in \mc{X}\i$, we also introduce the coupling constraint
\begin{equation}
x\in\mc{C}\defeq\{x\in\R^{\N n}\mid \textstyle \frac{1}{\N}\sum_{i=1}^\N x\i_t \le K_t,  \, \forall \, t=1,\dots,n\},
\label{eq:coupling_global_PEVs}
\end{equation}
indicating that  at time $t$ the grid cannot deliver more than $\N \cdot K_t$ units of power to the vehicles. In compact form~\eqref{eq:coupling_global_PEVs} reads as
$(\ones[\N]^\top \otimes I_n) x \le\N K \,,$
where $K \defeq [K_1, \dots, K_n]^\top$.
\subsection*{Cost function}
The cost function of each vehicle represents its electricity bill, which we model as
\begin{equation}
\textstyle J\i(x\i,\sigma(x))=\sum_{t=1}^n p_t \left( \frac{d_t + \sigma_t(x)}{\kappa_t}  \right) x\i_t \eqdef p(\sigma(x))^\top x^i,
\label{eq:PEV_energy_bill}
\end{equation}
where we assumed that the energy price for each time interval $p_t:\R_{\ge0}\rightarrow \R_{>0}$ depends on the ratio between total consumption and total capacity $(d_t + \sigma_t(x))/ \kappa_t$, where $d_t$ and $\sigma_t(x)\defeq\frac{1}{\N}\sum_{i=1}^\N x^i_t$ are the non-EV and EV demand at time $t$ divided by $\N$ and $\kappa_t$ is the total production capacity divided by $\N$ as in~\cite[eq. (6)]{ma2013decentralized}. $\kappa_t$ is in general not related to $K_t$.
\subsection{Theoretical guarantees}
We  define the  game $\mc{G}^\text{EV}_\N$ as in~\eqref{eq:GNEP}, with $\mathcal{X}\i$, $\mathcal{C}$ and $J\i(x\i,\sigma(x))$ as in \eqref{eq:vehicle_constraint}, \eqref{eq:coupling_global_PEVs} and \eqref{eq:PEV_energy_bill} respectively.
In the following corollary we refine the main results of Sections~\ref{sec:connection_VI},~\ref{sec:Wardrop_Nash},~\ref{sec:algorithms} for the EV application.
\begin{corollary}
\label{cor:pev} 
Consider a sequence of games $(\mc{G}^\textup{EV}_M)_{M=1}^\infty$. Assume that there exists $\tilde x^0$ such that $\tilde x^i_t \le \tilde x^0$ for all $t \in \{1,\dots,n\},i \in \{1,\dots,\N\}$ and for each game $\mc{G}^\textup{EV}_M$. Moreover, assume that for each game $\mc{G}^\textup{EV}_M$ the set $\mc{Q}=\mc{C}\cap\mc{X}$ is non-empty and that for each $t$ the price function $p_t$ in \eqref{eq:PEV_energy_bill} is twice continuously differentiable, strictly increasing and Lipschitz in $[0,\tilde x^0]$ with constant $L_p$.
Moreover, assume
\begin{equation}
\minn{\substack{t \in \{1,\dots,n\}\\z \in [0,\tilde x^0]}}{\left(p'_t(z) - \frac{\tilde x^0 p''_t(z)}{8}\right)} > 0.
\label{eq:bound_PEV_smon}
\end{equation}
Then:
\begin{enumerate}
\item
A Wardrop and a Nash equilibrium exist for each game $\mathcal{G}^\textup{EV}_M$ of the sequence. Furthermore, every Wardrop equilibrium is an $\varepsilon$-Nash equilibrium with $\varepsilon = \frac{2n (\tilde{x}^0)^2 L_p}{\N}$.
\item The function $p$ is strongly monotone, hence for each game $\mathcal{G}^\textup{EV}_M$ there exists a unique $\bar \sigma$ such that $\sigma(\VWE{x})=\bar \sigma$ for any variational Wardrop equilibrium $\VWE{x}$ of $\mathcal{G}^\textup{EV}_M$. Moreover for any variational Nash equilibrium $\VNE{x}$ of $\mathcal{G}^\textup{EV}_M$, $\| \sigma(\VNE{x}) - \sigma(\VWE{x}) \| \le \tilde x^0 \sqrt{\frac{2 n L_p}{\alpha M}}$, where $\alpha$ is the monotonicity constant of $p$.
\item For each game $\mathcal{G}^\textup{EV}_\N$ the operator $F\WE$ is monotone, hence the extragradient algorithm \cite[Algorithm 12.1.9]{facchinei2007finite} with operator $F\WE$ converges to a variational Wardrop equilibrium of $\mathcal{G}^\textup{EV}_M$. 
\item For each game $\mathcal{G}^\textup{EV}_\N$ the operator $F\NE$ is strongly monotone. Hence, Algorithm~\ref{alg:asp} converges to a variational Nash equilibrium of $\mathcal{G}^\textup{EV}_M$. 
\qed
\end{enumerate}
\end{corollary}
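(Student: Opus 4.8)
The plan is to recognize $\mc{G}^\textup{EV}_\N$ as an instance of the specific cost \eqref{eq:costs_specific} with $v^i\equiv 0$ and the diagonal price $p(\sigma)=[p_t((d_t+\sigma_t)/\kappa_t)]_{t=1}^n$, and then to obtain all four statements by specializing the machinery of Sections~\ref{sec:connection_VI}--\ref{sec:algorithms}. First I would fix the ambient set $\mc{X}^0=[0,\tilde x^0]^n$, which is convex and compact and contains every $\mc{X}^i$ since $\tilde x^i_t\le\tilde x^0$; this gives $\Dx=\max_{y\in\mc{X}^0}\|y\|=\tilde x^0\sqrt n$ and makes $\sigma_t$ range in $[0,\tilde x^0]$, so $p$ is well defined on $\mc{X}^0$. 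Because $p$ is diagonal and each component is $L_p$-Lipschitz one has $\|p(\sigma)-p(\sigma')\|\le L_p\|\sigma-\sigma'\|$, so by \eqref{eq:Lipschitz_implies_Lipschitz} each $J^i$ is Lipschitz in its second argument with $L_2=\Dx L_p=\tilde x^0\sqrt n\,L_p$; this is exactly Assumption~\ref{A2}. I would also record that Assumption~\ref{A3} holds (the coupling \eqref{eq:coupling_global_PEVs} is linear and each $\mc{X}^i$ is polyhedral, with $\mc Q$ non-empty by hypothesis, Slater being the only extra item to verify) and that $p$ being $C^2$ yields the differentiability required by Assumption~\ref{A1}.

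The technical core is establishing the monotonicity properties, after which the four claims are essentially mechanical. The easy facts come first: each $p_t$ is strictly increasing and $C^2$, so $p_t'\ge 0$, and I would argue that \eqref{eq:bound_PEV_smon} together with monotonicity forces $p_t'(z)>0$ for all $t$ and all $z\in[0,\tilde x^0]$ --- a zero of $p_t'$ would be a strict local maximum if $p_t''<0$ (contradicting monotonicity) and would violate \eqref{eq:bound_PEV_smon} if $p_t''\ge 0$. Hence $\nabla_\sigma p=\textup{diag}(p_t')\succ 0$ on the compact $\mc{X}^0$, so by Lemma~\ref{lemma:pd} the price $p$ is strongly monotone with some constant $\alpha>0$; by \eqref{eq:F_W_decomp} this also makes $\nabla_x F\WE=\frac1\N\ones[\N]\ones[\N]^\top\otimes\nabla_\sigma p\succeq 0$, i.e.\ $F\WE$ monotone.

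The hard part, and the main obstacle, is strong monotonicity of $F\NE$ (statement~4). Lemma~\ref{lem:FNstrongly monotone} does \emph{not} apply, since it requires $p$ affine, whereas here $p$ is a general nonlinear price; I would therefore work directly from \eqref{eq:F_N_decomp} and Lemma~\ref{lemma:pd}. Writing $D=\nabla_\sigma p$ and $\Lambda^i=\textup{diag}(p_t''\,x^i_t)$, the Jacobian of $F\NE$ has blocks $\frac1\N D+\frac1{\N^2}\Lambda^i+\frac1\N D\,\delta_{ij}$, so its symmetric part $S$ splits into the PSD term $\frac1\N\ones[\N]\ones[\N]^\top\otimes D$, the coercive term $\frac1\N I_\N\otimes D\succeq\frac1\N(\min_t p_t')\,I$, and an indefinite curvature term collecting the $\Lambda^i$. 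The crux is to show the coercive term dominates the curvature term: evaluating $y^\top S y$ and bounding the curvature contribution via $|x^i_t|\le\tilde x^0$ and a Young-type inequality --- while tracking the sign of $p_t''$, since convex prices ($p_t''>0$) are the only problematic case --- reduces positivity of $S$ to precisely $\min_{t,z}(p_t'(z)-\tilde x^0 p_t''(z)/8)>0$, i.e.\ \eqref{eq:bound_PEV_smon}, with the constant $1/8$ an artifact of this bounding. This sign-careful estimate is the step I expect to be delicate. As a byproduct, strong monotonicity of $F\NE$ on $\mc{X}$ makes each diagonal block $\nabla^2_{x^i}J^i$ positive definite, supplying the convexity of $J^i(\cdot,\sigma(x))$ in $x^i$ needed for Assumption~\ref{A1} (convexity in the Wardrop sense is immediate, as $J^i(x^i,z)=p(z)^\top x^i$ is linear in $x^i$).

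With all assumptions and monotonicity facts in hand, the four claims follow. Existence of a Nash and a Wardrop equilibrium comes from Lemma~\ref{lem:exun}(1) (each $\mc{X}^i$ is bounded) via Proposition~\ref{prop:vi_ref}, and the $\varepsilon$-Nash bound is Proposition~\ref{prop:conv_cost} with $\varepsilon=2\Dx L_2/\N=2n(\tilde x^0)^2 L_p/\N$, giving statement~1. Statement~2 is Theorem~\ref{thm:conv_strategies}(3) applied with $v^i=0$ and $p$ strongly monotone: the bound $\sqrt{2\Dx L_2/(\alpha\N)}$ becomes $\tilde x^0\sqrt{2nL_p/(\alpha\N)}$ after the footnoted substitution $L_2=\Dx L_p$ and $\Dx=\tilde x^0\sqrt n$. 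Statement~3 holds because $F\WE$ is monotone and Lipschitz on the compact $\mc{X}$ and $\textup{VI}(\mc{Q},F\WE)$ is solvable, so the extragradient scheme converges. Finally, statement~4 combines the strong monotonicity of $F\NE$ just established with its Lipschitz continuity on the compact $\mc{X}$ (immediate from $p\in C^2$), which is exactly the hypothesis of Theorem~\ref{thm:convergence_asp} guaranteeing convergence of Algorithm~\ref{alg:asp} for suitable $\tau$ as in \eqref{eq:condition_tau_APA}.
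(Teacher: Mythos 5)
Your proposal follows the paper's proof essentially step for step: the same verification of Assumptions~\ref{A1}--\ref{A3} with $\mc{X}^0=[0,\tilde x^0]^n$, $\Dx=\tilde x^0\sqrt n$, $L_2=\Dx L_p$; statement 1 via Lemma~\ref{lem:exun} and Proposition~\ref{prop:conv_cost}; statement 2 via Theorem~\ref{thm:conv_strategies}-3); statement 3 from monotonicity of $F\WE$ (the paper invokes Lemma~\ref{lem:FNstrongly monotone}-1) while you compute the Jacobian directly, an immaterial difference); and statement 4 via the same time-indexed block-diagonalization of $\nabla_x [\nabla_z p(\sigma(x))\, x^i]_{i=1}^\N$ followed by the same eigenvalue bound --- your ``Young-type inequality'' is exactly the paper's Lemma~\ref{lem:min_eigenval}, i.e.\ $\lambda_\textup{min}\bigl(x_t\ones[\N]^\top+\ones[\N]x_t^\top\bigr)\ge-\tilde x^0\N/4$. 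Your derivation of $p_t'>0$ from strict monotonicity plus \eqref{eq:bound_PEV_smon} is in fact more careful than the paper, which passes silently from ``strictly increasing'' to $\nabla_z p\succ0$.

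There is, however, a genuine gap at the crux, located precisely where you wave at ``tracking the sign of $p_t''$'': the parenthetical claim that convex prices are the only problematic case is false. The bound on $\lambda_\textup{min}$ controls the symmetric part $p_t' I_\N+\frac{p_t''}{2\N}\bigl(x_t\ones[\N]^\top+\ones[\N]x_t^\top\bigr)$ only when $p_t''\ge0$; when $p_t''<0$ the binding quantity is $\lambda_\textup{max}\bigl(x_t\ones[\N]^\top+\ones[\N]x_t^\top\bigr)$, which equals $2\tilde x^0\N$ at $x_t=\tilde x^0\ones[\N]$, so positive definiteness of the block requires $p_t'+\tilde x^0 p_t''>0$ --- a condition that \eqref{eq:bound_PEV_smon} does not supply (for $p_t''<0$ it is vacuous beyond $p_t'>0$). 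Concretely, take $n=1$, $\N=1$, $\tilde x^0=1$, $d_1=0$, $\kappa_1=1$, $p_1(z)=1+5z-2z^2$: this price is $C^2$, strictly increasing and Lipschitz on $[0,1]$ and satisfies \eqref{eq:bound_PEV_smon} (the minimum is $1.5$), yet $F\NE(x)=p_1(x)+p_1'(x)x=1+10x-6x^2$ has derivative $10-12x<0$ at $x=1$, so $F\NE$ is not even monotone on $\mc{X}=[0,1]$. To be fair, the paper's own proof has the identical blind spot: it applies Lemma~\ref{lem:min_eigenval} with no regard for the sign of $p_t''$, and its footnote and numerics even assert that $p_t'>0$, $p_t''<0$ suffices; you have faithfully reproduced the argument together with its flaw. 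Both versions are valid whenever $p_t''\ge0$ on $[0,\tilde x^0]$; covering $p_t''<0$ needs either the extra hypothesis $p_t'+\tilde x^0 p_t''>0$, or retaining the discarded positive semidefinite term $\frac{1}{\N}\ones[\N]\ones[\N]^\top\otimes\nabla_z p$, which along the direction $\ones[\N]$ only yields the requirement $(\N+1)p_t'+\tilde x^0 p_t''>0$ and hence rescues the claim only for $\N$ large enough, not for every game in the sequence as the corollary asserts.
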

\begin{proof}
1) We show that Assumption~\ref{A1} holds. Indeed the sets $\mc{X}^i$ in \eqref{eq:vehicle_constraint} are convex and compact, the function $g$ in~\eqref{eq:coupling_constraints_general} is affine and hence convex, and $\mc{Q}$ is non-empty by assumption.
For each $z$ fixed, the function $J^i(x^i,z)$ is linear hence convex in $x^i$. We prove in the last statement that $F\NE$ is strongly monotone. This is equivalent to $\nabla_x F\NE(x) \succ 0$ by Lemma~\ref{lemma:pd}, which by definition of $F\NE(x)$ implies $\nabla_{x\i} (\nabla_{x\i} J\i(x\i,\sigma(x))) \succ 0$, which implies convexity of $J\i(x\i,\sigma(x))$. Finally, $J^i(z_1,z_2)$ is continuously differentiable in $[z_1;z_2]$ because $p_t$ is twice continuously differentiable.
Having verified Assumption~\ref{A1}, Lemma \ref{lem:exun} guarantees the existence of a Nash and of a Wardrop equilibrium. The $\varepsilon$-Nash property is guaranteed by Proposition \ref{prop:conv_cost} upon verifying Assumption~\ref{A2}. This holds because: i) $\cup_{i=1}^\N \mathcal{X}^i\subseteq{ [0,\tilde x^0]^n}$, ii)  $J^i(z_1,z_2)$ is Lipschitz in $z_2$ on $[0,\tilde x^0]^n$ with Lipschitz constant $L_2 = R L_p$, iii) \eqref{eq:Lipschitz_implies_Lipschitz} holds and iv) $p_t$ is assumed Lipschitz in $[0,\tilde x^0]$ with Lipschitz constant $L_p$ for all $t$. We conclude by noting that $R = \tilde x^0 \sqrt{n}$.
\\
2) The fact that each $p_t$ is strictly increasing in $[0,\tilde x^0]$ implies that $\nabla_z p(z) \succ 0$ in $[0,\tilde x^0]^n$, where $p(z)\defeq\left[p_1(\frac{d_1 + z_1}{\kappa}),\ldots,p_n(\frac{d_n + z_n}{\kappa})\right]^\top$. In turn $\nabla_z p(z) \succ 0$ guarantees strong monotonicity of $p$ in $[0,\tilde x^0]^n$ by Lemma~\ref{lemma:pd}. This, together with Assumptions \ref{A1} and \ref{A2} verified above, allows us to use the third result in Theorem \ref{thm:conv_strategies}.
\\
3) Since $\mathcal X$ is closed and convex,~\cite[Theorem 12.1.11]{facchinei2007finite} guarantees that the extragradient algorithm converges to a Wardrop equilibrium if $F\WE$ is monotone, which follows from the first statement of Lemma~\ref{lem:FNstrongly monotone}.
\\
4)
Assumption 1, which has been shown to hold in the first statement, and Assumption 3, which trivially holds, allow us to use Theorem~\ref{thm:convergence_asp}, upon showing strong monotonicity of $F\NE$. We have proven in the third statement that $F\WE$ is monotone. According to~\eqref{eq:F_N_decomp}, to show strong monotonicity of $F\NE$ it is sufficient to show that under condition~\eqref{eq:bound_PEV_smon} the term $[\nabla_z p(z)_{|z=\sigma(x)} {x\i}  ]_{i=1}^\N$ is strongly monotone for all $x \in \mc{X}$, which is equivalent to $\nabla_x [\nabla_z p(z)_{|z=\sigma(x)} {x\i}  ]_{i=1}^\N \succ 0$  for all $x \in \mc{X}$ by Lemma~\ref{lemma:pd}.
We have
\begin{equation}
\begin{aligned}
&\nabla_x [\nabla_z p(z)_{|z=\sigma(x)} x\i]_{i=1}^\N = \\
&I_\N \otimes \nabla_z p(z)_{|z=\sigma(x)} + \frac{1}{\N} \ones[\N] \otimes \left([ \text{diag} \{p''_t(\sigma_t) x^i_t \}_{t=1}^n ]_{i=1}^{\N}\right)^\top \!\!\!,
\end{aligned}
\label{eq:PEV_proof_intermediate}
\end{equation}
where $\text{diag} \{p''_t(\sigma_t) x^i_t \}_{t=1}^n$ is the diagonal matrix whose entry in position $(t,t)$ is $p''_t(\sigma_t) x^i_t$. The permutation matrix $P = [[e_{t+(i-1)n}^\top]_{i=1}^\N]_{t=1}^n$ permutes~\eqref{eq:PEV_proof_intermediate} into block-diagonal form

\begin{align}
&P \nabla_x [\nabla_z p(z)_{|z=\sigma(x)} x\i]_{i=1}^\N P^\top = \label{eq:PEV_gradient_FN} \\
&\begin{small} \begin{bmatrix}
p_1'(\sigma_1) I_{\! \N} \!\!\! & & \\
& \!\!\!\! \ddots \!\!\!\! & \\
& & \!\! p_n'(\sigma_n) I_{\! \N} \!
\end{bmatrix}
\!\! + \! \frac{1}{\N} \!\!	
\begin{bmatrix}
p_1''(\sigma_1) x_1 \onestight[\N]^\top \!\! & & \\
& \!\!\!\! \ddots \!\!\!\!\! & \\
& & \!\!\! p_n''(\sigma_n) x_n \onestight[\N]^\top
\end{bmatrix} \end{small}
\end{align}
\noindent where $x_t  =  [x\i_t]_{i=1}^\N$. It suffices to show $p_t'(\sigma_t) I_\N + \frac{1}{\N} p''_t(\sigma_t) x_t \ones[\N]^\top \succ 0$ for all $t$.
By Lemma~\ref{lem:min_eigenval} in Appendix, $\lambda_\text{min} \left( x_t \ones[\N]^\top + \ones[\N] x_t^\top \right)/2 \ge -\frac{\tilde x^0 \N}{8}$, which ends the proof\footnote{The work~\cite{yin2011nash} studies an aggregative game and in~\cite[Lemma 3]{yin2011nash} it exploits expression~\eqref{eq:PEV_gradient_FN} to give conditions for $\nabla_x F\NE(x)$ to be a $P$-matrix, which in turn guarantees uniqueness of the Nash equilibrium in absence of coupling constraints. It is interesting to note that uniqueness in~\cite{yin2011nash} holds assuming $p'_t > 0, p''_t > 0$, whereas for us it suffices $p'_t > 0, p''_t < 0$.}.
\end{proof}
The average population strategy plays an important role in the EV application: indeed,~\cite[Theorem 6.1]{ma2013decentralized} shows in the same game setup that the average population strategy relative to a Nash equilibrium presents desirable properties for the grid operator. Nonetheless, if condition~\eqref{eq:bound_PEV_smon} is not satisfied, a Nash equilibrium cannot be achieved; it is instead possible to achieve a Wardrop equilibrium with the extragradient algorithm.
The second statement of Corollary~\ref{cor:pev} then provides guarantees on the distance between the average population strategies at the Nash and at the Wardrop equilibrium.
\subsection*{Uniqueness of dual variables.}
Corollary~\ref{cor:pev} shows that under condition~\eqref{eq:bound_PEV_smon} the operator $F\NE$ of $\mathcal{G}^\textup{EV}_\N$ is strongly monotone, hence the game $\mathcal{G}^\textup{EV}_\N$ admits a unique variational Nash equilibrium (Lemma~\ref{lem:exun}). We study here the uniqueness of the associated dual variables $\bar \lambda\NE$ introduced in Proposition~\ref{prop:ext_vi}. Guaranteeing unique dual variables might be important to convince the vehicle owners to participate in the proposed scheme, as the dual variables represent the penalty price associated to the coupling constraint. 
Define $R^\text{tight} \subseteq \{1,\dots,n\}$ as the set of instants in which the coupling constraint $\mc{C}$ is active. We provide a sufficient condition for uniqueness of the dual variables which relies on a modification of the linear-independence constraint qualification~\cite{Wachsmuth2013}.
\\
\begin{proposition}
\label{prop:uniqueness_for_PEVs}
Assume that condition~\eqref{eq:bound_PEV_smon} holds and consider the unique variational Nash equilibrium $\VNE{x}$ of $\mc{G}^\textup{EV}_M$. If there exists a vehicle $i$ such that 
\begin{itemize}
\item $\bar x_{\textup{N},t}^i \notin \{0,\tilde x\i_t \} \quad \text{for all} \: t \in R^\textup{tight}$ and
\item $\bar x_{\textup{N},t'}^i \notin \{0,\tilde x\i_{t'} \} \quad \text{for some} \: t'\notin R^\textup{tight}$,
\end{itemize}
then the dual variables $\VNE{\lambda}$ associated to the coupling constraint~\eqref{eq:coupling_global_PEVs} are unique. 
\qed
\end{proposition}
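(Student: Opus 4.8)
The plan is to read the dual variables $\VNE{\lambda}$ off the KKT system of the variational Nash equilibrium and to show that the two hypotheses force that system to admit a unique $\VNE{\lambda}$. Condition~\eqref{eq:bound_PEV_smon} makes $F\NE$ strongly monotone (Corollary~\ref{cor:pev}), so by Lemma~\ref{lem:exun} the point $\VNE{x}$ is the \emph{unique} solution of VI$(\mc{Q},F\NE)$; consequently every gradient evaluated at $\VNE{x}$ is a fixed constant and only the multipliers are unknown. First I would attach, to each agent $i$, a scalar multiplier $\xi\i\ge0$ for the demand constraint $\sum_t x\i_t\ge\theta\i$, multipliers $\mu\i_t,\nu\i_t\ge0$ for the box constraints $x\i_t\le\tilde x\i_t$ and $-x\i_t\le0$, and the shared multiplier $\bar\lambda_{\textup{N},t}\ge0$ for the $t$-th coupling row (note that here $A_{(:,i)}=I_n$, so there are exactly $n$ coupling rows and $\VNE{\lambda}\in\R^n_{\ge0}$). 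Writing $c\i_t\defeq[\nabla_{x\i}J\i(\VNE{x}\i,\sigma(\VNE{x}))]_t$, a fixed number, the variational equilibrium condition of $\mc{G}(\VNE{\lambda})$ from Proposition~\ref{prop:ext_vi} gives, for vehicle $i$ and slot $t$, the stationarity equation
\[
c\i_t + \bar\lambda_{\textup{N},t} + \mu\i_t - \nu\i_t - \xi\i = 0 .
\]

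The core of the argument is a propagation across the time slots of the single distinguished vehicle $i$, exploiting the structural fact that the only multiplier linking distinct slots of one agent is the scalar $\xi\i$: the box multipliers are per-slot and the coupling multipliers decouple across $t$. For every $t\notin R^\textup{tight}$ the coupling constraint is slack, so complementary slackness forces $\bar\lambda_{\textup{N},t}=0$; hence $\VNE{\lambda}$ is automatically pinned on the complement of $R^\textup{tight}$. At the distinguished slack instant $t'$ the second hypothesis yields $\bar x_{\textup{N},t'}^i\in(0,\tilde x\i_{t'})$, so $\mu\i_{t'}=\nu\i_{t'}=0$, and the stationarity equation collapses to $c\i_{t'}-\xi\i=0$, which determines $\xi\i=c\i_{t'}$ uniquely. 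At each $t\in R^\textup{tight}$ the first hypothesis gives $\bar x_{\textup{N},t}^i\in(0,\tilde x\i_t)$, so again $\mu\i_t=\nu\i_t=0$ and
\[
\bar\lambda_{\textup{N},t} = \xi\i - c\i_t = c\i_{t'} - c\i_t .
\]
Since the right-hand side is a fixed function of $\VNE{x}$, the tight-time multipliers are determined; together with $\bar\lambda_{\textup{N},t}=0$ off $R^\textup{tight}$, this fixes $\VNE{\lambda}$ entirely, so any admissible dual vector coincides with the one just constructed.

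The delicate part is not the propagation itself but assembling the KKT system correctly and checking that strict primal interiority zeroes out exactly the right multipliers. One must verify that $c\i_t$ is a genuine constant (which rests on uniqueness of $\VNE{x}$), that the $t$-th coupling row contributes only $\bar\lambda_{\textup{N},t}$ to vehicle $i$'s $t$-th equation, and that the single scalar $\xi\i$ is the unique cross-slot link. The hypotheses are phrased as strict interiority $\bar x_{\textup{N},t}^i\notin\{0,\tilde x\i_t\}$ precisely so that, by complementary slackness, the active constraints at vehicle $i$'s optimum reduce to the coupling rows over $R^\textup{tight}$ together with (possibly) the demand constraint, whose gradients $\{e_t\}_{t\in R^\textup{tight}}$ and $-\ones[n]$ are linearly independent; this is the concrete instance of the modified linear-independence constraint qualification of~\cite{Wachsmuth2013} that yields the uniqueness. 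Conversely, were there no slack instant $t'$ with inactive box constraints, $\xi\i$ could not be anchored and the tight-time multipliers could be translated jointly by a common constant, which is exactly the degeneracy the two conditions are designed to exclude.
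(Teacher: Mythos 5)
Your proof is correct, and it reaches the conclusion by a route that differs from the paper's in execution, though it rests on the same structural insight. Both arguments start from the same KKT system for VI$(\mc{Q},F\NE)$ (equivalently, from statement 2 of Proposition~\ref{prop:ext_vi}): the paper keeps that system in aggregated form, restricts it to the active constraints, and invokes the auxiliary Lemma~\ref{lem:sufficient_for_uniqueness_subvector}, whose hypothesis (that the homogeneous system $\tilde \Gamma_1^\top \tilde \mu + \tilde \Gamma_2^\top \tilde \lambda = 0$ forces $\tilde \lambda = 0$) is verified by contradiction using exactly the two bullet conditions. You instead solve the distinguished vehicle's stationarity rows explicitly: complementarity kills $\bar\lambda_{\textup{N},t}$ off $R^\textup{tight}$, the interior slack slot $t'$ anchors $\xi\i = c\i_{t'}$, and the interior tight slots then give the closed form $\bar\lambda_{\textup{N},t} = c\i_{t'} - c\i_t$. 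What your elimination buys is an explicit formula for the dual variables (uniqueness follows a fortiori, and nonnegativity of the formula is automatic because a dual vector is known to exist by Proposition~\ref{prop:ext_vi}); what the paper's lemma buys is a reusable linear-algebra statement that separates the combinatorial verification from the uniqueness mechanism. The underlying insight --- that the scalar $\xi\i$ is the only multiplier coupling vehicle $i$'s slots, so one interior tight slot and one interior slack slot pin down all of $\VNE{\lambda}$ --- is identical, and unwinding the paper's kernel argument reproduces your propagation almost verbatim.

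One imprecision is worth fixing, though it is not load-bearing: your remark that ``the active constraints at vehicle $i$'s optimum reduce to the coupling rows over $R^\textup{tight}$ together with (possibly) the demand constraint'' does not follow from the hypotheses. The two bullets give interiority only at the slots in $R^\textup{tight}\cup\{t'\}$; at the remaining slots outside $R^\textup{tight}$ the box constraints of vehicle $i$ may well be active (for instance wherever $\tilde x\i_t = 0$ forces $\bar x_{\textup{N},t}^i = 0$, in which case both box gradients $e_t$ and $-e_t$ are active and linearly dependent, so full LICQ for vehicle $i$ can fail). Your actual propagation never uses those slots, so the proof stands, but the constraint-qualification framing should be stated only for the rows indexed by $R^\textup{tight}\cup\{t'\}$.
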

The proof is reported in the Appendix.
We note that the sufficient condition of Proposition~\ref{prop:uniqueness_for_PEVs} is to be verified a-posteriori; in other words, it depends on the primal solution $\VNE{x}$. In the numerical analysis presented in the following such sufficient condition always holds. Uniqueness of the dual variables associated to the coupling constraint of an aggregative game has been studied also in~\cite[Theorem 4]{yin2011nash}, where the conditions in the bullets of Proposition~\ref{prop:uniqueness_for_PEVs} are not required but $p$ is restricted to be affine.
\subsection{Numerical analysis}\label{sec:pev_num}
The numerical study is conducted on a heterogeneous population of agents. We set the price function to  $p_t(z_t)=0.15 \sqrt{(d_t+\sigma_t(x)) / \kappa_t}$ and $n=24$. The agents differ in $\theta\i$, randomly chosen according to $\mathcal{U}[0.5,1.5]$; they also differ in $\tilde x\i_t$, which is chosen such that the charge is allowed in a connected interval, with left and right endpoints uniformly randomly chosen: within the interval, $\tilde x\i_t$ is constant and randomly chosen for each agent, according to $\mathcal{U}[1,5]$; outside this interval, $\tilde x\i_t = 0$.
The demand $d_t$ is taken as the typical (non-EV) base demand over a summer day in the United States~\cite[Figure 1]{ma2013decentralized}; $\kappa_t=12$ kW for all $t$, and the upper bound $K_t=0.55$ kW is chosen such that the coupling constraint~\eqref{eq:coupling_global_PEVs} is active in the middle of the night.
Note that with these choices all the assumptions of Corollary~\ref{cor:pev} are met. In particular, for the given choice of $p$ condition~\eqref{eq:bound_PEV_smon} holds because $p''_t(z) < 0$ for all $z$ and all $t$. Figure~\ref{fig:primal_variables} presents the aggregate consumption at the Nash equilibrium found by Algorithm~\ref{alg:asp}, with stopping criterion $\|(x_{(k+1)},\lambda_{(k+1)})-(x_{(k)},\lambda_{(k)})\|_{\infty} \le 10^{-4}$.
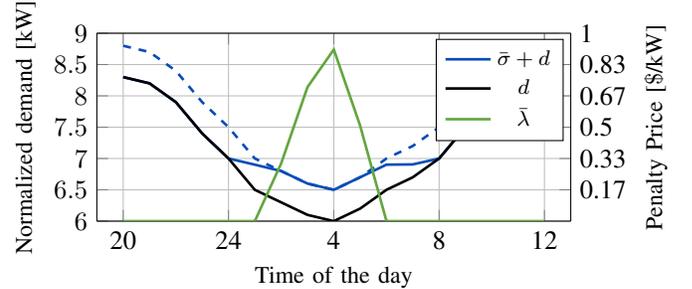
\begin{figure} [h!] 
\begin{center}
\newlength\figureheight 
\newlength\figurewidth 
\setlength\figureheight{2.5cm} 
\setlength\figurewidth{6.3cm} 
%
%
\definecolor{mycolor1}{rgb}{0.03,0.3,0.72}%
\definecolor{PortlandGreen}{RGB}{99,166,63}%
\begin{tikzpicture}

\begin{axis}[%
width=\figurewidth,
height=\figureheight,
axis y line* = left,
scale only axis,
xmin=7, xmax=25, xtick={8,12,16,20,24}, xticklabels={{20},{24},{4},{8},{12}}, xlabel={\small Time of the day}, xmajorgrids,
ymin=6, ymax=9, ytick={6,6.5,7,7.5,8,8.5,9}, ylabel={\small Normalized demand [kW]}, ylabel near ticks, ymajorgrids,
]
\addplot [color=mycolor1,solid,line width=1.0pt]
  table[row sep=crcr]{%
8	8.30000000390642\\
9	8.20000000478831\\
10	7.9000000074229\\
11	7.40000002443321\\
12	7.00003584722159\\
13	6.90146739363718\\
14	6.80006156578355\\
15	6.60004473212242\\
16	6.50001520367497\\
17	6.70001392123971\\
18	6.90188130551423\\
19	6.90551484838831\\
20	7.0000246029291\\
21	7.50000000931836\\
22	7.60000000381587\\
23	7.80000000115141\\
24	8.1\\
};\label{plot_one}

\addplot [color=black,solid,line width=1.0pt]
  table[row sep=crcr]{%
8	8.3\\
9	8.2\\
10	7.9\\
11	7.4\\
12	7\\
13	6.5\\
14	6.3\\
15	6.1\\
16	6\\
17	6.2\\
18	6.5\\
19	6.7\\
20	7\\
21	7.5\\
22	7.6\\
23	7.8\\
24	8.1\\
};\label{plot_two}

\addplot [color=mycolor1,dashed,line width=1.0pt]
  table[row sep=crcr]{%
8	8.8\\
9	8.7\\
10	8.4\\
11	7.9\\
12	7.5\\
13	7\\
14	6.8\\
15	6.6\\
16	6.5\\
17	6.7\\
18	7\\
19	7.2\\
20	7.5\\
21	8\\
22	8.1\\
23	8.3\\
24	8.6\\
};
\end{axis}

\begin{axis}[
width=\figurewidth,
height=\figureheight,
axis y line*=right,
ymin = 0, ymax = 1, ytick={0.16666666,0.3333333,0.5,0.666666,0.83333333,1}, yticklabels={0.17,0.33,0.5,0.67,0.83,1}, ylabel={\small Penalty Price [${\$}$/{kW}]}, ylabel near ticks,
xmin=7, xmax=25, axis x line=none,
scale only axis,
legend style={at={(0.985,0.97)},font=\footnotesize}
     ]
\addplot [color=PortlandGreen,solid,line width=1.0pt,forget plot]
  table[row sep=crcr]{%
8	0\\
9	0\\
10	0\\
11	0\\
12	0\\
13	0\\
14	0.306979585957893\\
15	0.713221133089155\\
16	0.913221133089153\\
17	0.508096714131238\\
18	0\\
19	0\\
20	0\\
21	0\\
22	0\\
23	0\\
24	0\\
};\label{plot_three}
\addlegendimage{/pgfplots/refstyle=plot_one}\addlegendentry{$\bar{\sigma}+d$}
\addlegendimage{/pgfplots/refstyle=plot_two}\addlegendentry{$d$}
\addlegendimage{/pgfplots/refstyle=plot_three}\addlegendentry{$\bar{\lambda}$}
\end{axis}

\end{tikzpicture}%
\caption{\small Aggregate EV demand $\sigma(\VNE{x})$ and dual variables $\bar \lambda\NE$ for $\N=100$, subject to $\sigma(x) \le 0.55$ kW. The region below the dashed line corresponds to $\sigma(x)+d \le  0.55$ kW$+d$.}
\label{fig:primal_variables}
\end{center}
\end{figure}
Note that without the coupling constraint the quantity $\bar \sigma + d$ would be constant overnight, as shown in~\cite{ma2013decentralized}. Figure \ref{fig:distance_aggregates} illustrates the bound $\| \sigma(\VNE{x}) - \sigma(\VWE{x}) \| \le \tilde x^0 \sqrt{\frac{2 n L_p}{\alpha M}}$ of the second statement of Corollary~\ref{cor:pev}. The Wardrop equilibrium is computed with the extragradient algorithm with stopping criterion $\|(x_{(k+1)},\lambda_{(k+1)})-(x_{(k)},\lambda_{(k)})\|_{\infty} \le 10^{-4}$. The $\varepsilon$-Nash property of the Wardrop equilibrium in Proposition~\ref{prop:conv_cost} can also be illustrated; a plot is omitted here for reasons of space.
\begin{figure} [h]
\begin{center}
\setlength\figureheight{2.0cm} 
\setlength\figurewidth{7.cm} 
%
%
\definecolor{mycolor1}{rgb}{0.00000,0.54118,0.90196}%
\begin{tikzpicture}

\begin{axis}[%
width=\figurewidth,
height=\figureheight,
scale only axis,
xmin=0,
xmax=800,
ymin=0,
ymax=0.15,
xtick={0,100,200,300, 400, 500, 600, 700, 800},
tick label style={/pgf/number format/fixed},
ytick={0,0.03,0.06,0.09,0.12,0.15},
xmajorgrids,
ymajorgrids,
xlabel ={Population size $M$},
legend style={at={(0.97,0.95)},legend cell align=left,align=left,draw=white!15!black,font=\footnotesize}
]
\definecolor{mycolor1}{rgb}{0.00000,0.54118,0.90196}

\addplot [color=black,solid,line width=1.0pt]
  table[row sep=crcr]{%
50	0.0961423497667815\\
100	0.0549965612736588\\
150	0.0368506226735783\\
200	0.0281709955720355\\
250	0.0233737077229369\\
300	0.0218642951846739\\
350	0.0180020163292226\\
400	0.0162783969936725\\
450	0.01454678132398\\
500	0.0119194105921108\\
550	0.0115542608715654\\
600	0.0101954546561331\\
650	0.0101114713636592\\
700	0.00906486857544332\\
750	0.00816693178529436\\
800	0.00792344623874586\\
};
\addlegendentry{\footnotesize $\|\sigma(\bar x_{_{N}})-\sigma(\bar x_{_{W}})\|$}

\addplot [dashed,line width=1.0pt]
  table[row sep=crcr]{%
50	0.14142135623731\\
100	0.1\\
150	0.0816496580927726\\
200	0.0707106781186548\\
250	0.0632455532033676\\
300	0.0577350269189626\\
350	0.0534522483824849\\
400	0.05\\
450	0.0471404520791032\\
500	0.0447213595499958\\
550	0.0426401432711221\\
600	0.0408248290463863\\
650	0.0392232270276368\\
700	0.0377964473009227\\
750	0.0365148371670111\\
800	0.0353553390593274\\
};
\addlegendentry{\footnotesize $1/\sqrt{M}$}

\end{axis}
\end{tikzpicture}%
\vspace{-0.1cm}
\caption{\small Distance between the aggregates $\sigma(\VNE{x})$ and  $\sigma(\VWE{x})$ at the Nash and Wardrop equilibrium (solid line). Corollary~\ref{cor:pev} ensure that such distance is upper bounded by  $K/\sqrt{M}$ for $K=\tilde x^0 \sqrt{2 n L_p/\alpha}$. The dotted line shows $1/\sqrt{M}$  proving that our bound has the right trend, while the constant  $K$ is, in this case, conservative.}
\label{fig:distance_aggregates}
\end{center}
\end{figure}
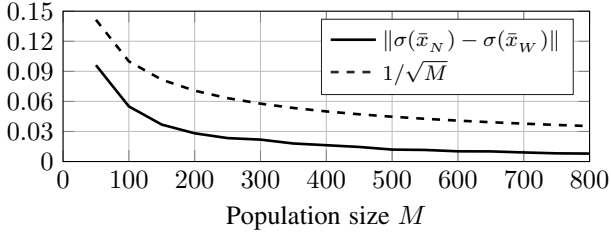
The framework introduced above can also be used to enforce local coupling constraints, i.e. constraints on a subset of all the vehicles. These can for instance be used to model capacity limits for local substations. We refer the reader to~\cite[Section VI]{paccagnan2016distributed} for a more detailed analysis.
\subsection*{Quadratic cost function}
Different works in the EV literature~\cite{grammatico:parise:colombino:lygeros:14,kristoffersen2011optimal} use the quadratic cost~\eqref{eq:costs_quad}, with $Q \succ 0$ and $C\succ 0$, diagonal. Existence of a Nash and of a Wardrop equilibrium is guaranteed by Lemma~\ref{lem:exun}, while Proposition~\ref{prop:conv_cost} gives the $\varepsilon$-Nash property.  Further, Lemma \ref{lem:quadratic} shows that the resulting operators $F\NE$ and $F\WE$ are strongly monotone with monotonicity constant independent from $\N$. Theorem \ref{thm:conv_strategies} ensures then that $\| \VNE{x} - \VWE{x} \| \le L_2/(\alpha \sqrt{\N})$, with $L_2 = R \cdot \lambda_\textup{max}(C)$.
A Nash equilibrium can be found using Algorithm \ref{alg:asp}, while a Wardrop equilibrium can be achieved using both Algorithm \ref{alg:two} and \ref{alg:asp}. Figure~\ref{fig:iterations_x} presents a comparison between the two algorithms in terms of iteration count, where $Q=0.1 I_n$, $C=I_n$, $c^i=d\; \text{for all} \; i$. Figure~\ref{fig:iterations_x} (top) represents the number of strategy updates required to converge, i.e. the number of times  \eqref{eq:inf_inner} or \eqref{eq:apa_inner_b} is used. Figure~\ref{fig:iterations_x} (bottom) depicts the number of dual variables updates, i.e. the number of times \eqref{eq:outer} or \eqref{eq:apa_inner_c} is used. For both algorithms the number of iterations does not seem to increase  with the population size. Algorithm  \ref{alg:asp} requires fewer primal iterations, while Algorithm~\ref{alg:two} needs much fewer dual iterations.
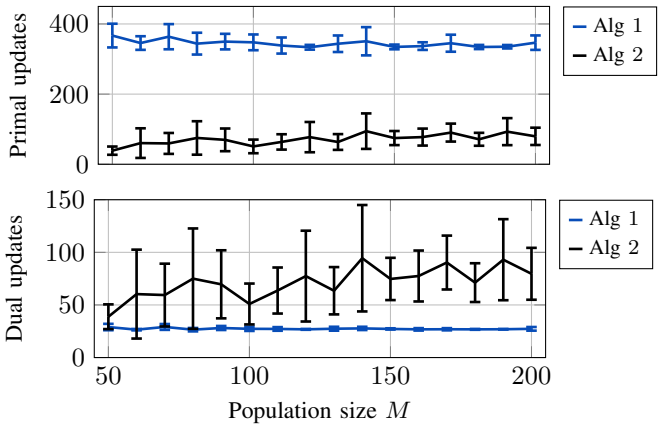
\begin{figure} [h!]
\begin{center}
\setlength\figureheight{2.1cm} 
\setlength\figurewidth{6cm}  
%
%
\begin{tikzpicture}
\definecolor{mycolor1}{rgb}{0.03,0.3,0.72}%
\definecolor{mycolor2}{rgb}{0.93333,0.69804,0.00000}%
\begin{axis}[%
width=\figurewidth,
height=\figureheight,
scale only axis,
xmin=45,
xmax=205,
xmajorgrids,
ymin=0,
ymax=450,
ymajorgrids,
xticklabels={,,},
ylabel = {\small Primal updates},
legend pos = outer north east,
legend image post style={scale=0.3},
]
\addplot[color=mycolor1,line width=1.0pt]
 plot [error bars/.cd, y dir = both, y explicit, error bar style={line width=1pt},   error mark options={
      rotate=90,
      line width=1pt}]
 table[row sep=crcr, y error plus index=2, y error minus index=3]{%
50	366.9	33.8214429023955	33.8214429023955\\
60	345.6	19.4020617461135	19.4020617461135\\
70	363.7	35.8637700193385	35.8637700193385\\
80	343.9	31.1077160845987	31.1077160845987\\
90	349.7	22.1858062733812	22.1858062733812\\
100	347.6	22.4686448189471	22.4686448189471\\
110	338.5	22.9314194937863	22.9314194937863\\
120	333.6	6.78527818147495	6.78527818147495\\
130	343.6	23.5592869162036	23.5592869162036\\
140	350.7	40.2369233416274	40.2369233416274\\
150	334.5	6.72681202353685	6.72681202353685\\
160	336.9	10.8761206319165	10.8761206319165\\
170	345.1	24.2670558576849	24.2670558576849\\
180	334.3	5.93380147965872	5.93380147965872\\
190	335.1	4.92848861214064	4.92848861214064\\
200	346.7	20.6060670677352	20.6060670677352\\
};
\addlegendentry{\footnotesize Alg $1$};

\addplot [solid,line width=1.0pt]
 plot [error bars/.cd, y dir = both, y explicit, error bar style={line width=1pt},   error mark options={
      rotate=90,
      line width=1pt}]
 table[row sep=crcr, y error plus index=2, y error minus index=3]{%
50	38.8	11.7456374880208	11.7456374880208\\
60	60.3	42.2470117286418	42.2470117286418\\
70	59.4	29.8167738026769	29.8167738026769\\
80	75	47.6969600708473	47.6969600708473\\
90	69.6	32.3425416440947	32.3425416440947\\
100	50.9	19.423954283307	19.423954283307\\
110	63.7	21.8771570365073	21.8771570365073\\
120	77.4	43.1490440218552	43.1490440218552\\
130	63.5	22.477766792989	22.477766792989\\
140	94.4	50.5711380136931	50.5711380136931\\
150	74.7	20.0800896412342	20.0800896412342\\
160	77.5	24.2002066106883	24.2002066106883\\
170	90.3	25.6009765438743	25.6009765438743\\
180	71.2	18.4325798519903	18.4325798519903\\
190	93	38.5201246103903	38.5201246103903\\
200	79.6	24.654411369976	24.654411369976\\
};
\addlegendentry{\footnotesize{Alg $2$}};
\end{axis}
\end{tikzpicture}%
%
%
\begin{tikzpicture}
\definecolor{mycolor1}{rgb}{0.03,0.3,0.72}%
\definecolor{mycolor2}{rgb}{0.93333,0.69804,0.00000}%
\begin{axis}[%
width=\figurewidth,
height=\figureheight,
scale only axis,
xmin=45,
xmax=205,
xmajorgrids,
ymin=0,
ymax=150,
ymajorgrids,
ylabel = {\small Dual updates},
xlabel = {\small Population size $M$},
legend pos = outer north east,
legend image post style={scale=0.3},
]
\addplot [color=mycolor1,solid,line width=1.0pt]
 plot [error bars/.cd, y dir = both, y explicit, error bar style={line width=1pt},   error mark options={
      rotate=90,
      line width=1pt}]
 table[row sep=crcr, y error plus index=2, y error minus index=3]{%
50	29	3.03315017762062	3.03315017762062\\
60	26.5	0.670820393249937	0.670820393249937\\
70	29.1	2.77308492477241	2.77308492477241\\
80	26.5	1.62788205960997	1.62788205960997\\
90	28.1	1.92093727122985	1.92093727122985\\
100	27.3	2.00249843945008	2.00249843945008\\
110	27.2	1.72046505340853	1.72046505340853\\
120	26.8	0.4	0.4\\
130	27.4	1.85472369909914	1.85472369909914\\
140	27.7	1.552417469626	1.552417469626\\
150	27.2	0.6	0.6\\
160	26.8	1.16619037896906	1.16619037896906\\
170	26.9	1.04403065089105	1.04403065089105\\
180	26.8	0.4	0.4\\
190	26.9	0.3	0.3\\
200	27.3	1.73493515728975	1.73493515728975\\
};
\addlegendentry{\footnotesize{Alg $1$}};

\addplot[solid,line width=1.0pt]
 plot [error bars/.cd, y dir = both, y explicit, error bar style={line width=1pt},   error mark options={
      rotate=90,
      line width=1pt}]
 table[row sep=crcr, y error plus index=2, y error minus index=3]{%
50	38.8	11.7456374880208	11.7456374880208\\
60	60.3	42.2470117286418	42.2470117286418\\
70	59.4	29.8167738026769	29.8167738026769\\
80	75	47.6969600708473	47.6969600708473\\
90	69.6	32.3425416440947	32.3425416440947\\
100	50.9	19.423954283307	19.423954283307\\
110	63.7	21.8771570365073	21.8771570365073\\
120	77.4	43.1490440218552	43.1490440218552\\
130	63.5	22.477766792989	22.477766792989\\
140	94.4	50.5711380136931	50.5711380136931\\
150	74.7	20.0800896412342	20.0800896412342\\
160	77.5	24.2002066106883	24.2002066106883\\
170	90.3	25.6009765438743	25.6009765438743\\
180	71.2	18.4325798519903	18.4325798519903\\
190	93	38.5201246103903	38.5201246103903\\
200	79.6	24.654411369976	24.654411369976\\
};
\addlegendentry{\footnotesize Alg $2$};
\end{axis}
\end{tikzpicture}
\caption{\small Primal (top) and dual (bottom) updates required to converge; mean and standard deviation for $10$ repetitions. As each step of Algorithm~\ref{alg:asp} performs one primal and one dual update, the two black lines (top and bottom) coincide.}
\label{fig:iterations_x}
\end{center}
\end{figure}
\section{Route choice in a road network}
\label{sec:traffic}
As second application we study a population of drivers interacting in a road network. Our model differs from~\cite{correa2011wardrop} in the cost function~\eqref{eq:cost_traffic}, where we introduce a term penalizing the deviation from a preferred route. We assume that the travel time on each road depends only on the traffic on that road, whereas~\cite{dafermos1980traffic} considers also upstream and downstream influence. While most traffic literature focuses solely on the Wardrop equilibrium~\cite{correa2011wardrop,dafermos1980traffic}, we also study the Nash equilibrium and illustrate the distance between the two.

We consider a strongly-connected directed graph $(\mc{V},\mc{E})$ with vertex set $\mc{V} = \{1,\dots,V\}$, representing geographical locations, and directed edge set $\mc{E} = \{1,\dots,E\} \subseteq \mc{V} \times \mc{V}$, representing roads connecting the locations. Each agent $i \in \{1,\dots,\N\}$ represents a driver who wants to drive from his origin $o^i \in \mc{V}$ to his destination $d^i \in \mc{V}$.
\subsection*{Constraints}
Let us introduce the vector $x^i \in [0,1]^E$ to describe the strategy (route choice) of agent $i$, with $[x^i]_e$ representing the probability that agent $i$ transits on edge $e$~\cite{de2005route}. To guarantee that agent $i$ leaves his origin and reaches his destination with probability 1, the strategy $x\i$ has to satisfy
\begin{equation}
\sum_{e \in \text{in}(v)} [x\i]_e - \sum_{e \in \text{out}(v)} [x\i]_e =
\begin{cases}
-1 &\text{if} \quad v = o^i \\
1 &\text{if} \quad v = d^i \\
0 &\text{otherwise},
\end{cases} \qquad \forall \; v \in \mc{V},
\label{eq:traffic_constraints}
\end{equation}
where $\text{in}(v)$ and $\text{out}(v)$ represent the set of in-edges and the set of out-edges of node $v$. We denote  the graph incidence matrix by $B \in \mb{R}^{V \times E}$,  so that $[B]_{ve} = 1$ if edge $e$ points to vertex $v$, $[B]_{ve} = -1$ if edge $e$ exits vertex $v$ and $[B]_{ve} = 0$ otherwise. The individual constraint set of agent $i$ is then
\begin{equation}
\mc{X}^i \defeq \{x \in [0,1]^E : Bx = b\i \},
\label{eq:local_traffic}
\end{equation}
where $b^i \in \mb{R}^V$ is such that $[b^i]_v = -1$ if $v = o^i$,  $[b^i]_v = 1$ if $v = d^i$ and $[b^i]_v = 0$ otherwise. We introduce the constraint
\begin{equation}
x\in\mc{C}\defeq\{x\in\R^{\N E}\mid \textstyle \frac{1}{\N}\sum_{i=1}^\N x\i_e \le K_e,  \, \forall \, e=1,\dots,E\},
\label{eq:coupling_global_traffic}
\end{equation}
expressing the fact that the number of vehicles on edge $e$ cannot exceed $\N K_e$. Such coupling constraint can be imposed by authorities to decrease the congestion in a specific road or neighborhood, with the goal of reducing noise or pollution.
\subsection*{Cost function}
We assume that each driver $i\in\{1,\dots,\N\}$ wants to minimize his travel time and, at the same time, does not want to deviate too much from a preferred route $\tilde x^i \in \mc{X}\i$. We model this objective with the following cost function 
\begin{equation}
J^i(x\i,\sigma(x)) = \frac{\gamma^i}{2} \|x^i-\tilde x^i\|^2 + \sum_{e=1}^E t_e(\sigma_e(x_e)) x\i_e,
\label{eq:cost_traffic}
\end{equation}
with $\gamma^i\ge0$ a weighting factor, $x_e \defeq [x^1_e,\ldots,x^\N_e]^\top$, $\sigma_e(x_e) = \frac{1}{\N} \sum_{i=1}^{\N} x\i_e$ and $t_e(\sigma_e(x_e))$ the travel time on edge $e$.
\subsection*{Travel time}
This subsection is devoted to the derivation of the analytical expression of the travel time $t_e(\sigma_e(x_e))$. The reader not interested in the technical details of the derivation can jump to the expression of $t_e(\sigma_e(x_e))$ in~\eqref{eq:travel_time_smoothed}, which is illustrated in Figure~\ref{fig:t_cont_diff}.
We introduce the quantity $D_e(x_e) = \sum_{i=1}^\N x\i_e$ to describe the total demand on edge $e$. We consider a rush-hour interval $[0,h]$ and we assume that the instantaneous demand equals ${D_e(x_e)}/{h}$ at any time $t\in[0,h]$ and zero for $t > h$. We assume that edge $e$ can support a maximum flow $F_e$ (vehicles per unit of time) and features a free-flow travel time $t_{e,\text{free}}$. As we are interested in comparing populations of different sizes, we further assume that the peak hour duration $h$ is independent from the population size $\N$ and that  the road maximum capacity flow $F_e$ scales linearly with the population size, i.e. $F_e(\N) = f_e \cdot \N$, with $f_e$ constant in $\N$. The consideration underpinning this last assumption is that the road infrastructure scales with the number of vehicles to accommodate the increasing demand, similarly as what assumed in~\cite{ma2013decentralized} for the energy infrastructure.

If $D_e(x_e)/h\le F_e$ then every car has instantaneous access to edge $e$ and no queue accumulates, hence the travel time equals $t_{e,\text{free}}$. We focus in the rest of this paragraph on the case $D_e(x_e) / h> F_e$. An increasing queue forms in the interval $[0,h]$ and decreases at rate $F_e$ for $t>h$. The number of vehicles $q_e(t)$ queuing on edge $e$ at time $t$ obeys then the dynamics
\begin{equation}
\dot q_e(t) =  \begin{cases} \frac{D_e(x_e)}{h}\cdot \boldsymbol{1}_{[0,h]}(t) - F_e & \text{if} \; q_e(t) \ge 0 \\
\; 0  & \text{otherwise}, \end{cases} \quad q_e(0) = 0,
\label{eq:queue_dynamics}
\end{equation}
where $\boldsymbol{1}_{[0,h]}$ is the indicator function of $[0,h]$. The solution $q_e(t)$ to~\eqref{eq:queue_dynamics} is hence
\begin{equation}
q_e(t) =  \begin{cases} \left(\frac{D_e(x_e)-F_e h}{h}\right) t \quad &\text{if} \; 0 \le t \le h \\
D_e(x_e) - F_e \, t &\text{if} \; h \le  t \le D_e(x_e) / F_e \\
\; 0  & \text{if} \; t \ge D_e(x_e) / F_e. \end{cases}
\label{eq:queue_expression}
\end{equation}
As a consequence, the total queuing time at edge $e$ (i.e, the queuing times summed over all vehicles) is the integral of $q_e(t)$, which equals $D_e(x_e) (D_e(x_e)-F_e h) / (2F_e)$; the queuing time is then $(D_e(x_e)-F_e h)/(2F_e)$.

Since $\sigma_e(x_e) \! = \! \frac 1\N \sum_{i=1}^\N x\i_e \! = \! \frac 1\N D_e(x_e)$, the travel time is
\begin{equation}
t_e^\textup{PWA}(\sigma_e(x_e)) =  \begin{cases} t_{e,\text{free}} & \text{if} \; \sigma_e(x_e) \le f_e h \\
t_{e,\text{free}} + \frac{\sigma_e(x_e)-f_e h}{2f_e}  & \text{otherwise,} \end{cases}
\label{eq:travel_time}
\end{equation}
and is reported in Figure~\ref{fig:t_cont_diff}. Note that $t_e^\textup{PWA}$ is a continuous and piece-wise affine function of $\sigma_e(x_e)$, but it is not continuously differentiable, hence Assumption~\ref{A1} would not hold. Therefore, we define $t_e$ appearing in~\eqref{eq:cost_traffic} as the smoothed version of $t_e^\textup{PWA}$
\begin{equation}
t_e(\sigma_e(x_e)) \! = \! \begin{cases} t_{e,\text{free}} & \text{if} \; \sigma_e(x_e) \le f_e h - \Delta_e \\
t_{e,\text{free}} + \frac{\sigma_e(x_e)-f_e h}{2f_e} & \text{if} \; \sigma_e(x_e) \ge f_e h + \Delta_e \\
a \sigma_e(x_e)^2 \! + \! b \sigma_e(x_e) \! + \! c \!  & \text{otherwise,} \end{cases}
\label{eq:travel_time_smoothed}
\end{equation}
where the values of $\Delta_e$, $a$, $b$, $c$ are such that $t_e$ is continuously differentiable\footnote{The values are $\Delta_e = 0.5(\sqrt{(f_e h)^2+4f_e h} - f_e h)$, $a = 1/(8f_e\Delta_e)$, $b=1/(4f_e)-h/(4 \Delta_e)$, $c = t_\textup{e,free} + (f_e h)^2/(8f_e \Delta_e) - h/4 - (\Delta_e)/(8f_e)$.}, as illustrated in Figure~\ref{fig:t_cont_diff}.
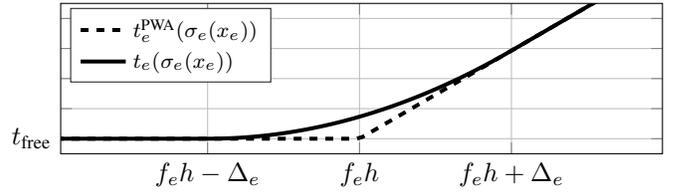
\begin{figure}[h]
\begin{center}
%
%
\begin{tikzpicture}

\begin{axis}[%
width=8cm,
height=2cm,
at={(1.011in,0.642in)},
scale only axis,
xmin=0.56,
xmax=3.46,
xtick={1.2679,2,2.7321},
xticklabels={{$f_e h-\Delta_e$},{$f_e h$},{$f_e h+\Delta_e$}},
xmajorgrids,
ymajorgrids,
xminorgrids,
yminorgrids,
ymin=1.75,
ymax=4.25,
ytick={2,2.5,3,3.5,4},
yticklabels={{$t_{\text{free}}$}},
axis background/.style={fill=white},
legend style={at={(0.35,0.97)}, font=\footnotesize, legend cell align=left, align=left, draw=white!15!black}
]

\addplot [color=black, dashed, line width=1.5pt]
  table[row sep=crcr]{%
0.535898384862246	2\\
0.565476195269069	2\\
0.595054005675892	2\\
0.624631816082716	2\\
0.654209626489539	2\\
0.683787436896362	2\\
0.713365247303186	2\\
0.742943057710009	2\\
0.772520868116832	2\\
0.802098678523655	2\\
0.831676488930479	2\\
0.861254299337302	2\\
0.890832109744125	2\\
0.920409920150949	2\\
0.949987730557772	2\\
0.979565540964595	2\\
1.00914335137142	2\\
1.03872116177824	2\\
1.06829897218507	2\\
1.09787678259189	2\\
1.12745459299871	2\\
1.15703240340554	2\\
1.18661021381236	2\\
1.21618802421918	2\\
1.24576583462601	2\\
1.27534364503283	2\\
1.30492145543965	2\\
1.33449926584648	2\\
1.3640770762533	2\\
1.39365488666012	2\\
1.42323269706695	2\\
1.45281050747377	2\\
1.48238831788059	2\\
1.51196612828742	2\\
1.54154393869424	2\\
1.57112174910106	2\\
1.60069955950789	2\\
1.63027736991471	2\\
1.65985518032153	2\\
1.68943299072836	2\\
1.71901080113518	2\\
1.748588611542	2\\
1.77816642194883	2\\
1.80774423235565	2\\
1.83732204276247	2\\
1.8668998531693	2\\
1.89647766357612	2\\
1.92605547398294	2\\
1.95563328438977	2\\
1.98521109479659	2\\
2.01478890520341	2.02957781040682\\
2.04436671561024	2.08873343122047\\
2.07394452601706	2.14788905203412\\
2.10352233642388	2.20704467284776\\
2.1331001468307	2.26620029366141\\
2.16267795723753	2.32535591447506\\
2.19225576764435	2.3845115352887\\
2.22183357805118	2.44366715610235\\
2.251411388458	2.502822776916\\
2.28098919886482	2.56197839772964\\
2.31056700927164	2.62113401854329\\
2.34014481967847	2.68028963935694\\
2.36972263008529	2.73944526017058\\
2.39930044049211	2.79860088098423\\
2.42887825089894	2.85775650179788\\
2.45845606130576	2.91691212261152\\
2.48803387171258	2.97606774342517\\
2.51761168211941	3.03522336423882\\
2.54718949252623	3.09437898505246\\
2.57676730293305	3.15353460586611\\
2.60634511333988	3.21269022667976\\
2.6359229237467	3.2718458474934\\
2.66550073415352	3.33100146830705\\
2.69507854456035	3.3901570891207\\
2.72465635496717	3.44931270993434\\
2.75423416537399	3.50846833074799\\
2.78381197578082	3.56762395156164\\
2.81338978618764	3.62677957237528\\
2.84296759659446	3.68593519318893\\
2.87254540700129	3.74509081400258\\
2.90212321740811	3.80424643481622\\
2.93170102781493	3.86340205562987\\
2.96127883822176	3.92255767644352\\
2.99085664862858	3.98171329725716\\
3.0204344590354	4.04086891807081\\
3.05001226944223	4.10002453888446\\
3.07959007984905	4.1591801596981\\
3.10916789025587	4.21833578051175\\
3.1387457006627	4.2774914013254\\
3.16832351106952	4.33664702213904\\
3.19790132147634	4.39580264295269\\
3.22747913188317	4.45495826376634\\
3.25705694228999	4.51411388457998\\
3.28663475269681	4.57326950539363\\
3.31621256310364	4.63242512620728\\
3.34579037351046	4.69158074702092\\
3.37536818391728	4.75073636783457\\
3.40494599432411	4.80989198864822\\
3.43452380473093	4.86904760946186\\
3.46410161513775	4.92820323027551\\
};
\addlegendentry{$t_e^\textup{PWA} (\sigma_e(x_e))$}

\addplot [color=black, line width=1.5pt]
  table[row sep=crcr]{%
0.535898384862246	2\\
0.565476195269069	2\\
0.595054005675892	2\\
0.624631816082716	2\\
0.654209626489539	2\\
0.683787436896362	2\\
0.713365247303186	2\\
0.742943057710009	2\\
0.772520868116832	2\\
0.802098678523655	2\\
0.831676488930479	2\\
0.861254299337302	2\\
0.890832109744125	2\\
0.920409920150949	2\\
0.949987730557772	2\\
0.979565540964595	2\\
1.00914335137142	2\\
1.03872116177824	2\\
1.06829897218507	2\\
1.09787678259189	2\\
1.12745459299871	2\\
1.15703240340554	2\\
1.18661021381236	2\\
1.21618802421918	2\\
1.24576583462601	2\\
1.27534364503283	2.00003734572021\\
1.30492145543965	2.00093364300527\\
1.33449926584648	2.00302500333706\\
1.3640770762533	2.0063114267156\\
1.39365488666012	2.01079291314087\\
1.42323269706695	2.01646946261289\\
1.45281050747377	2.02334107513165\\
1.48238831788059	2.03140775069714\\
1.51196612828742	2.04066948930938\\
1.54154393869424	2.05112629096836\\
1.57112174910106	2.06277815567408\\
1.60069955950789	2.07562508342654\\
1.63027736991471	2.08966707422574\\
1.65985518032153	2.10490412807167\\
1.68943299072836	2.12133624496435\\
1.71901080113518	2.13896342490377\\
1.748588611542	2.15778566788993\\
1.77816642194883	2.17780297392284\\
1.80774423235565	2.19901534300248\\
1.83732204276247	2.22142277512886\\
1.8668998531693	2.24502527030198\\
1.89647766357612	2.26982282852184\\
1.92605547398294	2.29581544978844\\
1.95563328438977	2.32300313410179\\
1.98521109479659	2.35138588146187\\
2.01478890520341	2.38096369186869\\
2.04436671561024	2.41173656532226\\
2.07394452601706	2.44370450182256\\
2.10352233642388	2.4768675013696\\
2.1331001468307	2.51122556396339\\
2.16267795723753	2.54677868960391\\
2.19225576764435	2.58352687829118\\
2.22183357805118	2.62147013002519\\
2.251411388458	2.66060844480593\\
2.28098919886482	2.70094182263342\\
2.31056700927164	2.74247026350764\\
2.34014481967847	2.78519376742861\\
2.36972263008529	2.82911233439632\\
2.39930044049211	2.87422596441077\\
2.42887825089894	2.92053465747195\\
2.45845606130576	2.96803841357988\\
2.48803387171258	3.01673723273455\\
2.51761168211941	3.06663111493596\\
2.54718949252623	3.11772006018411\\
2.57676730293305	3.170004068479\\
2.60634511333988	3.22348313982063\\
2.6359229237467	3.278157274209\\
2.66550073415352	3.33402647164411\\
2.69507854456035	3.39109073212596\\
2.72465635496717	3.44935005565455\\
2.75423416537399	3.50846833074799\\
2.78381197578082	3.56762395156164\\
2.81338978618764	3.62677957237528\\
2.84296759659446	3.68593519318893\\
2.87254540700129	3.74509081400258\\
2.90212321740811	3.80424643481622\\
2.93170102781493	3.86340205562987\\
2.96127883822176	3.92255767644352\\
2.99085664862858	3.98171329725716\\
3.0204344590354	4.04086891807081\\
3.05001226944223	4.10002453888446\\
3.07959007984905	4.1591801596981\\
3.10916789025587	4.21833578051175\\
3.1387457006627	4.2774914013254\\
3.16832351106952	4.33664702213904\\
3.19790132147634	4.39580264295269\\
3.22747913188317	4.45495826376634\\
3.25705694228999	4.51411388457998\\
3.28663475269681	4.57326950539363\\
3.31621256310364	4.63242512620728\\
3.34579037351046	4.69158074702092\\
3.37536818391728	4.75073636783457\\
3.40494599432411	4.80989198864822\\
3.43452380473093	4.86904760946186\\
3.46410161513775	4.92820323027551\\
};
\addlegendentry{$t_e (\sigma_e(x_e))$}

\end{axis}
\end{tikzpicture}%
\end{center}
\caption{\small Piece-wise affine travel time $t_e^\textup{PWA}(\sigma_e(x_e))$ and its smooth approximation $t_e(\sigma_e(x_e))$ as functions of $\sigma_e(x_e)$.}
\label{fig:t_cont_diff}
\end{figure}
We note that the function $t_e(\sigma_e(x_e))$ is used within a stationary traffic model but includes the average queuing time which is based on the dynamic function~\eqref{eq:queue_expression}. A thorough analysis of a dynamic traffic model is subject of future work.

Finally, we remark that a travel time with similar monotonicity properties can be derived from the piecewise affine fundamental diagram of traffic~\cite[Figure 7]{FDLi}, but $t_e(\sigma_e(x_e))$ would present a vertical asymptote which is absent here.
\subsection{Theoretical guarantees}
We  define the route-choice game $\mc{G}^\text{RC}_\N$ as in~\eqref{eq:GNEP}, with $\mathcal{X}\i$~as in \eqref{eq:local_traffic}, $\mathcal{C}$ as in~\eqref{eq:coupling_global_traffic} and $J\i(x\i,\sigma(x))$ as in~\eqref{eq:cost_traffic},~\eqref{eq:travel_time_smoothed}. In the following we apply the main results of Sections~\ref{sec:connection_VI},~\ref{sec:Wardrop_Nash},~\ref{sec:algorithms} to the route choice game.
\begin{corollary}
\label{cor:traffic1}
Consider the sequence of games $(\mc{G}^\textup{RC}_M)_{M=1}^\infty$.
Assume that for each game $\mc{G}^\textup{RC}_M$ the set $\mc{Q}=\mc{C}\cap\mc{X}$ is non-empty, that $h > 0$ and $t_\textup{e,free}, f_e > 0$ for each $e \in \mc{E}$. Moreover, assume that there exists $\hat \gamma > 0$ such that $\gamma\i \ge \hat \gamma$ for all $i \in \{1,\dots,\N\}$, for all $\N$.
Then:
\begin{enumerate}
\item The operator $F\WE$ is strongly monotone, hence each game $\mathcal{G}^\textup{RC}_M$ admits a unique variational Wardrop equilibrium. For every $\N$ satisfying
\begin{equation}
M > \maxx{e \in \mc{E}} \frac{1}{32 f_e \Delta_e \hat \gamma}
\label{eq:bound_M_traffic}
\end{equation}
the operator $F\NE$ is strongly monotone, hence each game $\mathcal{G}^\textup{RC}_M$ admits a unique variational Nash equilibrium. Every Wardrop equilibrium is an $\varepsilon$-Nash equilibrium with $\varepsilon=\frac{E}{\N f_\text{min}}$, where $f_\textup{min} = \min_{e \in \mc{E}} f_e$.
\item  For any variational Nash equilibrium $\VNE{x}$ of $\mathcal{G}^\textup{RC}_M$, the unique variational Wardrop equilibrium $\VWE{x}$ of $\mathcal{G}^\textup{RC}_M$ satisfies
\begin{align}
\| \VNE{x} - \VWE{x} \| \le \frac{\sqrt{E}}{2 f_\textup{min} \hat \gamma {\sqrt \N}}.
\label{eq:convergence_strategies_traffic}
\end{align}
\item For any $\N$, Algorithm~\ref{alg:asp} with operator $F\WE$ converges to a variational Wardrop equilibrium of $\mathcal{G}^\textup{RC}_\N$. For $\N$ satisfying~\eqref{eq:bound_M_traffic}, Algorithm~\ref{alg:asp} with operator $F\NE$ converges to a variational Nash equilibrium of $\mathcal{G}^\textup{RC}_\N$.
\qed
\end{enumerate}
\end{corollary}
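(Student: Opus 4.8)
The plan is to show that the route-choice game is an instance of the specific cost structure~\eqref{eq:costs_specific} and then feed it into the general machinery of Sections~\ref{sec:connection_VI}--\ref{sec:algorithms}, mirroring the treatment of the EV game in Corollary~\ref{cor:pev}. Writing $v\i(x\i)=\frac{\gamma\i}{2}\|x\i-\tilde x\i\|^2$ and $p(\sigma)=[t_e(\sigma_e)]_{e=1}^E$, the cost~\eqref{eq:cost_traffic} has the form~\eqref{eq:costs_specific} with a price $p$ of diagonal structure. First I would record the elementary properties of the smoothed travel time~\eqref{eq:travel_time_smoothed}: it is continuously differentiable by construction, non-decreasing and convex, with $0\le t_e'(\cdot)\le \frac{1}{2f_e}$ and $0\le t_e''(\cdot)\le \frac{1}{4f_e\Delta_e}$. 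These bounds, together with $\mc{X}\i\subseteq[0,1]^E$, let me take $\mc{X}^0=[0,1]^E$ (so $\Dx=\sqrt E$) and verify Assumptions~\ref{A1},~\ref{A2},~\ref{A3}: convexity and compactness of $\mc{X}\i$ and affinity of the coupling constraint are immediate, $\mc{Q}$ is non-empty by hypothesis, and $p$ is Lipschitz on $\mc{X}^0$ with constant $L_p=\frac{1}{2f_\textup{min}}$, whence $J\i$ is Lipschitz in $\sigma$ with $L_2=\Dx L_p=\frac{\sqrt E}{2f_\textup{min}}$ by~\eqref{eq:Lipschitz_implies_Lipschitz}.

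The core of the argument is the monotonicity analysis. Since each $t_e'\ge0$ the price $p$ is monotone, and since each $v\i$ is $\hat\gamma$-strongly convex, $F\WE$ is strongly monotone; to make the constant explicit I would compute $\nabla_x F\WE$ directly (Lemma~\ref{lemma:pd}), observing that after the edge-wise permutation it reduces to blocks $\operatorname{diag}(\gamma\i)+\frac1\N t_e'(\sigma_e)\ones[\N]\ones[\N]^\top\succeq\hat\gamma I$, so that $F\WE$ is strongly monotone on $\mc{X}$ with constant $\hat\gamma$, and Lemma~\ref{lem:exun} yields existence and uniqueness of the variational Wardrop equilibrium. For $F\NE$ I would follow the block-diagonalization of Corollary~\ref{cor:pev}-4): the same permutation sends $\nabla_x F\NE$ into blocks
\[
G_e=\operatorname{diag}(\gamma\i)+\tfrac1\N t_e'(\sigma_e)(I+\ones[\N]\ones[\N]^\top)+\tfrac{1}{\N^2}t_e''(\sigma_e)\,x_e\ones[\N]^\top .
\]
The main obstacle is bounding $\operatorname{sym}(G_e)$ from below: by Weyl's inequality it dominates $\hat\gamma I$ plus the indefinite term $\frac{1}{2\N^2}t_e''(x_e\ones[\N]^\top+\ones[\N]x_e^\top)$, which I control exactly as in the EV case via Lemma~\ref{lem:min_eigenval}, giving $\lambda_{\min}(x_e\ones[\N]^\top+\ones[\N]x_e^\top)/2\ge-\frac{\N}{8}$ (using $x\i_e\in[0,1]$). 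Hence $\operatorname{sym}(G_e)\succeq(\hat\gamma-\frac{t_e''}{8\N})I\succeq(\hat\gamma-\frac{1}{32\N f_e\Delta_e})I$, which is positive definite precisely when $\N>\frac{1}{32 f_e\Delta_e\hat\gamma}$; taking the maximum over $e$ reproduces condition~\eqref{eq:bound_M_traffic} and establishes strong monotonicity of $F\NE$, with the unique variational Nash equilibrium following again from Lemma~\ref{lem:exun}.

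With monotonicity in hand the remaining claims are bookkeeping. The $\varepsilon$-Nash property follows from Proposition~\ref{prop:conv_cost}, whose generic bound $\varepsilon=\frac{2\Dx L_2}{\N}$ collapses to $\frac{E}{\N f_\textup{min}}$ after substituting $\Dx=\sqrt E$ and $L_2=\frac{\sqrt E}{2f_\textup{min}}$. The distance bound~\eqref{eq:convergence_strategies_traffic} is Theorem~\ref{thm:conv_strategies}-2) applied with $\alpha\emm=\hat\gamma$ and the same $L_2$, giving $\|\VNE{x}-\VWE{x}\|\le\frac{L_2}{\hat\gamma\sqrt\N}=\frac{\sqrt E}{2f_\textup{min}\hat\gamma\sqrt\N}$. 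Finally, for statement 3) I would invoke Theorem~\ref{thm:convergence_asp}: both $F\WE$ and, under~\eqref{eq:bound_M_traffic}, $F\NE$ are strongly monotone on $\mc{X}$ by the above, and both are Lipschitz on the compact set $\mc{X}$ because $t_e,t_e',t_e''$ are continuous and bounded there; choosing $\tau$ as in~\eqref{eq:condition_tau_APA} then yields convergence of Algorithm~\ref{alg:asp} to the variational Wardrop equilibrium for every $\N$, and to the variational Nash equilibrium whenever~\eqref{eq:bound_M_traffic} holds.
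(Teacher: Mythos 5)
Your proposal is correct and follows essentially the same route as the paper's own proof: identify $v\i(x\i)=\frac{\gamma\i}{2}\|x\i-\tilde x\i\|^2$ and $p=[t_e]_{e=1}^E$, verify Assumptions~\ref{A1}--\ref{A3} with $\mc{X}^0=[0,1]^E$, $\Dx=\sqrt{E}$, $L_p=1/(2f_\textup{min})$, obtain strong monotonicity of $F\WE$ with constant $\hat\gamma$, block-diagonalize $\nabla_x F\NE$ edge-by-edge and control the indefinite term $\frac{1}{\N^2}t_e''(\sigma_e)\,x_e\ones[\N]^\top$ via Lemma~\ref{lem:min_eigenval} to recover~\eqref{eq:bound_M_traffic}, then conclude with Proposition~\ref{prop:conv_cost}, Theorem~\ref{thm:conv_strategies}-2) and Theorem~\ref{thm:convergence_asp}. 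The only point where the paper is more careful (via a footnote) and you are not is that $F\NE$ is \emph{not} continuously differentiable, since $t_e''$ jumps at $\sigma_e=f_e h\pm\Delta_e$, so Lemma~\ref{lemma:pd} cannot be cited verbatim; one must invoke its extension to nonsmooth operators and check $\nabla_x F\NE\succ 0$ separately on each of the three regions of~\eqref{eq:travel_time_smoothed} --- which is exactly what your region-wise bound already does, so the fix is one sentence.
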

\begin{proof}
1) Assumption~\ref{A1} and the consequent existence of a variational Nash and of a variational Wardrop equilibrium for any $\N$ can be shown as in Corollary~\ref{cor:pev}. The operator $F\WE$ for the cost~\eqref{eq:cost_traffic} reads
\begin{equation}
F\WE(x)=[\gamma^i(x^i-\hat x^i) + t(\sigma(x))]_{i=1}^\N.
\end{equation}
where $t(\sigma(x))\defeq[t_e(\sigma_e(x_e))]_{e=1}^E$. Since $t_e(\sigma_e(x_e))$ in~\eqref{eq:travel_time_smoothed} is a monotone function of $\sigma_e(x_e)$, the operator $t(\sigma(x))$ is monotone. Then $F\WE$ is strongly monotone with constant $\hat \gamma$ because it is the sum of a monotone and a strongly monotone operator with constant $\hat \gamma$. As a consequence, each $\mathcal{G}^\textup{RC}_\N$ admits a unique variational Wardrop equilibrium.

To prove strong monotonicity of $F\NE$ we use the result of Lemma~\ref{lemma:pd}\footnote{Lemma~\ref{lemma:pd} requires $F\NE$ to be continuously differentiable, which is not the case here. The more general result~\cite[Proposition 2.1]{schaible1996generalized} extends the statement of Lemma~\ref{lemma:pd} to operators which are not continuously differentiable. It then suffices to show $\nabla_x F\NE(x) \succ 0$ for $\sigma(x)$ in each of the three intervals defined by~\eqref{eq:travel_time_smoothed}, because in each of them $F\NE$ is continuously differentiable.}. We first note that each $t_e$ only depends on the corresponding $\sigma_e$, hence $\nabla_x F\NE(x)$ can be permuted into diagonal form similarly to what done in~\eqref{eq:PEV_gradient_FN}. It then suffices to show $\hat \gamma I_\N + \frac{1}{\N} t_e'(\sigma_e) I_\N + \frac{1}{\N^2} t''_e(\sigma_e) x_e \ones[\N]^\top \succ 0$ for all $\sigma_e$ and for all $e$. This matrix is indeed positive definite if $\sigma_e(x_e) \notin [f_e h - \Delta_e, f_e h + \Delta_e]$, because then $t'_e(\sigma_e) \ge 0$ and $t''_e(\sigma_e) = 0$ by~\eqref{eq:travel_time_smoothed}. For $\sigma_e(x_e) \in [f_e h - \Delta_e, f_e h + \Delta_e]$ it suffices to show $\hat \gamma I_\N + \frac{1}{\N^2 4f_e \Delta_e} x_e \ones[\N]^\top \succ 0$, because $t_e'(\sigma_e) \ge 0$ and $t''_e(\sigma_e) = \frac{1}{4f_e \Delta_e}$. By Lemma~\ref{lem:min_eigenval} in the Appendix, $\lambda_\text{min} \left( x_e \ones[\N]^\top + \ones[\N] x_e^\top \right)/2 \ge -\frac{\N}{8}$, which proves strong monotonicity of $F\NE$ under~\eqref{eq:bound_M_traffic}.
Consequently, if $M$ satisfies~\eqref{eq:bound_M_traffic} then $\mathcal{G}^\textup{RC}_\N$ admits a unique variational Nash equilibrium.
Finally, we verify Assumption~\ref{A2} in order to use Proposition~\ref{prop:conv_cost}. We have $\mc{X}^0 = [0,1]^E$ and $t$ is continuously differentiable and hence Lipschitz in $\mc{X}^0$, with constant $L_p = 1/(2f_\textup{min})$. Moreover, $\! \Dx \! \defeq \! \max_{y \in{\mathcal{X}\zero} } \{\|y \|\} = \sqrt{E}$. Using~\eqref{eq:Lipschitz_implies_Lipschitz} concludes the proof.

2) Since all the assumptions of Theorem~\ref{thm:conv_strategies} have just been verified, it is a direct consequence of its second statement.

3) As Assumption~\ref{A3} holds trivially (the others have already been verified), we apply Theorem~\ref{thm:convergence_asp} and conclude the proof.
\end{proof}
\subsection{Numerical analysis}
For the numerical analysis we use the data set of the city of Oldenburg~\cite{OldenburgDataset}, whose graph features 175 nodes and 213 undirected edges\footnote{The graph in the original data set features 6105 vertexes and 7035 undirected edges. We reduce it by excluding all the nodes that are outside the rectangle $[3619,4081] \times [3542,4158]$ and all the edges that do not connect two nodes in the rectangle. The resulting graph is strongly connected.} and is reported in Figure~\ref{fig:queuing_time_portion}. For each agent $i$ the origin $o\i$ and the destination $d\i$ are chosen uniformly at random. Regarding the cost~\eqref{eq:cost_traffic}, $t_{e,\text{free}}$ is computed as the ratio between the road length, which is provided in the data set, and the free-flow speed.
Based on the road topology, we divide the roads into main roads, where the free-flow speed is $50$ km/h, and secondary roads, where the free-flow speed is $30$ km/h. Moreover, we assume a peak hour duration $h$ of $2$ hours, and for all $e \in \mc{E}$, we set $f_e = 4\cdot10^{-3}$ vehicles per second, which corresponds to 1 vehicle every 4 seconds for a population of $\N = 60$ vehicles. Finally, the parameter $\gamma\i$ is picked uniformly at random in $[0.5,3.5]$ and $\tilde x\i$ is such that $\tilde x\i_e = 1$ if $e$ belongs to the shortest path from $o\i$ to $d\i$, while $\tilde x\i_e = 0$ otherwise. The shortest path is computed based on $\{t_{e,\text{free}}\}_{e=1}^E$. Note that with the above values the bound~\eqref{eq:bound_M_traffic} becomes $M > 16.14$, which is satisfied also for small-size populations.

We compute the Wardrop equilibrium with Algorithm~\ref{alg:asp} relatively to a population of $\N = 60$ drivers without coupling constraint, i.e. with $K_e = 1$ for all $e \in \mc{E}$. We report in Figure~\ref{fig:queuing_time_portion} the corresponding queuing time $t_e(\sigma_e(x_e)) - t_{e,\text{free}}$ as by~\eqref{eq:travel_time_smoothed}.
\begin{figure}[h]
\begin{center}
\includegraphics[width = 0.75\columnwidth]{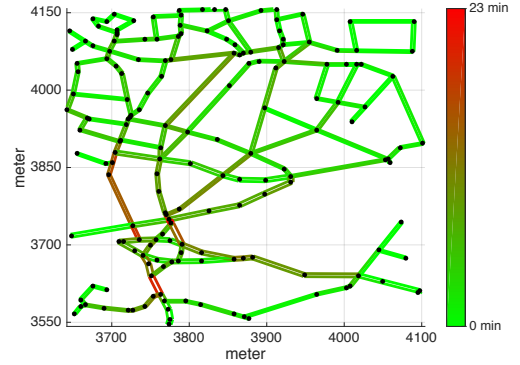}
\end{center}
\vspace{-0.1cm}
\caption{\small The queuing time reported in green-red color scale.
Note that this pattern changes if one modifies the pairs origin-destination.}
\label{fig:queuing_time_portion}
\end{figure}
We illustrate in Figure~\ref{fig:comparison_coupling} the change in the queuing time of an entire neighborhood when introducing a coupling constraint that upper bounds the total number of cars on a single edge, relatively to a Wardrop equilibrium with $\N = 60$.
\begin{figure}[h]
\begin{center}
\includegraphics[width = 1\columnwidth]{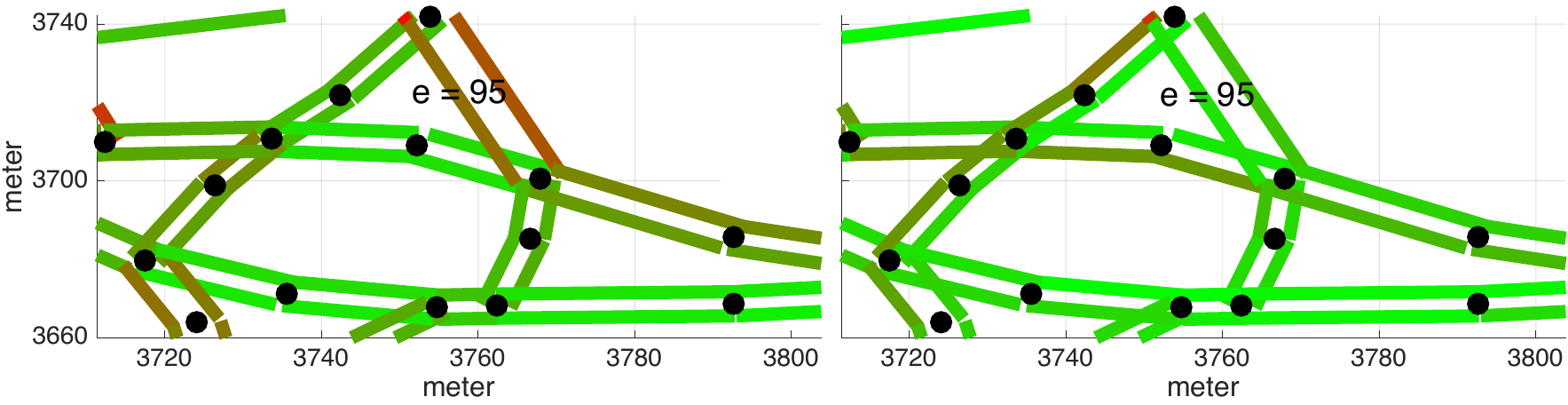}
\end{center}
\vspace{-0.1cm}
\caption{\small On the left, the queuing time in a neighborhood without any coupling constraints; 10\% of the population transits on edge 95, and the queuing time is 7.28 minutes. On the right, the queuing time in presence of a coupling constraint allowing at most 3\% of the entire population on edge 95; the queuing time is reduced to 1.42 minutes, but it visibly increases on the edges of the alternative route.}
\label{fig:comparison_coupling}
\end{figure}

Finally, we illustrate the second statement of Corollary~\ref{cor:traffic1} by reporting in Figure~\ref{fig:traffic_conv_x} the distance between the unique variational Wardrop equilibrium and the variational Nash equilibrium found by Algorithm~\ref{alg:asp}. The $\varepsilon$-Nash property of the Wardrop equilibrium in Proposition~\ref{prop:conv_cost} can also be illustrated, but a plot is omitted here for reasons of space.

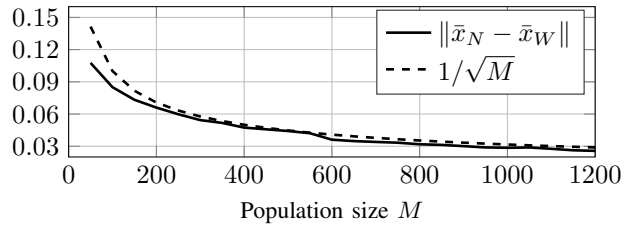
\begin{figure} [h!]
\begin{center}
\setlength\figureheight{2.0cm} 
\setlength\figurewidth{7.cm} 
%
%
\begin{tikzpicture}

\begin{axis}[%
width=\figurewidth,
height=\figureheight,
at={(1.011in,0.642in)},
scale only axis,
xmin=0,
xmax=1200,
tick label style={/pgf/number format/.cd,1000 sep={}},
xtick={0, 200, 400, 600, 800, 1000, 1200},
ymin=0.02,
ymax=0.16,
tick label style={/pgf/number format/fixed},
ytick={0,0.03,0.06,0.09,0.12,0.15},
axis background/.style={fill=white},
xmajorgrids,
ymajorgrids,
xlabel ={\small Population size $M$},
legend style={legend cell align=left, align=left, draw=white!15!black}
]
\addplot [color=black, line width=1.0pt]
  table[row sep=crcr]{%
50	0.107597186907606\\
100	0.0850078973215346\\
150	0.0732622497727821\\
200	0.0660039393423034\\
250	0.0597843249632177\\
300	0.0543514624193683\\
350	0.0517201549776713\\
400	0.047375072456806\\
450	0.0457663821274737\\
500	0.0441920556441806\\
550	0.0421236118876949\\
600	0.0361\\
650	0.0348\\
700	0.034\\
750	0.0333\\
800	0.0318333333333333\\
850	0.0312666666666667\\
900	0.0301\\
950	0.0289\\
1000	0.0285666666666667\\
1050	0.0288666666666667\\
1100	0.0276333333333333\\
1150	0.0262333333333333\\
1200	0.0257\\
};
\addlegendentry{$\| \bar x_N - \bar x_W \|$}

\addplot [color=black, dashed, line width=1.0pt]
  table[row sep=crcr]{%
50	0.14142135623731\\
100	0.1\\
150	0.0816496580927726\\
200	0.0707106781186548\\
250	0.0632455532033676\\
300	0.0577350269189626\\
350	0.0534522483824849\\
400	0.05\\
450	0.0471404520791032\\
500	0.0447213595499958\\
550	0.0426401432711221\\
600	0.0408248290463863\\
650	0.0392232270276368\\
700	0.0377964473009227\\
750	0.0365148371670111\\
800	0.0353553390593274\\
850	0.0342997170285018\\
900	0.0333333333333333\\
950	0.0324442842261525\\
1000	0.0316227766016838\\
1050	0.0308606699924184\\
1100	0.0301511344577764\\
1150	0.0294883912309794\\
1200	0.0288675134594813\\
};
\addlegendentry{$1/\sqrt{M}$}

\end{axis}
\end{tikzpicture}%
\caption{\small Distance between Nash and Wardrop variational equilibria. As in Fig. 2, $1/\sqrt{M}$ illustrates the trend of the bound derived in Corollary~\ref{cor:traffic1} and not the specific constant.}
\label{fig:traffic_conv_x}
\end{center}
\end{figure}

\section{Conclusions}
\label{sec:conclusions}

The paper considered aggregative games and established novel results on the Euclidean distance between Nash and Wardrop equilibrium; moreover, it proposed two decentralized algorithms to achieve the two equilibria in presence of coupling constraints and investigated two relevant applications.
As future research direction, it would be interesting to design distributed algorithms which achieve an equilibrium by means of local communications. Moreover, by exploiting the VI reformulation one could establish results on the proximity between Nash equilibrium and social optimum.
\section*{Appendix: Proofs}
\subsection*{Proof of Lemma \ref{lem:FNstrongly monotone}}
1) Let us first show that $F\WE$ is monotone.
Since $v\i$ is convex, then $\nabla_{x\i} v\i(x\i)$ is monotone in $x\i$ by~\cite[Section 4.2.2]{scutari2012monotone}.
Hence $[\nabla_{x\i} v\i(x\i)]_{i=1}^\N$ is monotone.
Moreover, for any $x_1,x_2$
\begin{equation}
\begin{aligned}
& ([p(\sigma(x_1))]_{i=1}^\N - [p(\sigma(x_2))]_{i=1}^M)^\top (x_1 - x_2) \\
&= \N ( p(\sigma(x_1)) - p(\sigma(x_2)) )^\top ( \sigma(x_1) - \sigma(x_2) ) \ge 0,
\label{eq:p_mon}
\end{aligned}
\end{equation}
where the last inequality follows from the fact that $p$ is monotone.
By~\eqref{eq:F_W_decomp} and the fact that the sum of two monotone operators is monotone, one can conclude that $F\WE$ is monotone. \\
To show that $F\NE$ is strongly monotone, we write the affine expression of $p$ as $p(x) = C x + c$, where there exists $\alpha > 0$ such that $C \succ \alpha I_n$ by Lemma~\ref{lemma:pd}.
Then the term $\frac1 \N [  \nabla_z p(z)_{|{z=\sigma(x)}} {x\i}]_{i=1}^\N$ in~\eqref{eq:F_N_decomp} equals
$\frac{1}{\N} (I_\N \otimes C^\top) x$.
Since $\nabla_x (\frac{1}{\N} (I_\N \otimes C^\top) x) \succ \frac{\alpha}{\N} I_{\N n}$,
then $\frac1 \N [  \nabla_z p(z)_{|{z=\sigma(x)}} {x\i}]_{i=1}^\N$ is strongly monotone by Lemma~\ref{lemma:pd}.
Having already shown that $F\WE$ is monotone, the proof is concluded upon noting that the sum of a monotone operator and a strongly monotone operator is strongly monotone.

2) Strong convexity of $v^i$ is equivalent to strong monotonicity of $\nabla_{x\i}v\i(x\i)$ in $x\i$~\cite[Section 4.2.2]{scutari2012monotone}.
Then $[\nabla_{x\i} v\i(x\i)]_{i=1}^\N$ is strongly monotone.
Monotonicity of $[p(\sigma(x))]_{i=1}^\N$ in~\eqref{eq:F_W_decomp} can be shown as in~\eqref{eq:p_mon}.
\hfill $\smallblacksquare$
\subsection*{Proof of Proposition \ref{prop:ext_vi}}

Under Assumptions~\ref{A1} and~\ref{A3} the set $\mc{Q}$, and consequently the sets $\{\mc{X}^i\}_{i=1}^\N$, $\mc{X}$ and $\mc{Y}$, are convex and satisfy Slater's constraint qualification.
The VI$(\mathcal{Q},F)$ is therefore equivalent to its KKT system~\cite[Proposition 1.3.4]{facchinei2007finite}.
Moreover, since $\mc{X}^i$ satisfies Slater's constraint qualification, the optimization problem of agent $i$ in the game~\eqref{eq:GNEP_ext} is equivalent to its KKT system, for each $i$.
Finally, by~\cite[Proposition 1.3.4]{facchinei2007finite}, the VI$(\mathcal{Y},T)$ is equivalent to its KKT system.
We do not report the three KKT systems here,
but it can be seen by direct inspection that they are equivalent~\cite[Section 4.3.2]{scutari2012monotone}.
\hfill \smallblacksquare
\subsection*{Proof of Theorem \ref{thm:conv_two}}
We split the proof of the theorem into two parts. First we show convergence of the inner loop and then of the outer loop.

\textit{Inner loop}.
Using the same approach of \cite[Theorem 3 and Corollary 1]{grammatico:parise:colombino:lygeros:14}, it is possible to show that under Assumption~\ref{A4} for any $\lambda_{(k)}\in\R^m_{\ge0}$ the sequences of $z_{(h)}$ and of $\tilde x(h)$ converge respectively to $\bar z$ and to $\bar{{x}}$ such that $\bar z=\frac{1}{\N}\sum_{i=1}^\N x^i_{\textup{or}} (\bar z,\lambda_{(k)})\eqdef\frac{1}{\N}\sum_{i=1}^\N \bar{{x}}^i=\sigma(\bar{{x}})$. In~\cite[Theorem 1]{grammatico:parise:colombino:lygeros:14} it is shown that the set $\{\bar{{x}}^i\}_{i=1}^\N$ is an $\varepsilon$-Nash equilibrium for the game $\mc{G}(\lambda_{(k)})$, with $\varepsilon=\mc{O}(\frac{1}{\N})$. In the following, we show that $\{\bar{{x}}^i\}_{i=1}^\N$ is actually a Wardrop equilibrium of $\mc{G}(\lambda_{(k)})$\footnote{This is consistent with \cite[Theorem 1]{grammatico:parise:colombino:lygeros:14} thanks to Proposition~\ref{prop:conv_cost}.}. Indeed, for each agent $i$, by the definition of optimal response in~\eqref{eq:or}, one has
\[J^i(\bar x^i,\bar z) +\lambda_{(k)}^\top A_{(:,i)} \bar  x^i \le J^i(x^i,\bar z) +\lambda_{(k)}^\top A_{(:,i)}  x^i, \forall x^i\in\mc{X}^i\,.
\]
Using the fact that $\bar z=\sigma(\bar x)$, we get
\[
J^i(\bar x^i,\sigma(\bar x)) +\lambda_{(k)}^\top A_{(:,i)} \bar x^i \le J^i(x^i,\sigma(\bar x)) +\lambda_{(k)}^\top A_{(:,i)}  x^i, 
\]
for all $x^i\in\mc{X}^i$ and for all $i\in\{1,\dots,\N\}$. Thus $\{\bar x^i\}_{i=1}^\N$ is a Wardrop equilibrium of $\mc{G}(\lambda_{(k)})$ by Definition~\ref{def:WE}.

\textit{Outer loop.}
We follow the steps of the proof of~\cite[Proposition 8]{pang2010design}. For each $\lambda\in\R^m_{\ge0}$ define $F\WE(x;\lambda) \defeq F\WE(x)+A^\top\lambda$. Such operator is strongly monotone in $x$ on $\mc{Q}$ with the same constant $\alpha$ as $F\WE(x)$. It follows by Lemma \ref{lem:exun},  that $\mc{G(\lambda)}$ has a unique variational Wardrop equilibrium which we denote by $\bar x\WE(\lambda)$. Note that the outer loop update can be written as 
\[\lambda_{(k+1)}=\Pi_{\R^m_{\ge0}}[\lambda_{(k)}-\tau(b-A\bar x\WE(\lambda_{(k)}))],\]
which is a step of the projection algorithm~\cite[Algorithm 12.1.4]{facchinei2007finite} applied to VI$(\R^m_{\ge0}, \Phi)$, with $\Phi(\lambda) \defeq b-A\bar x\WE(\lambda)$.  To conclude, it suffices to show that $\lambda_{(k)}$ converges to a solution $\bar \lambda$ of such VI, because by~\cite[Proposition 1.1.3]{facchinei2007finite}, $\bar \lambda$ solves VI$(\R^m_{\ge0}, \Phi)$ if and only if $0\le \bar \lambda \perp (b-A \bar x\WE(\bar \lambda)) \ge 0$. Having already proved convergence of the inner loop, the conclusion then follows from the second statement of Proposition \ref{prop:ext_vi}.

To show that the sequence $\lambda_{(k)}$ converges to a solution of the $\textup{VI}(\R^m_{\ge0},\Phi)$, we prove that the mapping $\Phi$ is co-coercive\footnote{\label{foot:cocercive} The operator $\Phi:\R^m\rightarrow \R^m$ is co-coercive with constant $\eta>0$ if $(\Phi(\lambda_1)-\Phi(\lambda_2))^\top(\lambda_1-\lambda_2)\ge\eta ||\Phi(\lambda_1)-\Phi(\lambda_2)||^2$, for all $\lambda_1, \lambda_2 \in\R^m$.} with co-coercitivity constant $c_\Phi = \alpha/\|A\|^2$ and apply \cite[Theorem 12.1.8]{facchinei2007finite} to conclude the proof. Note that~\cite[Theorem 12.1.8]{facchinei2007finite} requires $\textup{VI}(\R^m_{\ge0},\Phi)$ to have at least a solution; this is guaranteed by the equivalence between 1) and 2) in Proposition~\ref{prop:ext_vi} upon noting that a solution of VI($Q,F$) exists by Lemma~\ref{lem:exun}.

To show co-coercitivity of $\Phi$, consider $\lambda_1 , \lambda_2 \in \R^m_{\ge0}$ and the corresponding unique solutions $x_1 \defeq \bar x\WE(\lambda_1)$ of VI($\mc{X}$,$F\WE+A^\top\lambda_1)$ and $x_2 \defeq \bar x\WE(\lambda_2)$ of VI($\mc{X}$,$F\WE+A^\top\lambda_2)$. By definition
\begin{subequations}
\begin{align}
&(x_2-x_1)^\top(F\WE(x_1)+A^\top\lambda_1) \ge 0 \label{eq:VI_1}\,, \\
&(x_1-x_2)^\top(F\WE(x_2)+A^\top\lambda_2) \ge 0 \label{eq:VI_2} \,.
\end{align}
\end{subequations}
Adding~\eqref{eq:VI_1} and~\eqref{eq:VI_2} we obtain $(x_2-x_1)^\top(F\WE(x_1)-F\WE(x_2) + A^\top (\lambda_1-\lambda_2)) \ge 0 $, i.e., $(x_2-x_1)^\top A^\top(\lambda_1-\lambda_2) \ge (x_2-x_1)^\top(F\WE(x_2)-F\WE(x_1))$.
Since $F\WE$ is strongly monotone, it follows from the last inequality that
\begin{equation}
(Ax_2-Ax_1)^\top (\lambda_1-\lambda_2) \ge \alpha \| x_2 - x_1 \|^2 \,.
\label{eq:proof_2_intermed_2}
\end{equation}
Since by definition $\|A(x_2-x_1)\| \le \| A \| \|x_2-x_1 \|$, then
\begin{equation}
\| x_2 - x_1 \|^2 \ge \frac{\| A(x_2-x_1)\|^2}{\| A \|^2} \,.
\label{eq:norm_def_ineq}
\end{equation}
Combining~\eqref{eq:proof_2_intermed_2},~\eqref{eq:norm_def_ineq}, and adding and subtracting $b$, we obtain
\begin{equation}
(b-Ax_2 - (b-Ax_1))^\top \! (\lambda_2 - \lambda_1) \ge \frac{\alpha}{\| A \|^2} \| b-Ax_2 - (b-Ax_1) \|^2 \! ,
\end{equation}
hence $\Phi$ is co-coercive in $\lambda$ with constant $c_\Phi =\alpha/\|A\|^2$. \hfill \smallblacksquare
\subsection*{Proof of Theorem \ref{thm:convergence_asp}}
We give the proof for a strongly monotone operator $F$, which is to be interpreted as $F\NE$ in the first statement and $F\WE$ in the second statement. We divide the proof into two parts: (i) we prove that Algorithm~\ref{alg:asp} is a particular case of a class of algorithms known as asymmetric projection algorithms (APA) \cite[Algorithm 12.5.1]{facchinei2007finite} applied to VI$(\mathcal{Y},T)$; (ii) we prove that our algorithm satisfies a convergence condition for APA. It can be shown that if $\tau$ satisfies~\eqref{eq:condition_tau_APA} then also $\tau<1/\|A\|$ holds. 
\\
(i) The APA are parametrized by the choice of a matrix $D\succ0$. For a fixed $D$ a step of the APA  for VI$(\mathcal{Y},T)$ is
\begin{equation}
\label{eq:APA}
y_{(k+1)}=\textup{solution of VI}(\mathcal{Y}, T^k_D),
\end{equation}
where $y_{(k)}$ is the state at iteration $k$ and $T^k_{D}(y)\defeq T(y_{(k)})+D(y-y_{(k)})$. Every step of the APA requires the solution of a different variational inequality that depends on the operator $T$, on a fixed matrix $D$ and on the previous strategies' vector $y_{(k)}$. We choose
\begin{equation}
D\coloneqq \left[\begin{array}{cc}\frac{1}{\tau} I_{\N n} & 0 \\-2A & \frac{1}{\tau} I_m\end{array}\right],
\label{eq:choice_D_apa}
\end{equation}
which by using the Schur complement condition can be shown to positive definite because $\tau<1/\|A\|$. It is shown in~\cite[Section 12.5.1]{facchinei2007finite} that with the choice~\eqref{eq:choice_D_apa} the update~\eqref{eq:APA} coincides with the steps~\eqref{eq:apa_inner}.
\\
\noindent (ii) As illustrated in the previous point, Algorithm~\ref{alg:asp} is the specific APA associated with the choice of $D$ given in \eqref{eq:choice_D_apa}. According to \cite[Proposition 12.5.2]{facchinei2007finite}, this algorithm converges if the mapping $G(y)=D_s^{-1/2}  T(D_s^{-1/2} y) -D_s^{-1/2}  (D-D_s) D_s^{-1/2} y$ is co-coercive with constant $1$, where $D_s = (D+D^\top)/2$ and $D_s^{-1/2}$ denotes the principal square root of the symmetric positive definite matrix $D_s^{-1}$ and is therefore symmetric positive definite. Let us rename $L \defeq D_s^{-1/2}$ and $Ly=\begin{sma} v\\w\end{sma}$ and simplify the expression of $G(y)$
\begin{align}
G(y)&=L  T(L y) -L  (D-D_s) L y\\
&= L \left( \begin{sma} F(v) \\ 0 \end{sma} +\begin{sma} 0 & A^\top \\ -A & 0 \end{sma}Ly +\begin{sma} 0 \\ b\end{sma} \right)-L  \begin{sma} 0 & A^\top \\ -A & 0 \end{sma} L y\\[-0.3cm]
&= L \left( \begin{sma} F(v) \\ 0 \end{sma} +\begin{sma} 0 \\ b\end{sma} \right).
\label{eq:express_G}
\end{align}
We now prove that $G(y)$ is co-coercive with constant $1$, i.e.
\begin{equation}
(y_1-y_2)^\top(G(y_1)-G(y_2)) -  \|G(y_1)-G(y_2)\|^2\ge0.
\label{eq:cocerc_repeated}
\end{equation}
Let us substitute~\eqref{eq:express_G} in the left-hand side of~\eqref{eq:cocerc_repeated}
\begin{small}
\begin{align}
&(y_1 \! -y_2)^\top \! (G(y_1)-G(y_2)) -  \|G(y_1)-G(y_2)\|^2 \\
&=(y_1 \! -y_2)^\top \! (L  \begin{sma} F(v_1) \\ 0 \end{sma} \! -L  \begin{sma} F(v_2) \\ 0 \end{sma} )  \! - \!  \|L  \begin{sma} F(v_1) \\ 0 \end{sma}  \! -L  \begin{sma} F(v_2) \\ 0 \end{sma} \|^2 \\
&=(Ly_1-Ly_2)^\top(  \begin{sma} F(v_1) -F(v_2)\\ 0 \end{sma}  ) -  \| L  \begin{sma} F(v_1) -F(v_2)\\ 0 \end{sma}  \|^2 \\
&=(\begin{sma} v_1-v_2 \\ w_1-w_2\end{sma})^\top \! (  \begin{sma} F(v_1) -F(v_2)\\ 0 \end{sma}  ) \! - \! \begin{sma} F(v_1) -F(v_2)\\ 0 \end{sma} ^\top \!\!\! L^2 \begin{sma} F(v_1) -F(v_2)\\ 0 \end{sma} \! \\
&=  ( F(v_1)  - F(v_2) )^\top [(v_1  -  v_2 )- [L^2]_{11} (F(v_1)  -  F(v_2))] \\
& \ge \alpha \|v_1-v_2\|^2 -  \|[L^2]_{11}\|\| F(v_1) -F(v_2)) \|^2 \\
& \ge \left(\alpha -  \|[L^2]_{11}\|L_F^2 \right)  \|v_1-v_2\|^2 \eqdef K \|v_1-v_2\|^2, 
\end{align}
\end{small}
The proof is concluded if $K\ge0.$ Let us compute $[L^2]_{11} = [D_s^{-1}]_{11}$. By inverting the block matrix $D_s$ we get
\begin{equation}\label{eq:l2}
[L^2]_{11} =\tau(I-\tau^2 A^\top A)^{-1} \succ 0. 
\end{equation}
Since $\tau^2 A^\top A$ is symmetric positive semidefinite, $\lambda_\text{max}(\tau^2 A^\top A) = \tau^2 \|A\|^2 <1$ because $\tau<1/\|A\|$ and  $\rho(\tau^2 A^\top A)<1$, i.e. the matrix is convergent. Hence, the Neumann series $\sum_{k=0}^\infty (\tau^2 A^\top A)^k$ converges to $(I-\tau^2 A^\top A)^{-1}$. Substituting in \eqref{eq:l2} yields $[L^2]_{11} = \tau \sum_{k=0}^\infty (\tau^2 A^\top A)^k\succeq 0$ and
$\|[L^2]_{11}\|\le \tau \sum_{k=0}^\infty (\tau^2\|A\|^2)^k= \frac{\tau}{1-\tau^2\|A\|^2},
$
where we used the fact that the geometric series converges since $\tau^2 \|A\|^2 <1$. Therefore $K\ge \alpha -  \frac{\tau}{1-\tau^2\|A\|^2} L_F^2$. By condition \eqref{eq:condition_tau_APA} we get 
$\alpha \tau^2 \|A\|^2+\tau L_F^2 <\alpha$ and thus
$
K\ge \frac{\alpha - \alpha \tau^2 \|A\|^2 - \tau L_F^2}{1-\tau^2 \|A\|^2 } >0.
$
\hfill\smallblacksquare

\begin{lemma}
\label{lem:min_eigenval}
For all $\N\in\mb{N}$, it holds 
\begin{equation}
\minn{y \in [0,1]^\N} \lambda_\textup{min} \left( y \ones[\N]^\top + \ones[\N] y^\top \right) \ge -\frac{\N}{4}.
\label{eq:min_eigenval_statement}
\end{equation}

\end{lemma}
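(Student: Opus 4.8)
The plan is to first derive a closed-form expression for the minimum eigenvalue of the rank-two matrix $M(y) := y\ones[\N]^\top + \ones[\N] y^\top$ as a function of $y$, and then to minimize that expression over the box $[0,1]^\N$ by an elementary scalar argument.

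First I would observe that $M(y)$ is symmetric with range contained in $\textup{span}\{y,\ones[\N]\}$, so its spectrum consists only of the eigenvalues carried by eigenvectors in this (at most) two-dimensional subspace, together with zeros. Writing a candidate eigenvector as $\alpha y + \beta \ones[\N]$ and setting $a := \|y\|^2$, $c := y^\top\ones[\N]$ and noting $\ones[\N]^\top\ones[\N] = \N$, the eigenvalue equation $M(y)(\alpha y + \beta\ones[\N]) = \mu(\alpha y + \beta\ones[\N])$ reduces, by matching the coefficients of $y$ and $\ones[\N]$, to the $2\times2$ system with matrix $\left[\begin{smallmatrix} c & \N \\ a & c\end{smallmatrix}\right]$, whose characteristic equation $(c-\mu)^2 - \N a = 0$ gives $\mu_\pm = y^\top\ones[\N] \pm \|y\|\sqrt{\N}$. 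By Cauchy--Schwarz, $\mu_- = y^\top\ones[\N] - \|y\|\sqrt{\N} \le 0$, while every remaining eigenvalue is either $0$ or $\mu_+ \ge 0$; hence $\lambda_\textup{min}(M(y)) = \min\{0,\mu_-\} = y^\top\ones[\N] - \|y\|\sqrt{\N}$ in all cases (including when $y$ is parallel to $\ones[\N]$, where $\mu_-$ merely coincides with a zero eigenvalue).

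The claim then reduces to the scalar inequality $\sqrt{\N}\,\|y\| - \ones[\N]^\top y \le \N/4$ for all $y\in[0,1]^\N$. To prove it I would set $s := \ones[\N]^\top y = \sum_i y_i$ and use that $y_i\in[0,1]$ forces $y_i^2\le y_i$ componentwise, whence $\|y\|^2 \le s$ and therefore $\sqrt{\N}\,\|y\| \le \sqrt{\N s}$. It then suffices to bound $\sqrt{\N s}-s$; the concave scalar map $s\mapsto \sqrt{\N s}-s$ attains its maximum over $s\ge 0$ at $s=\N/4$, with value $\sqrt{\N\cdot \N/4} - \N/4 = \N/4$, which yields the desired bound. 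The bound is in fact tight: any $y\in\{0,1\}^\N$ with exactly $\N/4$ unit entries satisfies $s=\|y\|^2=\N/4$ and attains $\lambda_\textup{min}(M(y)) = -\N/4$.

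The only genuinely nontrivial step is the eigenvalue computation of the second paragraph; once the closed form $\lambda_\textup{min}(M(y)) = y^\top\ones[\N] - \|y\|\sqrt{\N}$ is available, the rest is a one-variable optimization. I would emphasize that the sharpness here is essential: a cruder Rayleigh-quotient estimate such as $w^\top M(y) w = 2(y^\top w)(\ones[\N]^\top w) \ge -2\|y\|\sqrt{\N}$ discards the $y^\top\ones[\N]$ term and produces a bound weaker than $-\N/4$, so retaining that term exactly via the rank-two analysis is precisely what makes the argument succeed.
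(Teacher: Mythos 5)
Your proof is correct, and it takes a genuinely different route from the paper's. The paper works with the Rayleigh-quotient formulation $\min 2(v^\top y)(\ones[\N]^\top v)$ over $y\in[0,1]^\N$ and unit $v$: it first argues that an optimal $y^\star$ sits at a vertex of the box ($y^\star_i=0$ if $v^\star_i>0$, $y^\star_i=1$ if $v^\star_i<0$, with a case split on the sign of $\ones[\N]^\top v^\star$), then computes the spectrum of the resulting two-block matrix indexed by the number $p$ of unit entries, finding the minimal eigenvalue $p-\sqrt{\N p}$, and finally minimizes this expression over real $p$ to obtain $-\N/4$. You instead extract the closed form $\lambda_\textup{min}\left(y\ones[\N]^\top+\ones[\N]y^\top\right)=\ones[\N]^\top y-\sqrt{\N}\,\|y\|$ directly from the rank-two structure, and then let the box constraint enter through the single inequality $\|y\|^2\le\ones[\N]^\top y$, reducing the problem to maximizing the concave scalar function $s\mapsto\sqrt{\N s}-s$. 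Your route is shorter and strictly more informative: the paper's $p-\sqrt{\N p}$ is exactly your formula evaluated at vertices ($\ones[\N]^\top y=p$, $\|y\|=\sqrt{p}$), so your closed form subsumes the paper's vertex computation and dispenses with both the bang-bang argument and the sign case analysis, while the paper's argument has the merit of using nothing beyond the variational characterization of $\lambda_\textup{min}$ and an explicit eigenvector guess. The two concluding scalar optimizations coincide under the identification $s\leftrightarrow p$, and both proofs note tightness when $\N$ is a multiple of $4$. One pedantic caveat: your claimed equality $\lambda_\textup{min}=\min\{0,\mu_-\}$ fails in the degenerate case $\N=1$, where the matrix is the scalar $2y_1$ and its only eigenvalue is $\mu_+$; but since the spectrum is always contained in $\{0,\mu_-,\mu_+\}$, the inequality $\lambda_\textup{min}\ge\min\{0,\mu_-\}\ge-\N/4$ --- which is all you actually use --- survives, and the paper likewise dismisses $\N=1$ as trivial.
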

\begin{proof}
The statement is trivially true for $\N = 1$.
For $\N > 1$, problem~\eqref{eq:min_eigenval_statement} is equivalent to
\begin{equation}
\minn{\substack{y \in [0,1]^\N \\ \| v \| = 1}}{v^\top \!\! \left( y \ones[\N]^\top + \ones[\N] y^\top \right) v} = \!\!\!\!
\minn{\substack{y \in [0,1]^\N \\ \| v \| = 1}}{2\left( v^\top y \right) \left( \ones[\N]^\top v \right)}.
\label{eq:min_eigenval_expanded}
\end{equation}
We show that~\eqref{eq:min_eigenval_expanded} is negative
by denoting $\hat y = e_1, \; \hat v = 0.6 e_1 - 0.8 e_2$ and observing that
$2 \left( \hat v^\top \hat y \right) \left( \ones[\N]^\top \hat v \right) = - 0.24$.
Let us consider a pair $y^\star, v^\star$ minimizing~\eqref{eq:min_eigenval_expanded} and note that
$\ones[\N]^\top v^\star \neq 0$, because~\eqref{eq:min_eigenval_expanded} is negative.
We are left with two cases, $\ones[\N]^\top v^\star > 0$ and $\ones[\N]^\top v^\star < 0$.
Let us start analyzing $\ones[\N]^\top v^\star > 0$.
To minimize $2\left( v^\top y \right) \left( \ones[\N]^\top v \right)$, it must be
\begin{equation}
y^\star_i = \begin{cases} 0 \quad &\text{if} \; v^\star_i > 0  \\ 
1 \quad & \text{if} \; v^\star_i < 0,
\end{cases} \;\; \text{for all} \; i \in \{1,\dots,\N\}.
\label{eq:min_eigenval_y_01}
\end{equation}
Without loss of generality, we can assume $y_i^\star \in \{0,1\}$ if $v^\star_i = 0$.
Hence we conclude that $y^\star \in \{0,1\}^\N$ and~\eqref{eq:min_eigenval_statement} reduces to
\begin{equation}
\minn{p \in \{0,\dots,\N\}}{\lambda_\textup{min} \left[
\begin{array}{c|c}
2 (\ones[p] \ones[p]^\top) & \ones[p] \ones[(\N-p)]^\top \\[0.1cm]
\hline  \\[-0.3cm]
\ones[(\N-p)] \ones[p]^\top & \zeros[(\N-p)] \zeros[(\N-p)]^\top
\end{array}
\right]},
\label{eq:min_eigenval_matrix_01}
\end{equation}
where without loss of generality we assumed the first $p$ components of $y^\star$ to be $1$ and the remaining to be $0$.
Note that the matrix in~\eqref{eq:min_eigenval_matrix_01} features $p$ identical rows followed by $\N-p$ other identical rows.
Hence any of its eigenvectors must have $p$ identical components followed by $\N-p$ other identical components.
With this observation and the definition of eigenvalue, it is easy to show that
the matrix in~\eqref{eq:min_eigenval_matrix_01} has only two distinct eigenvalues, the minimum of the two being $p-\sqrt{\N p}$.
The function $p-\sqrt{\N p}$ is minimized over the reals for $p=\N/4$ with corresponding minimum $\lambda_\textup{min} = -\N/4$,
as it can be seen by using the change of variables $p = q^2$ and minimizing the quadratic function $q^2 - \sqrt{\N} q$.
Since $p \in \{0,\dots,\N\}$ in~\eqref{eq:min_eigenval_matrix_01},
the value $-\N/4$ is a lower bound for the minimum eigenvalue, and it is attained only if $\N$ is a multiple of $4$.
We conclude by noting that the derivation for the case $\ones[\N]^\top v^\star < 0$ is identical to the derivation for the case $\ones[\N]^\top v^\star > 0$ just shown, upon switching $0$ and $1$ in~\eqref{eq:min_eigenval_y_01}.
\end{proof}
\subsection*{Proof of Proposition \ref{prop:uniqueness_for_PEVs}}
\noindent The constraints in \eqref{eq:vehicle_constraint}, \eqref{eq:coupling_global_PEVs} can be expressed as $\Gamma x \le \gamma$ with
\begin{equation}
\Gamma = \begin{sma}
 I_{\N\cdot n} \\
 -I_{\N\cdot n}  \\
  -I_\N \otimes \ones[n]^\top \\
  \ones[\N]^\top \otimes I_n
\end{sma}
,\quad \gamma=\begin{sma}
\tilde x  \\
 0\\
 -\theta \\
 \N K
\end{sma}
\,,
\end{equation}
where $\theta = [\theta^1,\dots,\theta^\N]^\top$, and $\tilde x = [[\tilde x^i_t]_{t=1}^n]_{i=1}^\N$.
Let us partition the constraint matrix $\Gamma$ into its individual part $\Gamma_1$ and coupling part $\Gamma_2$
\begin{equation}\label{eq:H_PEVs_partitioned}
\Gamma = \begin{sma} \Gamma_1 \\ \Gamma_2 \end{sma},\;
 \Gamma_1 = \begin{sma}
 I_{\N\cdot n} \\
 -I_{\N\cdot n}  \\
  -I_\N \otimes \ones[n]^\top \\
\end{sma},\;
\Gamma_2 = \begin{sma}
  \ones[\N]^\top \otimes I_n
\end{sma}
\end{equation}
and $\gamma = [\gamma_1^\top, \gamma_2^\top]^\top$ accordingly.
The KKT conditions for VI$(\mc{Q},F\NE)$ at the primal solution $\VNE{x}$ are~\cite[Proposition 1.3.4]{facchinei2007finite}
\begin{subequations}
\label{eq:KKT_PEVs}
\begin{align}
&F\NE(\VNE{x}) + \Gamma_1^\top \mu + \Gamma_2^\top \lambda = 0, \label{eq:KKT_PEVs_stat} \\
&0 \le \mu \perp \gamma_1 - \Gamma_1 \VNE{x} \ge 0, \label{eq:KKT_PEVs_compl_individ} \\
&0 \le \lambda \perp \gamma_2 - \Gamma_2 \VNE{x} \ge 0 .\label{eq:KKT_PEVs_stat_compl_coupl}
\end{align}
\end{subequations}
Define $\tilde \mu$ and $\tilde \lambda$ as the dual variables corresponding to the active constraints (the other dual variables must be zero due to~\eqref{eq:KKT_PEVs_compl_individ} and~\eqref{eq:KKT_PEVs_stat_compl_coupl}).
The KKT system~\eqref{eq:KKT_PEVs}  in $\tilde \mu, \tilde \lambda$ only reads
\begin{equation}
\begin{aligned}
&\tilde \Gamma_1^\top \tilde \mu + \tilde \Gamma_2^\top \tilde \lambda = -F\NE(\VNE{x}) ,\\
&\tilde \mu, \tilde \lambda \ge 0 \,,
\end{aligned}
\label{eq:KKT_PEVs_reduced}
\end{equation}
where $\tilde \Gamma_1, \tilde \Gamma_2$ contain the subset of rows of $\Gamma_1, \Gamma_2$ corresponding to active constraints.
To conclude the proof we need to show that \eqref{eq:KKT_PEVs_reduced} has a unique solution $\tilde \lambda$.
To this end we apply the subsequent Lemma~\ref{lem:sufficient_for_uniqueness_subvector}.
To verify its assumption, we note that  its negation is equivalent, given the expressions of $\tilde \Gamma_1, \tilde \Gamma_2$ in~\eqref{eq:H_PEVs_partitioned}, to
the existence of $R' \subseteq R^\text{tight}$ such that for each vehicle $i$ it holds $\bar x_{\textup{N},t}^i  \in \{0,\tilde x^i_r\}$ for all $t \in R'$ or $\bar x_{\textup{N},t}^i  \in \{0,\tilde x^i_t\}$ for  $t\in\{1,\dots,n \}\setminus R'$ and  such $R'$ cannot exist by assumption. \hfill \smallblacksquare

\begin{lemma}
\label{lem:sufficient_for_uniqueness_subvector}
Consider $A_1 \in \R^{m \times n_1}$, $A_2 \in \R^{m \times n_2}$, $b \in \R^m$.
If the implication $A_1 x_1 + A_2 x_2 = 0 \; \Rightarrow \; x_1 = 0$ holds, then
the linear system of equations $A_1 x_1 + A_2 x_2 = b$
has at most one solution in $x_1$.
\end{lemma}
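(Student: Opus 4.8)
The plan is to argue by the elementary difference-of-solutions computation, since this is precisely the statement that the inhomogeneous solution set is a coset of the kernel and the hypothesis says the kernel projects trivially onto the first block. First I would suppose that the system $A_1 x_1 + A_2 x_2 = b$ admits two solutions, say $(x_1, x_2)$ and $(x_1', x_2')$, so that $A_1 x_1 + A_2 x_2 = b$ and $A_1 x_1' + A_2 x_2' = b$. The target is only the first-block uniqueness $x_1 = x_1'$, with no claim made about the relation between $x_2$ and $x_2'$.

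Next I would subtract the two equations to obtain $A_1 (x_1 - x_1') + A_2 (x_2 - x_2') = b - b = 0$. The key observation is that the assumed implication $A_1 z_1 + A_2 z_2 = 0 \Rightarrow z_1 = 0$ is universally quantified over all pairs $(z_1, z_2)$, hence in particular it applies to the specific choice $z_1 = x_1 - x_1'$ and $z_2 = x_2 - x_2'$. Invoking it yields $x_1 - x_1' = 0$, that is $x_1 = x_1'$, which is exactly the desired conclusion that any two solutions share the same first component.

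I do not expect any genuine obstacle here; the argument is a one-line linear-algebra fact. The only point deserving a moment of care is conceptual rather than technical: one must apply the \emph{homogeneous} hypothesis to the \emph{difference} of two inhomogeneous solutions, not to either solution individually. Since the right-hand sides cancel upon subtraction, the difference lands in the solution set of the homogeneous system, which is precisely where the hypothesis has bite.
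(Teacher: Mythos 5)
Your proof is correct and is essentially identical to the paper's: both take two solutions, subtract the equations so the difference satisfies the homogeneous system $A_1(\hat{x}_1-\tilde{x}_1)+A_2(\hat{x}_2-\tilde{x}_2)=0$, and then invoke the hypothesis to conclude that the first blocks coincide. No gaps; nothing to add.
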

\indent \quad \textit{Proof:}
Assume $A \tilde x = b$ and $A \hat x = b$, then $A_1 \tilde{x}_1 + A_2 \tilde{x}_2 = b$ and $A_1 \hat{x}_1 + A_2 \hat{x}_2 = b$ imply 
$A_1(\hat{x}_1 - \tilde{x}_1) + A_2 (\hat{x}_2 - \tilde{x}_2) = 0$, which by assumption implies $\hat{x}_1 = \tilde{x}_1$. \hfill \smallblacksquare

\vspace*{-3mm}
\bibliographystyle{IEEEtran}
\bibliography{References}

\begin{IEEEbiography}[{\includegraphics[width=1.25in,height=1.25in,clip,keepaspectratio]{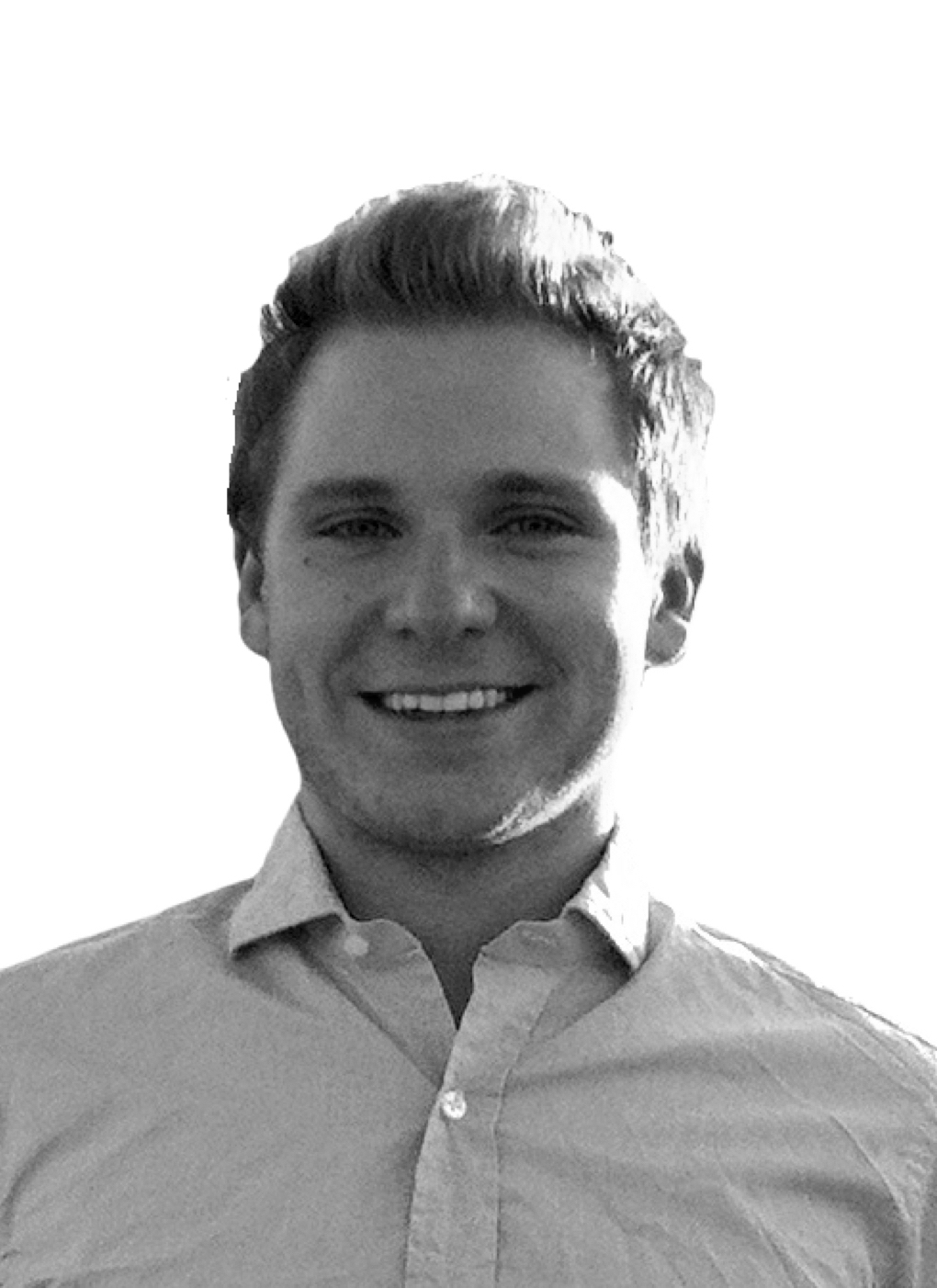}}]{Dario Paccagnan} is a doctoral student at the Automatic Control Laboratory, ETH Z\"{u}rich, Switzerland, since October 2014. 
He received his B.Sc. and M.Sc. in Aerospace Engineering from the University of Padova, Italy, in 2011 and 2014. In the same year he received the M.Sc. in Mathematical Modelling from the Technical University of Denmark, all with Honours. His Master's Thesis was prepared when visiting Imperial College of London, UK, in 2014. From March to August 2017 he has been a visiting scholar at the University of California, Santa Barbara. Dario's research interests are at the interface between distributed control and game theory. Applications include multiagent systems, smart cities and traffic networks.
\end{IEEEbiography}
\vspace*{-1cm}

\begin{IEEEbiography}[{\includegraphics[width=1.25in,height=1.25in,clip,keepaspectratio]{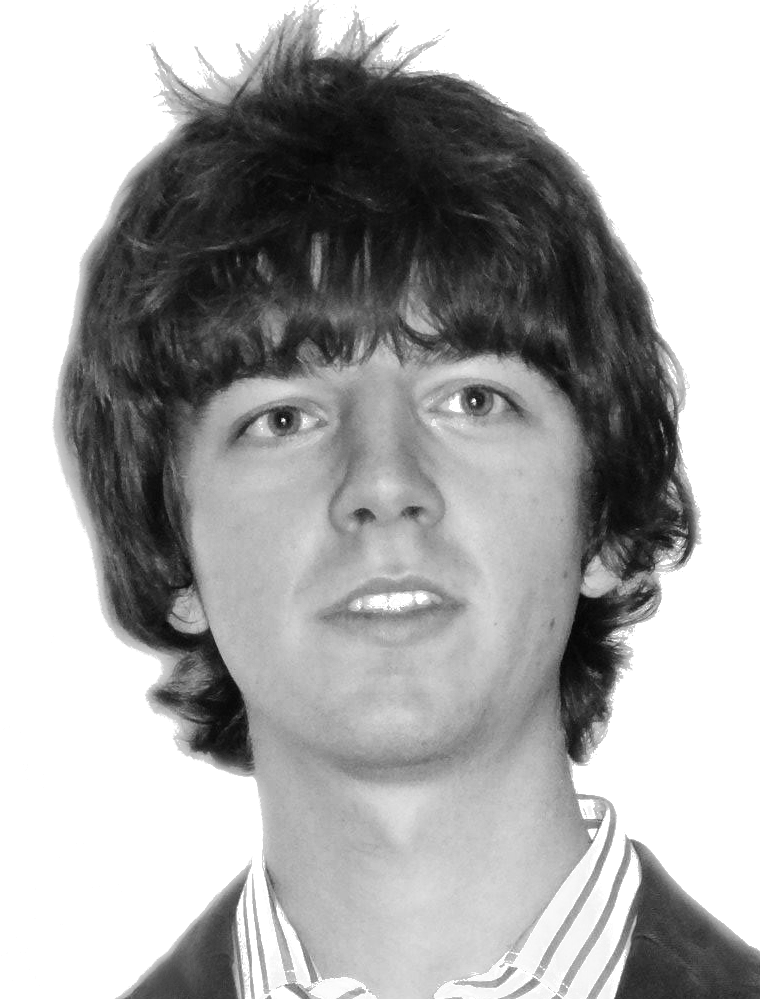}}]{Basilio Gentile} completed his PhD at the Automatic Control Laboratory at ETH Z\"{u}rich in 2018. He received his Bachelor's degree in Information Engineering and Master's degree in Automation Engineering from the University of Padova, as well as a Master's degree in Mathematical Modeling and Computation from the Technical University of Denmark. In 2013 he spent seven months in the Motion Lab at the University of California Santa Barbara to work at his Master's Thesis. His research focuses on aggregative games and network games with applications to traffic networks and to smart charging of electric vehicles.
\end{IEEEbiography}
\vspace*{-1cm}

\begin{IEEEbiography}[{\includegraphics[width=1.25in,height=1.25in,clip,keepaspectratio]{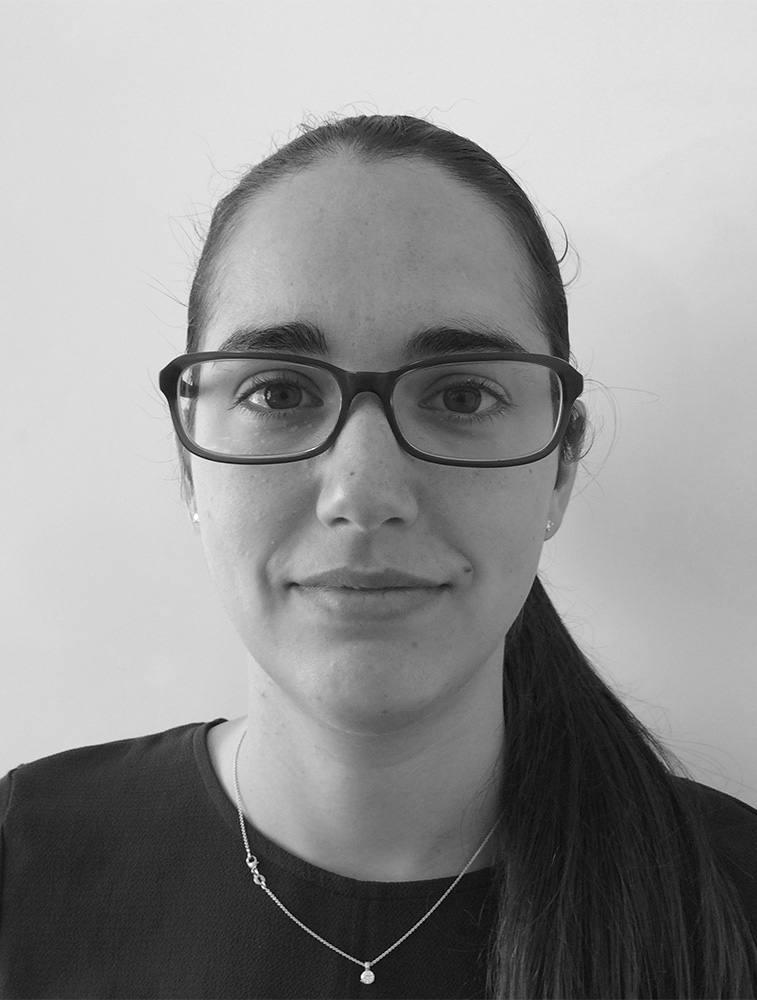}}]{Francesca Parise}
was born in Verona, Italy, in 1988. She received the B.Sc. and M.Sc. degrees (cum Laude) in Information and Automation Engineering from the University of Padova, Italy, in 2010 and 2012, respectively. She conducted her Master's thesis research at Imperial College London, UK, in 2012.  She graduated from the Galilean School of Excellence, University of Padova, Italy, in 2013. She defended her PhD at the Automatic Control Laboratory, ETH Z\"urich, Switzerland in 2016 and she is currently a Postdoctoral researcher at the Laboratory for Information and Decision Systems, M.I.T., USA.
Her research focuses on identification, analysis and control of complex systems, with application to distributed multi-agent networks and systems biology.
\end{IEEEbiography}
\vspace*{-1cm}

\begin{IEEEbiography}[{\includegraphics[width=1.25in,height=1.25in,clip,keepaspectratio]{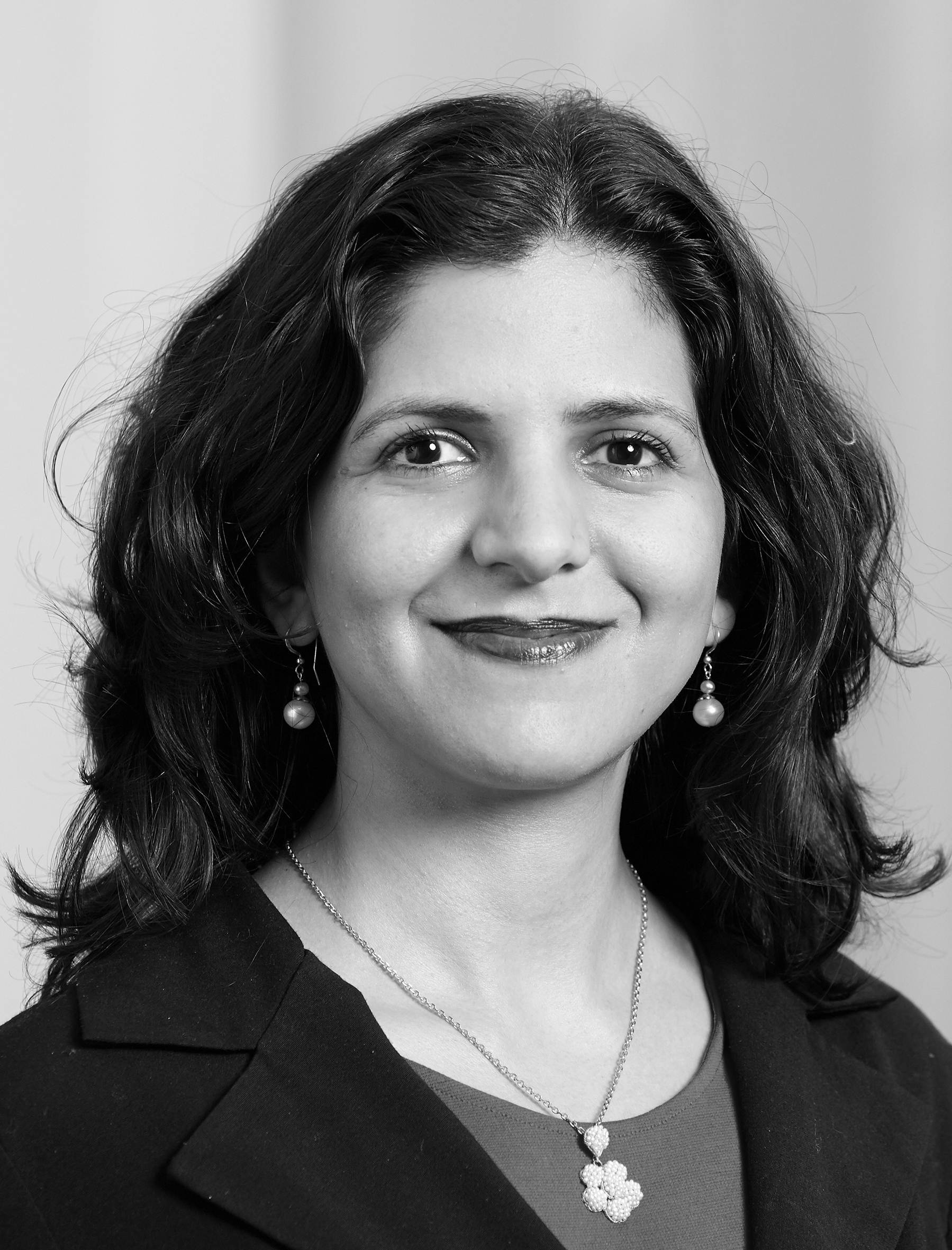}}]{Maryam Kamgarpour}
obtained her Master's and Ph.D. in Control Systems
at the University of California, Berkeley (2007, 2011) and her
Bachelor of Applied Sciences from University of Waterloo, Canada
(2005). Her research is on safety verification and optimal control of
large-scale uncertain dynamical systems with applications in air
traffic and power grid systems. She is the recipient of NASA High
Potential Individual Award, NASA Excellence in Publication Award
(2010) and the European Union (ERC) Starting Grant 2015.
\end{IEEEbiography}
\vspace*{-1cm}

\begin{IEEEbiography}[{\includegraphics[width=1.25in,height=1.25in,clip,keepaspectratio]{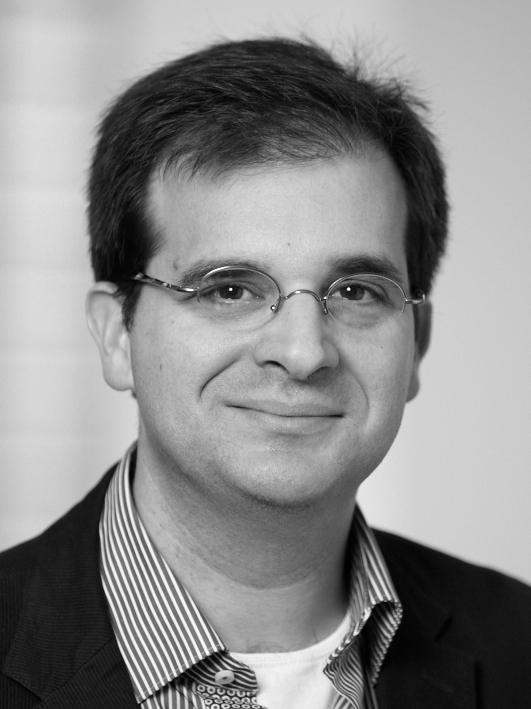}}]{John Lygeros}
completed a B.Eng. degree in electrical engineering in  1990 and an M.Sc. degree in Systems Control in 1991, both at Imperial College of Science Technology and Medicine, London, UK. In 1996 he obtained a Ph.D. degree from the Electrical Engineering and Computer Sciences Department, University of California, Berkeley. During the period 1996-2000 he held a series of research appointments. Between 2000 and 2003 he was a University Lecturer at the Department of Engineering, University of Cambridge, UK. Between 2003 and 2006 he was an Assistant Professor at the Department of Electrical and Computer Engineering, University of Patras, Greece. In July 2006 he joined the Automatic Control Laboratory at ETH Z\"urich, first as an Associate Professor, and since January 2010 as a Full Professor. Since 2009 he is serving as the Head of the Automatic Control Laboratory and since 2015 as the Head of the Department of Information Technology and Electrical Engineering. His research interests include modelling, analysis, and control of hierarchical, hybrid, and stochastic systems, with applications to biochemical networks, automated highway systems, air traffic management, power grids and camera networks. John Lygeros is a Fellow of the IEEE, and a member of the IET and the Technical Chamber of Greece.
\end{IEEEbiography}

\end{document}